\pdfoutput=1

%
%
%
%
%
%

\documentclass[a4paper]{article}
         
\usepackage[english]{babel}
\usepackage[fleqn]{amsmath}
\usepackage{tikz}
\usepackage{amssymb}
\usepackage[left=4cm,a4paper]{geometry}
\usepackage{lmodern,microtype}
\usepackage{hyperref}

\renewcommand{\i}{\text{i}}

\newcommand{\tr}{\mathop{\text{tr}}}
\newcommand{\q}{\mathsf{q}}
\newcommand{\h}{h}

\newtheorem{theorem}{Theorem}[section]
\newtheorem{lemma}[theorem]{Lemma}
\newtheorem{proposition}[theorem]{Proposition}

\newtheorem{conjecture}[theorem]{Conjecture}

\newenvironment{proof}[1][Proof:]{\begin{trivlist}
\item[\hskip \labelsep {\bfseries #1}]}{\end{trivlist}}

\newcommand{\qed}{\nobreak \ifvmode \relax \else
      \ifdim\lastskip<1.5em \hskip-\lastskip
      \hskip1.5em plus0em minus0.5em \fi \nobreak
      $\square$\fi}

\newcommand{\sn}{\mathop{\text{sn}}}
\newcommand{\cn}{\mathop{\text{cn}}}
\newcommand{\dn}{\mathop{\text{dn}}}

\title{Spin chains with dynamical lattice supersymmetry}
\date{}
\author{\textsc{Christian Hagendorf}\footnote{Section de Math\'ematiques, Universit\'e de Gen\`eve, 2-4 rue du Li\`evre, CP 64, 1211 Gen\`eve 4, Switzerland, \href{mailto:christian.hagendorf@unige.ch}{christian.hagendorf@unige.ch}}
}

\begin{document}

\maketitle

\begin{abstract}
Spin chains with exact supersymmetry on finite one-dimensional lattices are
considered. The supercharges are
nilpotent operators on the lattice of dynamical nature: they change the number
of sites. A local criterion for the nilpotency on periodic lattices is
formulated. Any of its solutions leads to a supersymmetric spin chain. It is
shown that a class of special solutions at arbitrary spin gives the lattice
equivalents of the $\mathcal N=(2,2)$ superconformal minimal models. The case of
spin one is investigated in detail: in particular, it is shown that the
Fateev-Zamolodchikov chain and its off-critical extension possess a lattice
supersymmetry for all
its coupling constants. Its supersymmetry singlets are thoroughly analysed, and
a relation between their components and the weighted enumeration of alternating
sign matrices is conjectured.
\end{abstract}

\newpage
\tableofcontents

\section{Introduction}

The aim of the present paper is to study models of two-dimensional statistical mechanics and related spin chains
with exact space-time supersymmetry on the lattice. It is a systematic extension of the works on
the XXZ and XYZ models investigated in \cite{yang:04,hagendorf:12}.

One of our main motivations is that many lattice models\footnote{We focus on two-dimensional lattice models which originate
from statistical mechanics, and their one-dimensional quantum mechanical pendants (spin
chains). Our discussion therefore excludes the wide field of lattice gauge theories with
supersymmetry, pioneered in \cite{dondi:77} (see e.g. \cite{catterall:09} for a
comprehensive overview).} are known to have continuum limits which possess supersymmetry. A rather
prominent example is the tricritcal Ising model: its scaling limit is described by a
$\mathcal N=1$ superconformal field theory \cite{friedan:85,qiu:86}. $\mathcal N=1$ and
$\mathcal N=2$ were also observed the Ashkin-Teller model at special values of the
coupling constants \cite{yang:87,yang:87_2,baake:87,baake:87_2}. Trigonometric vertex
models with $\text{U}_q(\text{sl}_2)$-symmetry were studied in \cite{difrancesco:88,saleur:92,berkovich:94}.
The general outcome of these studies is that the supersymmetry
occurs at some spin-anisotropy commensurable points -- an observation which we will encounter
in the present work, too. Models related to coset theories and
their off-critical extensions are studied in \cite{maassarani:93,nemeschansky:94}, and in
the review paper \cite{saleur:93_2}, and height models in \cite{date:87}.

Even though this variety of examples has been known for a long time, it appears that an
explicit lattice construction of the supercharges on the lattice has been accomplished in only
rather few cases. After an early attempt in \cite{nicolai:76}, the perhaps first successful construction were the
so-called $\mathcal M_\ell$ models
introduced in \cite{fendley:03,fendley:03_2} (an excellent and comprehensive introduction to the subject
can be found in \cite{huijse:10}). They describe spinless fermions on a one-dimensional
lattice with the exclusion rule that at most $\ell$ consecutive sites may be occupied.
The models possess an explicit supersymmetry whose supercharges insert or take out fermions. In
particular, the $\mathcal M_1$ model (fermions with strong repulsion which forbids any two
adjacent sites to be simultaneously occupied) has been studied for various boundary
conditions \cite{beccaria:05,fendley:05,huijse:11_2}. Off-critical deformations are
achieved through staggering of the coupling constants
\cite{fendley:10,fendley:10_1,huijse:11,beccaria:12}, and their ground states appear to be related to
classically integrable hierarchies. Moreover, higher-dimensional generalisations of these
models were considered in
\cite{fendley:05_2,huijse:08,huijse:08_2,huijse:10_1,huijse:11_3}, and display various
exciting features such as extensive ground-state entropy, superfrustration and
connections to rhombus tilings \cite{jonsson:06,jonsson:10}.

The topic of the present work is explicit lattice supersymmetry for spin chains. Yet, the
fermion models will serve as a great source of inspiration. The prime example for a
connection between the two worlds is the spin$-1/2$ XXZ chain with anisotropy
$\Delta=-1/2$ whose continuum limit corresponds to a superconformal field theory with
central charge $c=1$ (a free boson, compactified at a special radius). Using a mapping to
the $\mathcal M_1$ model, the lattice supersymmetry of the spin chain was discovered by Yang and
Fendley \cite{yang:04}. The supercharges have an unusual feature:
they act non-locally and change the number of sites. Therefore, they are
\textit{dynamical}. It is worthwhile stressing that the concept of dynamical symmetries
emerged also in the spin-chain description of scattering amplitudes and $\mathcal N=4$
super Yang-Mills theory \cite{beisert:04,beisert:07,beisert:08}. The XXZ results were
subsequently extended to the spin$-1/2$ XYZ chain along an off-critical extension of
the $\Delta=-1/2$ point: the lattice supersymmetry was constructed in
\cite{hagendorf:12}, and shown to be manifest within the Bethe-ansatz for the
eight-vertex model. It is a quite remarkable feature of the spin chains that they possess
\textit{two} sets of supercharges. Their algebraic relations yield a lattice counterpart
of the $\mathcal N=(2,2)$ supersymmetry algebra. Thus, the findings for the XYZ chain
constitute a lattice equivalent of the non-local supersymmetry in the sine-Gordon field
theory at the supersymmetric point \cite{bernard:90}. The present article attempts to
generalise the concept of dynamical lattice supersymmetry to spin chains with higher
spin. This requires some general tools for the construction of nilpotent operators, and
thus supercharges, on the lattice, and allows us to find spin chains with two copies
of the supersymmetry algebra at arbitrary spin.

A very characteristic feature of supersymmetric theories is the special properties of
their ground states. Indeed, only a short glimpse at the existing literature on both the
XXZ chain at $\Delta=-1/2$ and the $\mathcal M_1$ model show  that their ground-state
components display various surprising relations with combinatorial problems such as the enumeration
of alternating sign matrices, plane partitions, fully-packed loops
\cite{razumov:00,razumov:01,degier:02,beccaria:05}. In the XXZ-case and for the related
$O(1)$ loop model, these observations have pushed the development of rather powerful
techniques, such as the application of the quantum Knizhnik-Zamolodchikov equation to
combinatorial problems \cite{difrancesco:05_3,difrancesco:06,difrancesco:07_3,razumov:07,kasatani:07}, and
more combinatorial methods using dihedral symmetries \cite{cantini:10,cantini:12}. This allowed to prove the early
conjectures. Off-critical extensions to the XYZ model are known, too
\cite{bazhanov:06,mangazeev:10,zinnjustin:12}. Yet, in many cases a combinatorial
interpretation is still missing (see however \cite{rosengren:09,rosengren:09_2}). Despite
the omnipresence of lattice supersymmetry in these models its relation to
combinatorics remains to be understood. The machinery which is commonly used to
analyse the ground states belongs rather to the world of quantum integrability. Yet,
once established in a lattice model supersymmetry may be used as a heuristic tool in
order to identify new interesting (and perhaps even combinatorial) ground states. In this
work we present the (twisted) Fateev-Zamolodchikov spin chain as an example where this
strategy is most fruitful. In this case, we demonstrate the existence of a lattice supersymmetry which
does not present any obvious relation to the fermion models.
Guided by this supersymmetry we point out various
relations of its ground states and the weighted enumeration of alternating sign matrices
\cite{kuperberg:02}.

The layout of this paper is the following. We start with a general considerations in
section \ref{sec:latsusy}, where we review the supersymmetry algebra, and outline the
strategy to construct representations which are relevant for
spin chains with periodic, twisted, and open boundary conditions. Section
\ref{sec:trigonometric} discusses a special class of models for arbitrary spin, the so-called
trigonometric models which are related
to vertex models built from the six-vertex model through the fusion procedure.
We proceed with a detailed discussion of the spin-one case in section \ref{sec:spinone},
and illustrate that the basic requirement for a supercharge to be nilpotent may lead to
different families of inequivalent models. In particular, we show that the
Fateev-Zamolodchikov spin chain and its off-critical deformation admit a lattice
supersymmetry \textit{for arbitrary coupling constants}. We discuss the zero-energy
states of this model, and point out relations to the weighted enumeration of
alternating sign matrices. Next, we present two other examples of supersymmetric spin$-1$
chains: on the one hand a spin chain which is closely related to the supersymmetric $t-J$
model, for which we obtain an off-critical supersymmetry-preserving deformation, and on
the other hand the so-called ``mod$-3$ chain'' which interpolates between a spin$-1/2$
and a spin$-1$ model. In section \ref{sec:conclusion} we present our conclusions as well
as a variety of open problems and directions for further investigations. Some technical
details are relegated to an appendix.

\bigskip

Along the main text, we present several results based upon numerical analysis of small systems, for example exact
diagonalisation. Many of them hint at structures which should be
present for arbitrary system sizes, and therefore we formulate them as conjectures.

\section{Lattice supersymmetry}
\label{sec:latsusy} In this section we describe the concept of dynamical supersymmetry
for spin chains. We start with a brief reminder of the well-known $\mathcal
N=2$ supersymmetry algebra in section \ref{sec:n2susy}. In section \ref{sec:scsc} we
adopt this algebra to the setting of spin chains with periodic and twisted boundary
conditions, and discuss the general structure of their Hamiltonians as well as basic
symmetries. Moreover, we discuss the case when two mutually anticommuting copies of the supersymmetry algebra are present. Open boundaries are briefly treated in section
\ref{sec:obc}. Eventually, we review the example of the XYZ chain along its combinatorial
line in section \ref{sec:example}.

\subsection{Supersymmetry algebra}
\label{sec:n2susy}
The $\mathcal N=2$ supersymmetry algebra is generated by two supercharges $Q,Q^\dagger$, the fermion
number $F$ and a Hamiltonian $H$. We denote by $\mathcal H$ the Hilbert space the
algebra acts on. The fermion number induces a grading so that
\begin{equation}
  \mathcal H = \bigoplus_{f=0}^\infty \mathcal H_f,
  \label{eqn:grading}
\end{equation}
where $\mathcal H_f$ is the subspace of all states with $f$ fermions. The supercharges
are nilpotent mappings $Q:\mathcal H_f \to \mathcal H_{f+1}$, $Q^\dagger:\mathcal H_f \to
\mathcal H_{f-1}$ which add or remove a fermion. We thus have the relations
\begin{equation*}
  Q^2 = (Q^\dagger)^2=0, \qquad [F,Q]=Q,\, [F,Q^\dagger]=-Q^\dagger.
\end{equation*}
Finally, the Hamiltonian $H:\mathcal H_f \to \mathcal H_f$ is given as an anticommutator
of the supercharges
\begin{equation*}
  H = Q^\dagger Q+ QQ^\dagger.
\end{equation*}
It commutes with the supercharges $Q,Q^\dagger$ and the fermion number $F$.

The definition of the Hamiltonian implies that its spectrum is positive definite: any
solution to $H|\psi\rangle = E|\psi\rangle$ has $E\geq 0$. Each eigenvalue $E>0$ is
necessarily doubly degenerate. It is easy to see that the corresponding eigenstates
organise in pairs $(|\psi\rangle, Q|\psi\rangle)$ such that $Q^\dagger|\psi\rangle = 0$.
They are called superpartners, and their fermion numbers differ by one. Conversely,
zero-energy states must solve $Q|\psi\rangle =0$ and $Q^\dagger|\psi\rangle = 0$, and
therefore do not have superpartners. They are called \textit{supersymmetry
singlets}, and are known to be in a one-to-one correspondence with the elements of the
quotient space $\text{ker } Q/\text{im } Q$ \cite{witten:82}.

\subsection{Supercharges and spin chains}
\label{sec:scsc}

\paragraph{Hilbert space.} Our aim is to construct representations of the $\mathcal N=2$
supersymmetry algebra for spin chains. The crucial idea is to identify the number of
sites with the fermion number. Hence, the supercharges insert or remove a site. At each
such site lives a spin-$\ell/2$ with $\ell=1,2,3,\dots$. We thus have a local quantum
space $V \simeq \mathbb C^{\ell+1}$, and denote its canonical basis by $|m\rangle,\,
m=0,\dots, \ell$ (we use the bra-ket-notation). The basis vectors are orthonormal with
respect to the standard scalar product: $\langle m|n\rangle = \delta_{mn}$. The Hilbert
space of a chain of length $N$ is given by the $N$-fold tensor product $V^{\otimes N}$.
The canonical choice for its basis consists of taking simple tensor products of local
basis vectors. Hence, a basis vector is given by a sequence of integers
$m_1,m_2,\dots,m_N$. We use the common notation
\begin{equation*}
  |m_1,m_2,\dots,m_N\rangle = \bigotimes_{j=1}^N |m_j\rangle.
\end{equation*}
The spin is measured by an operator $s^3$: on the basis $\{|m\rangle\}$ it acts according
to $s^3|m\rangle = (m-\ell/2)|m\rangle$. Hence, $s^3$ is just the third member of the three
$\text{su}(2)$ generators at spin-$\ell/2$. If acting on the $j$-th factor of
$V^{\otimes N}$ we write $s_j^3$ (and do so for any local operator). The total spin of
any state in $V^{\otimes N}$ is measured by $S^3_N = \sum_{j=1}^N s_j^3$. Here, the index
$N$ indicates the number of sites, what is sometimes necessary as we deal with operators
which change the number of sites. Sometimes, we will also think of $M=S_N^3+\ell N/2$ as the
total number of particles as $|m\rangle$ can be interpreted as a state with $m$ particles at
a given sites. This will prove to be useful in section \ref{sec:trigonometric}.

\paragraph{Translation invariance, lattice supercharges.}
In order to implement the supersymmetry, we have to specify the spaces $\mathcal H_N$ in
\eqref{eqn:grading}. They are subspaces of $V^{\otimes N}$ whose precise form depends on the boundary
conditions. Most of this work is concerned with periodic or twisted boundary conditions
(we restrict our considerations to diagonal twists).
We denote by $\phi$ the twist angle, and introduce a shift operator $T_N(\phi) = T_N
e^{\i \phi s_N^3}$ where $T_N$ is the usual translation operator acting on basis states
for $N$ sites according to
\begin{equation*}
  T_N|m_1,m_2,\dots, m_{N-1},m_N\rangle = |m_N,m_1,m_2,\dots, m_{N-1}\rangle.
\end{equation*}
The $\mathcal H_N$ are related to the eigenspaces of the twisted shift operators. We will sometimes call them (pseudo)momentum sectors. In order to
single out the relevant sectors, we need to explain the construction of the supercharges. These are built from local operations $\q_j: V^{\otimes N} \to V^{\otimes(N+1)}$ which are
related by translations. They are defined as
\begin{equation*}
  \q_j = (-1)^{j-1} \left(1 \otimes \cdots \otimes 1 \otimes \underset{j}{\underbrace{\q}} \otimes 1 \otimes \cdots \otimes 1\right), \quad j = 1, \dots, N,
\end{equation*}
and therefore transform a spin at site $j$ to a pair at sites $j,j+1$. The matrix elements for
this transformation are encoded in the \textit{local supercharge} $\q: V \to V \otimes V$.
Moreover, the operation is weighted
by a site-dependent string $(-1)^{j-1}$ which will be crucial to make the supercharge
nilpotent. We find
\begin{equation*}
  \q_{j+1} = -T_{N+1}(\phi) \q_j T_{N}(\phi)^{-1}, \quad j=1,\dots, N-1.
\end{equation*}
On a system with $N+1$ sites, a pair has $N+1$ possible positions. This motivates the
introduction of another local operator $\q_0$ through extension of the preceding equation
to $j=0$
\begin{equation*}
  \q_0 = -T_{N+1}(\phi)^{-1}\q_1 T_N(\phi) = (-1)^N T_{N+1}(\phi)\q_N.
\end{equation*}
This operator acts always on the last site (irrespectively of the number of sites). We
see that there are two definitions for $\q_0$: \textit{(i)} shifting the action of $\q_1$
to the left or \textit{(ii)} shifting the action of $\q_N$ to the right. These
definitions have to be compatible what is true only if the sign $(-1)^N$ is present in
the second definition. Because of this requirement we find that the $\mathcal H_N$ are
the eigenspaces of $T_N(\phi)$ in $V^{\otimes N}$ with eigenvalues $t_N=(-1)^{N+1}$. For
a non-vanishing twist angle, this imposes a constraint on the magnetisation. Indeed, it
follows from $T_N(\phi)^N =e^{\i \phi S_N^3}$ that we need to restrict our
considerations to subsectors where
\begin{equation}
  \phi S_N^3 = 0 \mod 2\pi.
  \label{eqn:condmag}
\end{equation}
Often, a non-zero twist angle cannot be chosen arbitrarily. There are constraints coming from the local operator $\q$: its components, defined through
\begin{equation*}
  \q |m\rangle = \sum_{j,k=0}^\ell a_{m,jk} |jk\rangle,
\end{equation*}
have to solve the equation
\begin{equation}
  a_{m,jk}\left(e^{\i \phi(\ell/2 + m - j-k)}-1\right)=0.
  \label{eqn:condtwist}
\end{equation}
Once again, this is a consequence of the two definitions of $\q_0$ given above.
This equation holds always for $\phi=0$. Below, we will see that non-zero $\phi$'s are often
quantised: only very particular values are compatible with this condition.

With all these preliminary definitions we are ready to define the \textit{global supercharge} acting on $\mathcal H_N$ as the following sum of the local operators:
\begin{equation*}
  Q_N = \sqrt{\frac{N}{N+1}}\sum_{j=0}^N \q_j.
\end{equation*}
In all other (pseudo)momentum sectors we set $Q_N \equiv 0$. It is then straightforward
to see that $T_{N+1}(\phi)Q_N T_N(\phi)^{-1}=-Q_N$ on $\mathcal H_N$. We thus see that
$Q_N$ does not only insert a site but also increases the (pseudo)momentum by $\pi$. 
It is worthwhile stressing that the construction of the supercharges, the restriction to special quantum sectors
as well as the quantisation of twist angles fits well into the general
framework of non-local currents in lattice models as developed by Bernard and Felder \cite{bernard:91}.

\paragraph{Nilpotent operators.} Our aim is to choose the local operator
$\q:V\to V\otimes V$ so that
$Q_N$ is nilpotent in the sense that
\begin{equation}
  Q_{N+1}Q_N = 0.
  \label{eqn:nilpotency}
\end{equation}
We give two very simple criteria for this property to hold: \textit{(i)} the proposition
\ref{prop:redundant} rules out redundant parameters in $\q$ and leads to the concept of
gauge-equivalent local supercharges; \textit{(ii)} a criterion for the global supercharge
to be nilpotent is given in proposition \ref{prop:nilpot}. In both cases, the arguments
are based on simple telescopic cancellations.

\begin{proposition}
  Suppose that there exists a $|\chi\rangle \in V$ such that $\q|m\rangle = |m\rangle
  \otimes |\chi\rangle + |\chi\rangle \otimes |m\rangle$ for every $|m\rangle\in V$. Then
  the global supercharge $Q_N$ vanishes.
  \label{prop:redundant}
\end{proposition}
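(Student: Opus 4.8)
The plan is to show that $\sum_{j=0}^N\q_j$ annihilates every vector of $\mathcal H_N$; since $Q_N$ is a nonzero scalar multiple of this sum on $\mathcal H_N$ and vanishes in all other (pseudo)momentum sectors by definition, this gives $Q_N=0$. First I would write out the action of $\q_j$ for $j=1,\dots,N$ on a basis vector $|m_1,\dots,m_N\rangle$. By hypothesis $\q|m_j\rangle=|m_j\rangle\otimes|\chi\rangle+|\chi\rangle\otimes|m_j\rangle$, so $\q_j$ simply inserts the fixed vector $|\chi\rangle$ immediately after or immediately before the $j$-th site, with an overall sign $(-1)^{j-1}$. Introducing the operator $I_k\colon V^{\otimes N}\to V^{\otimes(N+1)}$ that inserts $|\chi\rangle$ as the new $k$-th site, this reads as the operator identity $\q_j=(-1)^{j-1}\,(I_j+I_{j+1})$ for $j=1,\dots,N$.

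Summing over $j$ and using the alternating string then produces the telescopic cancellation: in $\sum_{j=1}^N(-1)^{j-1}(I_j+I_{j+1})$ each interior term $I_k$ with $2\le k\le N$ occurs exactly twice with opposite signs, so it drops out, and only the two boundary insertions survive, $\sum_{j=1}^N\q_j=I_1+(-1)^{N+1}I_{N+1}$.

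It then remains to evaluate $\q_0$ and check that it cancels this remainder on $\mathcal H_N$. Starting from $\q_0=(-1)^NT_{N+1}(\phi)\q_N$ and using the elementary shift identities $T_{N+1}(\phi)I_{N+1}=I_1$ and $T_{N+1}(\phi)I_N=I_{N+1}T_N(\phi)$, one obtains $\q_0=-\bigl(I_1+I_{N+1}T_N(\phi)\bigr)$ as an operator identity. Restricting to $\mathcal H_N$, where $T_N(\phi)$ acts as the scalar $(-1)^{N+1}$, this becomes $\q_0=-\bigl(I_1+(-1)^{N+1}I_{N+1}\bigr)$, which is exactly the negative of the sum found above. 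Hence $\sum_{j=0}^N\q_j=0$ on $\mathcal H_N$, and $Q_N=0$. (As a consistency check, starting instead from the second expression for $\q_0$, namely $-T_{N+1}(\phi)^{-1}\q_1T_N(\phi)$, gives the same operator identity.)

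The only delicate point is the twisted case $\phi\neq 0$: the two shift identities above need $|\chi\rangle$ to be invariant under $e^{\i\phi s^3}$. I would obtain this from the compatibility condition \eqref{eqn:condtwist} applied to the special form of $\q$, whose solutions force the support of $|\chi\rangle$ to lie in a single $s^3$-eigenspace with an eigenvalue that makes $e^{\i\phi s^3}|\chi\rangle=|\chi\rangle$; one also uses the momentum restriction built into $\mathcal H_N$ to replace $T_N(\phi)$ by $(-1)^{N+1}$. Apart from keeping track of these phases and of the various signs in the $\q_0$ term — which is where I expect the only real risk of error — the argument is a purely combinatorial telescoping.
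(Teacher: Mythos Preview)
Your argument is correct and follows essentially the same route as the paper's proof: both reduce the claim to a telescopic cancellation of the alternating insertions of $|\chi\rangle$ and then use that $T_N(\phi)$ acts as $(-1)^{N+1}$ on $\mathcal H_N$ to kill the boundary remainder. Your insertion-operator notation $I_k$ is a clean repackaging of what the paper writes out on basis vectors, and your choice to compute $\q_0$ from $(-1)^NT_{N+1}(\phi)\q_N$ rather than from $-T_{N+1}(\phi)^{-1}\q_1T_N(\phi)$ is immaterial (as you note). One small imprecision: in the twisted case, condition \eqref{eqn:condtwist} does not force the support of $|\chi\rangle$ to lie in a \emph{single} $s^3$-eigenspace; it only forces every component $\chi_k\neq 0$ to satisfy $e^{\i\phi(k-\ell/2)}=1$, which is exactly the invariance $e^{\i\phi s^3}|\chi\rangle=|\chi\rangle$ you need. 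The paper uses this same fact implicitly when writing its expression for $\q_0|\mu\rangle$.
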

\begin{proof} It is sufficient to verify the statement on basis vectors. Let us first
consider a simple spin configuration $|\mu\rangle = |m_1,\dots, m_N\rangle$ for any
admissible choice of the labels $m_j$. Then
\begin{align*}
  \q_i |\mu\rangle = (-1)^{i-1}\left(|m_1,\dots, m_i\rangle \otimes |\chi\rangle \otimes |m_{i+1},\dots, m_N\rangle\right.\\+\left.|m_1,\dots, m_{i-1}\rangle \otimes |\chi\rangle \otimes |m_{i},\dots, m_N\rangle\right)
\end{align*}
and
\begin{equation*}
  \q_0 |\mu\rangle = -\left((T_N(\phi)|\mu\rangle) \otimes |\chi\rangle + |\chi\rangle\otimes |\mu\rangle\right).
\end{equation*}
The sum over all $\q_j$ leads to telescopic cancellations, and we are left with
\begin{equation*}
  \left(\sum_{j=0}^N \q_j\right)|\mu\rangle = \left((-1)^{N+1}|\mu\rangle-T_N(\phi)|\mu\rangle\right)\otimes |\chi\rangle .
\end{equation*}
From any state $|\mu\rangle$ we can construct an eigenvector of the shift operator, given
by
\begin{equation*}
  |\psi\rangle = \sum_{j=0}^{N-1} t_N^{-j} T_N(\phi)^{j}|\mu\rangle.
\end{equation*}
Using the preceding formula, we find that
\begin{align*}
  Q_N |\psi\rangle &= \sqrt{\frac{N}{N+1}} \left((-1)^{N+1}|\psi\rangle - T_N(\phi)|\psi\rangle\right)\otimes |\chi\rangle\\ &= \sqrt{\frac{N}{N+1}} \left((-1)^{N+1} - t_N\right)|\psi\rangle\otimes |\chi\rangle 
\end{align*}
which vanishes for $t_N = (-1)^{N+1}$. \qed
\end{proof}
We conclude that any $\q$ contains $\ell+1$ redundant parameters as long as no other
restrictions are imposed. Moreover, we see that two local supercharges $\q,\,\q'$ are
\textit{gauge-equivalent} in the sense that if
\begin{equation*}
  \q' = \q + \q_\chi,\quad \q_\chi|m\rangle = |m\rangle \otimes |\chi\rangle + |\chi\rangle \otimes |m\rangle,
\end{equation*}
then the global supercharges coincide $Q_N = Q_N'$. This provides a way to remove
redundancies. Once this is done we need a more refined criterion in order to decide if
\eqref{eqn:nilpotency} holds. Crucial for the property of $Q_N$ to be nilpotent is the implicit string
$(-1)^{j-1}$ in the definition of $\q_j$. It leads to a set of anticommutation relations:
\begin{subequations}
\begin{align}
  & \q_i \q_j + \q_{j+1} \q_i=0, \quad 1\leq i<j\leq N, \label{eqn:acr1} \\
  &  \q_0 \q_j+\q_{j+1}\q_0 =0, \quad j=1,\dots,N-1. \label{eqn:acr2}
\end{align}
They simplify considerably the expression $Q_{N+1}Q_N$, and leave only a few terms to
be analysed. Guided by the nature of the preceding argument, which is based on telescopic
cancellations, we are led to the following
\label{eqn:acr}
\end{subequations}

\begin{proposition} The supercharge is nilpotent if there exists a vector $|\chi\rangle \in V\otimes V$ such that
\begin{equation}
  (( \q \otimes 1)\q-(1\otimes \q)\q)|m\rangle = |\chi\rangle \otimes |m\rangle - |m\rangle \otimes |\chi\rangle, \quad \text{for all}\quad |m\rangle \in V. \label{eqn:nilpot}
\end{equation}
\label{prop:nilpot}
\end{proposition}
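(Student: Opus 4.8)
\begin{proof}[Proof idea.]
\end{proof}

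The plan is to expand $Q_{N+1}Q_N$ as a sum over products of the local building blocks $\q_j$, use the anticommutation relations \eqref{eqn:acr} to cancel the overwhelming majority of the terms, and then show that the handful which remain telescope once \eqref{eqn:nilpot} is assumed. On $\mathcal H_N$ one has, up to the nonzero prefactor $\sqrt{N/(N+2)}$, $Q_{N+1}Q_N = \sum_{i=0}^{N+1}\sum_{j=0}^N \q_i\q_j$ (with the convention that the outer $\q_i$ act on $N+1$ sites and the inner $\q_j$ on $N$ sites), while $Q_{N+1}Q_N$ vanishes identically on the remaining momentum sectors because $Q_N$ does. So everything reduces to the double sum. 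Here the map $(i,j)\mapsto(j+1,i)$ pairs, by \eqref{eqn:acr}, every term $\q_i\q_j$ with $0\le i<j$ -- with the sole exception of $\q_0\q_N$ -- against the term $\q_{j+1}\q_i$, and the two cancel; a short count shows that the only surviving terms are the ``diagonal'' ones $\q_i\q_i$, the ``subdiagonal'' ones $\q_{i+1}\q_i$ (for $i=0,\dots,N$), and the two ``boundary'' terms $\q_0\q_N$ and $\q_{N+1}\q_0$.

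The key observation is that \eqref{eqn:nilpot}, read at a single site, is precisely the identity $\q_i\q_i + \q_{i+1}\q_i = C_i - C_{i+1}$ for $i=1,\dots,N$, where $C_k$ ($k=1,\dots,N+1$) is the operator inserting the fixed two-site vector $|\chi\rangle$ immediately before the $k$-th site (with $C_{N+1}$ appending it at the end). Indeed the string $(-1)^{j-1}$ in the definition of $\q_j$ is exactly what makes $\q_i\q_i$ act as $(\q\otimes1)\q$ and $\q_{i+1}\q_i$ as $-(1\otimes\q)\q$ at site $i$, so the left-hand side of \eqref{eqn:nilpot} appears. Summing these bulk relations over $i=1,\dots,N$ telescopes to $C_1-C_{N+1}$. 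For the remaining pairs $\q_0\q_0+\q_1\q_0$ and $\q_0\q_N+\q_{N+1}\q_0$ I would conjugate the bulk relations at $i=1$ and at $i=N$ using the two defining expressions for $\q_0$ and the covariance $\q_{j+1}=-T_{N+1}(\phi)\q_j T_N(\phi)^{-1}$; this yields the operator identities $\q_0\q_0+\q_1\q_0 = T_{N+2}(\phi)^{-1}(C_1-C_2)T_N(\phi)$ and $\q_0\q_N+\q_{N+1}\q_0 = (-1)^{N+1}T_{N+2}(\phi)(C_N-C_{N+1})$.

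It remains to combine these contributions on $\mathcal H_N$. Using the evident covariance of the insertion operators, $T_{N+2}(\phi)^{-1}C_k T_N(\phi)=C_{k-1}$ for $2\le k\le N+1$ (conjugation moves the insertion point by one step and, for these $k$, leaves $|\chi\rangle$ untouched), together with $T_N(\phi)=(-1)^{N+1}$ on $\mathcal H_N$, the two boundary pairs reduce on $\mathcal H_N$ to $(-1)^{N+1}T_{N+2}(\phi)^{-1}C_1 - C_1$ and to $C_{N+1} - (-1)^{N+1}T_{N+2}(\phi)C_{N+1}$ respectively. Adding the bulk contribution $C_1-C_{N+1}$, all four bare $C$-terms cancel and one is left with $(-1)^{N+1}T_{N+2}(\phi)^{-1}\bigl(C_1 - T_{N+2}(\phi)^2 C_{N+1}\bigr)$. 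A direct computation gives $T_{N+2}(\phi)^2 C_{N+1}|\psi\rangle = \sum_{j,k}\chi_{jk}\,e^{\i\phi(j+k-\ell)}\,|jk\rangle\otimes|\psi\rangle$, so this last operator vanishes exactly when $\chi_{jk}\bigl(e^{\i\phi(j+k-\ell)}-1\bigr)=0$ for all $j,k$, i.e. when $|\chi\rangle$ sits in the twist-compatible charge sector. That this is automatic, and not an extra hypothesis, follows by comparing the charge sectors on the two sides of \eqref{eqn:nilpot}: since $\q$ obeys \eqref{eqn:condtwist}, the left-hand side is supported on three-site states of charge $\equiv m+\ell$, whereas a term $\chi_{jk}|jkm\rangle$ on the right-hand side is isolated (for $m\ne j$) and carries charge $j+k+m$; letting $m$ vary forces $\chi_{jk}=0$ unless $j+k\equiv\ell$. (For $\phi=0$ the condition is vacuous.) Hence $Q_{N+1}Q_N=0$ on $\mathcal H_N$, and therefore $Q_{N+1}Q_N=0$.

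I expect the boundary terms to be the only genuine difficulty. The pairwise cancellation in the double sum and the telescoping in the bulk are routine once the sign string has been accounted for; but $\q_0$ acts across the periodic seam, so $\q_0\q_0+\q_1\q_0$ and $\q_0\q_N+\q_{N+1}\q_0$ do not simplify locally. One must rewrite them through the twisted shift operators and then notice that the two ``wrapped'' insertions of $|\chi\rangle$ left over from the two ends of the chain coincide on $\mathcal H_N$ precisely because of the twist-compatibility of $|\chi\rangle$ -- a property that \eqref{eqn:nilpot} itself guarantees.
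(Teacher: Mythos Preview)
Your proposal is correct and follows essentially the same route as the paper: reduce $Q_{N+1}Q_N$ via the anticommutation relations \eqref{eqn:acr} to the diagonal, subdiagonal, and two boundary terms, telescope the bulk using the hypothesis \eqref{eqn:nilpot}, and then handle the seam terms with the twisted shift operators before concluding on $\mathcal H_N$. The paper does the bookkeeping directly on basis vectors $|\mu\rangle=|m_1,\dots,m_N\rangle$, whereas you package the same computation in the insertion operators $C_k$ and obtain the two boundary identities by conjugating the bulk relation $\q_i\q_i+\q_{i+1}\q_i=C_i-C_{i+1}$ through the shifts. This is a neat reformulation but not a genuinely different argument.

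One point worth flagging: your final cancellation hinges on the twist-compatibility condition $\chi_{jk}\bigl(e^{\i\phi(j+k-\ell)}-1\bigr)=0$, which you correctly deduce from \eqref{eqn:nilpot} together with \eqref{eqn:condtwist} by varying $m$. In the paper's arrangement of the boundary terms the two leftover pieces are written so that each carries an explicit factor $T_N(\phi)|\psi\rangle-(-1)^{N+1}|\psi\rangle$, which vanishes on $\mathcal H_N$ without invoking this lemma. Your organisation of the same contributions lands you in a slightly different place where the lemma is needed; either route closes the argument, and your observation that \eqref{eqn:nilpot} forces $|\chi\rangle$ into the twist-compatible charge sector is a worthwhile remark in its own right.
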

\begin{proof} Notice that because of the anticommutation relations \eqref{eqn:acr} we can
reduce the expression \eqref{eqn:nilpotency} to
\begin{equation*}
  Q_{N+1}Q_N = \sqrt{\frac{N}{N+2}}\left(\sum_{j=0}^N \left(\q_{j+1}+\q_j\right)\q_j + \q_{N+1}\q_0+\q_0\q_N\right).
\end{equation*}
We investigate the action of the different terms on a simple basis vector $|\mu\rangle
= |m_1,\dots,m_N\rangle$. For $i=1,\dots, N$ we have
\begin{align*}
  (\q_{i+1}+\q_i)\q_i |\mu\rangle = |m_1,\dots, m_i\rangle &  \otimes  |\chi\rangle \otimes |m_{i+1},\dots, m_N\rangle \\& -|m_1,\dots, m_{i-1}\rangle \otimes |\chi\rangle \otimes |m_{i},\dots, m_N\rangle.
\end{align*}
The remaining two terms are slightly different:
\begin{align*}
  (\q_1+\q_0)\q_0 |\mu\rangle &= T_{N+2}(\phi)\left((T_N (\phi)|\mu\rangle)\otimes |\chi\rangle \right)-|\mu\rangle\otimes |\chi\rangle\\
  (\q_{N+1}\q_0{+}\q_0\q_N)|\mu\rangle&= (-1)^{N+1}\left(\left(T_N(\phi)|\mu\rangle\right) \otimes |\chi\rangle {-} T_{N+2}(\phi)\left(|\mu\rangle \otimes |\chi\rangle\right)\right).
\end{align*}
Hence, we find
\begin{align*}
\Biggl(\sum_{j=0}^N \left(\q_{j+1}+\q_j\right)\q_j &+\q_{N+1}\q_0+\q_0\q_N\Biggr)|\mu\rangle\\ & = T_{N+2}(\phi)\left((T_N(\phi)|\mu\rangle -(-1)^{N+1}|\mu\rangle)\otimes |\chi\rangle\right)\\ &\quad + ((-1)^{N+1}T_N(\phi) |\mu\rangle -|\mu\rangle)\otimes |\chi\rangle .
\end{align*}
From $|\mu\rangle $ we can construct a momentum state $|\psi\rangle$ like in the
preceding proposition. As $Q_N$ only acts on states with $T_N(\phi)|\psi\rangle =
(-1)^{N+1}|\psi\rangle $ it follows that $Q_{N+1}Q_N|\psi\rangle= 0$ \qed
\end{proof}
The equation \eqref{eqn:nilpot} leads to an efficient algorithm for finding supercharges at arbitrary spin. The general strategy is the following. First, choose a rectangular
$(\ell+1)^2\times (\ell+1)$ matrix $\q$ with
possible restrictions (some entries might be zero due to desired symmetries such as spin-reversal symmetry,
particle number conservation etc.). Second, determine $|\chi\rangle$ by solving
the linear equations for its components, defined through $|\chi\rangle = \sum_{m,n=0}^\ell
\chi_{mn} |m,n\rangle$. In terms of the
components of the local supercharge $a_{m,ij}$ we find two expressions
\begin{equation}
  \chi_{mn} =
  \begin{cases}
    \sum_{j=0}^\ell\left( a_{k,mj}a_{j,nk} - a_{k,jk}a_{j,mn}\right), & m\neq k,\\
    \sum_{j=0}^\ell\left( a_{k,jn}a_{j,km} - a_{k,kj}a_{j,mn}\right),& n \neq k.
  \end{cases}
  \label{eqn:chicomponents}
\end{equation}
These expressions must hold for any allowed value of $k$. Third, one is
generically left with a number of quadratic equations for the non-zero components of $\q$
whose solutions lead to a nilpotent supercharge. There may be many solutions, some of
them inequivalent in the sense that $\q$ and $\q'$ cannot be related through
$\q' = (u\otimes u)\q u^{-1}$ (where $u$ is a unitary transformation on $V$) and/or
a gauge transformation. It would certainly
be interesting to classify them. In the following chapters, we give various interesting
examples.

\paragraph{Hamiltonian.}
Given a solution to \eqref{eqn:nilpotency}, we construct the
Hamiltonian as the anticommutator
\begin{equation}
  H_N = Q_N^\dagger Q_N + Q_{N-1}Q_{N-1}^\dagger.
  \label{eqn:defh}
\end{equation}
Notice that its action is non-zero only when acting on the subsectors where the translation operator
$T_N(\phi)$ has eigenvalue $(-1)^{N+1}$. On these sectors, we have all the features of a
supersymmetric model explained in section \ref{sec:n2susy}. The spectrum of the
Hamiltonian is positive. Possible zero-energy eigenstates (supersymmetry singlets) are
annihilated by both $Q_N$ and $Q_{N-1}^\dagger$, and in one-to-one correspondence with the
quotient space $\mathfrak{H}_N = \text{ker}\, Q_N/\text{im}\, Q_{N-1}$. All other eigenstates
are organised in
supersymmetry doublets $(|\psi\rangle, Q_N |\psi\rangle)$: due to the property that $H_{N+1}Q_N = Q_N H_N$
the superpartners have the same energy.

\bigskip

For later developments it will be useful to understand how the Hamiltonian $H_N$ acts
locally. In fact, the equation \eqref{eqn:defh} can be simplified as there exists a
second set of anticommutation relations, this time between the local $\q_j$ and their
adjoints:
\begin{align*}
& \q_i \q_j^\dagger +\q^\dagger_{j+1} \q_i=0, \quad 1\leq i< j-1\leq N-1,\\
& \q_0 \q_j^\dagger +\q^\dagger_{j+1} \q_0=0, \quad 2\leq i \leq N-1.  
\end{align*}
As before, these relations are a simple consequence of the shift and the implicit string
in the definition of the $\q_j$, and do not depend on the actual structure of $\q$. Using
these, we find that the Hamiltonian can be written as sum over nearest-neighbour interactions:
\begin{equation*}
  H_N = \Pi_N\left(\sum_{j=1}^N \h_{j,j+1}\right) \Pi_N.
\end{equation*}
Here, $\Pi_N = N^{-1}\sum_{j=0}^{N-1} (-1)^{(N+1)j} T_N(\phi)^{j}$ is the projector on
the momentum spaces we are interested in, and $\h_{j,j+1}$ is the
Hamiltonian density acting on sites $j, j+1$. We have $h_{j+1,j+2} = T_N(\phi) h_{j,j+1} T_N(\phi)^{-1}$ with the site identification $N+1 \equiv 1$. As a matrix, the density is given by
\begin{equation}
  \h = -(\q^\dagger \otimes 1)(1\otimes \q) -(1 \otimes \q^\dagger)(\q\otimes 1)+ \q\q^\dagger + \frac{1}{2}\left(\q^\dagger \q \otimes 1 + 1 \otimes \q^\dagger \q\right). \label{eqn:localh}
\end{equation}
This choice is of course not unique because of translation invariance, yet it appears
natural because $\h$ becomes self-adjoint.

\bigskip

Given that the local supercharges $\q$ are not unique we ask how the Hamiltonian density 
$\h$ defined in \eqref{eqn:localh} changes if we deform $\q$. Proposition \ref{prop:redundant}
provides a criterion for redundant parameters within the local supercharges: if we have
$\q'=\q + \q_\chi$ with $\q_\chi|m\rangle = |\chi\rangle \otimes |m\rangle + |m\rangle
\otimes |\chi\rangle,\, |\chi\rangle \in V$, then $Q_N = Q_N'$. Yet the local densities $\h$ and $\h'$ differ in general. Indeed we find
\begin{subequations}
\begin{equation}
  \h' = \h + \left(X \otimes 1 - 1 \otimes X\right),
\end{equation}
with
\begin{equation}
X = \frac{1}{2}\sum_{ijk} \chi_j(a_{i,jk}-a_{i,kj}+a_{k,ji}-a_{k,ij})|i\rangle\langle j|.
\end{equation}
\label{eqn:gauge}
\end{subequations}
The term $X \otimes 1 - 1 \otimes X$ deserves the name 'local boundary' for obvious
reasons. We see that it vanishes obviously if $a_{i,jk}=a_{i,kj}$. When
summing over all sites the contributions from the local boundary terms cancel out
telescopically and the total Hamiltonians for $\q$ and $\q'$ are the same. Yet, local boundary terms may be helpful in order to understand local symmetries.
We will see this when studying quantum group symmetries in sections \ref{sec:qgtrig} and
\ref{sec:qgfz}.

\paragraph{Basic symmetries.}
If the local supercharge $\q$ has some symmetries, then they are naturally inherited by
the Hamiltonian density $\h$, and may lead to symmetries of the Hamiltonian $H_N$. We will be
particularly interested in cases related to particle number
conservation/conservation of the magnetisation. We thus analyse if there are suitable
$\alpha\in (0,2\pi)$ such that the equation
\begin{equation*}
    [e^{\i \alpha(s^3\otimes 1 + 1 \otimes s^3)},\h]=0
\end{equation*}
holds. We distinguish two special cases. A supercharge is called \textit{trigonometric}
if this equation holds for arbitrary value of $\alpha$ in case of which the total
magnetisation is conserved. This is the case if $(s^3\otimes 1 + 1 \otimes s^3)\q = \q (s^3+m)$
for a fixed number $m \in \mathbb Z/2$. It is called \textit{elliptic} if the condition on $\h$ holds only for $\alpha =\pi$.
In that case the total magnetisation is conserved mod $2$. The denomination is clearly
inspired from the context of integrable vertex models whose $R$-matrices have generically
trigonometric weights if some particle number conservation is imposed, whereas the
weights are expressed in terms of elliptic functions if this conservation holds only mod
$2$. However, these are not the only interesting cases. Indeed, in section
\ref{sec:cyclic} we discuss a case where $\alpha = 2\pi/3$ so that the magnetisation is
conserved mod $3$.

We discuss briefly two other symmetries for closed chains with periodic boundary
conditions and open chains. We start with the spin-reversal operator defined through
\begin{equation*}
  R|m\rangle = |\ell-m\rangle.
\end{equation*}
If $\q$ is invariant under spin reversal in the sense that $(R\otimes R)\q R=\pm \q$ then
$\h$ will be invariant $[R\otimes R,\h]=0$, and thus $[R_N,H_N]=0$ where $R_N=R^{\otimes N}$.
While the local invariance under spin reversal remains valid in the case of non-zero twist
angle, the global symmetry operator needs to be replaced by $R'_N = R_N K$ where $K$ denotes
complex conjugation. The simplicity of all these statements is due to
the fact that we deal with a local operation. This is not the case for the reflection/parity
operator acting on a state for $N$ spins according to
$P_N|m_1,m_2,\dots,m_{N-1},m_N\rangle=|m_N,m_{N-1},\dots,m_{2},m_1\rangle$. We find
\begin{equation*}
  P_{N+1}\q_j = (-1)^{N+1} \q_{N-j+1}^{\text{op}} P_N, \quad j=1,\dots,N, \quad   P_{N+1}\q_0 = \q_{0}^{\text{op}} P_NT_N.
\end{equation*}
The local supercharge
$\q^{\text{op}}$ differs from the action of $\q$ by a transposition:
$\q^{\text{op}}|m\rangle = \sum_{jk}a_{m,jk}|kj\rangle$. The global supercharge $Q_N$
will have a definite transformation behaviour with respect to reflection symmetry if
$\q^{\text{op}} = \epsilon\, \q$ with $\epsilon=\pm 1$ what is equivalent to $a_{m,jk} =
\epsilon\, a_{m,kj}$. Then it follows that
\begin{equation}
  P_{N+1}Q_NP_{N} = (-1)^{N+1}\epsilon\, Q_N
  \label{eqn:parity}
\end{equation}
on $\mathcal H_N$ what implies the invariance of the Hamiltonian under reflection
$[P_N,H_N]=0$.

\paragraph{Extended supersymmetry.} Given a spin chain whose Hamiltonian admits a
dynamical supersymmetry on the lattice one may ask if it is part of a larger algebra, at least
for certain choices of the parameters in the model. There are obviously many ways to imagine possible
symmetry extensions. Examples are the $\textit{SU}(2)$-extended algebra which was studied by in the context of
lattice fermions with exclusion rules \cite{santachiara:05}, or the $\text{su}(1,1|2)$-sector of $\mathcal N=4$
at one one-loop where the supercharges are part of a dynamical representation of a Lie superalgebra \cite{beisert:04_2}.

In both examples cases a second set of supercharges is present. Their algebraic relations amongst themselves and with other operators of the theory are model-specific. Here we consider only the minimal situation: we suppose that there are two copies of supercharges $Q_N, Q_N^\dagger$ and $\bar Q_N, \bar Q_N^\dagger$ which anticommute:
\begin{subequations}
\begin{align}
  & Q_{N}Q_{N-1} = \bar Q_{N}\bar Q_{N-1} =0, & \quad Q_{N-1}^\dagger Q_{N}^\dagger =  \bar Q_{N-1}^\dagger \bar Q_{N}^\dagger =0, \label{eqn:ac1}\\
   & \bar Q_{N}^\dagger Q_{N}+ Q_{N-1} \bar Q_{N-1}^\dagger=0, & Q_{N}^\dagger \bar Q_{N}+  \bar Q_{N-1} Q_{N-1}^\dagger=0 \label{eqn:ac2},\\
    &  \bar Q_{N} Q_{N-1} + Q_{N}\bar Q_{N-1} =0, &  Q_{N-1}^\dagger \bar Q_{N}^\dagger + \bar Q_{N-1}^\dagger Q_{N}^\dagger=0 \label{eqn:ac3}.
\end{align}
To each pair of supercharges is associated a Hamiltonian
\begin{equation}
  H_N = Q_N^\dagger Q_N + Q_{N-1}Q_{N-1}^\dagger, \quad \bar H_N = \bar Q_N^\dagger \bar Q_N + \bar Q_{N-1}\bar Q_{N-1}^\dagger.
  \label{eqn:ac4}
\end{equation}
\label{eqn:n22susy}
\end{subequations}
It follows that $H_N$ (resp. $\bar H_N$) commutes
with $\bar Q_N,\, \bar Q_N^\dagger$ (resp. $Q_N,\, Q_N^\dagger$), and furthermore
$[H_N,\bar H_N]$=0. Thus, they can be diagonalised simultaneously. The eigenvectors with 
strictly positive eigenvalues organise in \textit{quadruplets} with one state at $N-1$
sites, two at $N$ sites, and one at $N+1$ sites:
\begin{equation*}
  (|\psi\rangle, Q_{N-1}|\psi\rangle, \bar Q_{N-1} |\psi\rangle, \bar Q_{N}Q_{N-1} |\psi\rangle), \quad Q_{N-2}^\dagger|\psi\rangle = \bar Q_{N-2}^\dagger|\psi\rangle=0.
\end{equation*}
The two states $Q_{N-1}|\psi\rangle, \bar Q_{N-1} |\psi\rangle$ in the middle of the quadruplet can be mapped onto each other by the nilpotent operator
\begin{equation*}
  C_N = \bar Q_N^\dagger Q_{N}, \quad C_N^2=0.
\end{equation*}
It commutes which both $H_N,\bar H_N$ as a consequence of \eqref{eqn:n22susy} and is
generically a rather non-trivial (non-local) conserved charge. We will be interested in
the case where the two supercharges are related by a similarity transformation $U_N$
acting on $\mathcal H_N$: $\bar Q_N = U_{N+1} Q_N U_N^{-1}$. This implies that the two
Hamiltonians are conjugate $\bar H_N = U_N H_N U_N^{-1}$. In section
\ref{sec:trigonometric} we will encounter a case, where $U_N$ is a symmetry of the
Hamiltonian, so that $H_N = \bar  H_N$, but does not leave invariant the supercharges.

This set of relations \eqref{eqn:n22susy} is a lattice version of the $\mathcal N=(2,2)$ supersymmetry
algebra with zero centre which is well known from field theory (see e.g. \cite{hori:03}). In the
field-theory context, non-zero central extensions occur typically on non-compact spaces \cite{witten:78}.
We do not expect them in the present setting of finite lattices. Moreover, in the field
theory the operators $Q$ and $\bar Q$ are usually associated with left- and right-movers,
a concept which is absent on the lattice. The best we can hope is that in the scaling limit
some suitable rescaled linear combinations of the lattice supercharges converges to the left and right
field-theory supercharges. This might however make necessary to take linear combinations of the algebra
at $N$ and $N+1$ sites and then let $N \to \infty$ in order to obtain objects which are mapped onto each
other through spatial reflection as one expects for left- and right-movers. The reason is that in all examples with two copies of the supersymmetry found below \textit{all} supercharges have the same definite parity $(-1)^{N+1}$ at fixed $N$ in the sense of \eqref{eqn:parity}.

\subsection{Open boundary conditions}
\label{sec:obc}
Consider a $\q$ such that \eqref{eqn:nilpot} holds with $|\chi\rangle = 0$: $(\q \otimes 1 - 1 \otimes \q)\q=0$. This means
that the mechanism which makes $Q_N$ nilpotent is completely local. In this case there is
an extension to open boundary conditions which has a fully supersymmetric Hamiltonian (no
restriction to special subsectors is necessary). In the case of an open interval with
sites $j=1,2,\dots,N$ we define the global supercharge as
\begin{equation*}
  Q_N = \sum_{j=1}^N \q_j,
\end{equation*}
which is nilpotent if $|\chi\rangle = 0$.
As before the Hamiltonian is given by an anticommutator:
\begin{equation*}
  H_N = Q_N^\dagger Q_N + Q_{N-1}Q_{N-1}^\dagger=\sum_{j=1}^{N-1} \h_{j,j+1} + \frac{1}{2}\left(\q_1^\dagger \q_1 + \q_N^\dagger \q_N\right).
\end{equation*}
where $\h_{j,j+1}$ is the Hamiltonian density defined in \eqref{eqn:localh}. The
boundary terms can be interpreted as surface magnetic fields. This is for example the
case for the spin-$1/2$ XXZ chain \cite{yang:04}. At the end of the next section, we
show how to extend this to the XYZ chain.

\subsection{An example: the XYZ chain}
\label{sec:example}
We review the example of the XYZ spin chain at the combinatorial line \cite{bazhanov:05,bazhanov:06,mangazeev:10,razumov:10}, discussed in \cite{hagendorf:12} from the point of
view of supersymmetry, within the present formalism. The supersymmetry is a
generalisation of the XXZ-case which acts locally like $0 \to \emptyset$ and $1\to 00$
\cite{yang:04}. The XYZ chain is an elliptic chain in the sense that it breaks particle
number conservation mod $2$. The local supercharge found in \cite{hagendorf:12} acts
according to
\begin{equation*}
  \q|0\rangle = 0, \quad  \q|1\rangle = |00\rangle-\zeta|11\rangle.
\end{equation*}
Indeed, one checks that
\eqref{eqn:nilpot} holds with $|\chi\rangle = - \zeta|00\rangle$.
Thus, we have a bona fide supercharge $Q_N$ on the lattice. It generates the XYZ-Hamiltonian
\begin{subequations}
\begin{equation}
  H_N = -\frac{1}{2}\sum_{j=1}^N \left(J_1 \sigma_j^1 \sigma_{j+1}^1 + J_3 \sigma_j^2 \sigma_{j+1}^2+J_3\sigma_j^3\sigma_{j+1}^3\right)+ E_0,
\end{equation}
with the coupling constants
\begin{equation}
  J_1 = 1+\zeta, \quad J_2 = 1-\zeta,  \quad J_3 = \frac{\zeta^2-1}{2}, \quad E_0 = \frac{N(J_1+J_2+J_3)}{2}.
\end{equation}\label{eqn:xyzham}
\end{subequations}
Here $\sigma^a,\,a=1,2,3$ denote the usual Pauli matrices. The Hamiltonian is invariant
under spin reversal, and thus a second local supercharge $\bar \q = (R\otimes R) \q R$ can
be defined. Thus, we have 
\begin{align*}
  \bar \q|0\rangle = |11\rangle - \zeta |00\rangle, \quad \bar \q |1\rangle = 0.
\end{align*}
It was shown in \cite{hagendorf:12} that the corresponding global supercharges $Q_N$ and $\bar Q_N$ anticommute like in  \eqref{eqn:n22susy}, and generate the same Hamiltonian. All this holds only for
periodic boundary conditions as for non-zero $\zeta$ the condition \eqref{eqn:condtwist}
allows only the twist angle $\phi=0$.

\paragraph{Rotational invariance and duality transformation.}
We use the occasion to inspect a known duality transformation of the Hamiltonian
\eqref{eqn:xyzham} in parameter space from the point of view of the supersymmetry
algebra. We denote by $\rho^a(\theta) = \exp\left( \i \theta \sigma^a/2\right)$ a local
rotation about the $a-$axis. A global rotation on $\mathcal H_N$
then corresponds to $\Omega_N^a(\theta) = \prod_{j=1}^N \rho^a_j(\theta)$. Moreover, we
write $\q=\q(\zeta)$ in order to stress the explicit dependence on $\zeta$. Let us start
with $a=3$. We find that
\begin{equation*}
  \left(\rho^3(\theta)\otimes \rho^3(\theta)\right) \q(\zeta) \rho^3(-\theta) = e^{i\theta/2}\q(e^{- 2\i\theta}\zeta).
\end{equation*}
We see that the transformation $\zeta \to -\zeta$ is equivalent to setting $\theta=\pm \pi/2$. Next, let us consider rotations about the
$2-$axis. Choosing the same angle we find
\begin{align*}
  \left(\rho^2\left(\pm\frac{\pi}{2}\right)\otimes \rho^2\left(\pm\frac{\pi}{2}\right)\right) \q(\zeta) \rho^2\left(\mp\frac{\pi}{2}\right)=\left(\frac{1-\zeta}{2}\right)\frac{\q(\zeta')\pm\bar \q(\zeta')}{\sqrt{2}}+\q_{\text{red}}
\end{align*}
with
\begin{equation*}
  \zeta'=\frac{\zeta+3}{\zeta-1}.
\end{equation*}
Here $\q_{\text{red}}$ is redundant in the sense of proposition \ref{prop:redundant} with
vector $|\chi\rangle =-(1+\zeta)(|0\rangle+|1\rangle)/2\sqrt{2}$. Therefore, it does not contribute
to the global supercharge and can be discarded. Thus we have
\begin{align*}
  \Omega_{N+1}^3(\pm \pi/2)Q_N(\zeta) \Omega_{N}^3(\pm \pi/2)^{-1} &= e^{\pm \i\pi/4}Q_N(-\zeta),\\
  \Omega_{N+1}^2(\pm \pi/2)Q_N(\zeta) \Omega_{N}^2(\pm \pi/2)^{-1} &= \left(\frac{1-\zeta}{2}\right)\frac{Q_N(\zeta')\pm  \bar Q_N(\zeta')}{\sqrt{2}}.
\end{align*}
Hence these rotations imply simple transformations for the supercharges,
what is of course compatible with the known invariance properties of the Hamiltonian
\cite{razumov:10}. The strategy to find invariances is generic and can be applied to
other cases. We present an example for a spin-one model in section \ref{sec:spinone}.

\paragraph{Zero-energy states.} The supersymmetry singlets of the XYZ Hamiltonian \eqref{eqn:xyzham}
were studied in \cite{bazhanov:06,mangazeev:10,fendley:10,fendley:10_1,zinnjustin:12}. These
investigations rely on the following existence conjecture: if the number of
sites $N=2n+1$ is odd then there are exactly two linearly independent zero-energy states. The
statement is believed to hold for periodic boundary conditions, and the states are found to be invariant under translations.
Here we give a short proof of
this conjecture in the momentum sectors where the supersymmetry exists. It is
based on Witten's conjugation argument, which states the following. Consider an
invertible transformation $M$ on $\mathcal H$ on which acts a supersymmetry algebra with supercharges
$Q,Q^\dagger$. If one introduces the conjugated supercharge
\begin{equation*}
   \tilde Q = M Q M^{-1}
\end{equation*}
then the Hamiltonians $H =
\{Q,Q^\dagger\}$ and $\tilde H = \{\tilde Q, \tilde Q^\dagger\}$ have the same
number of zero-energy eigenstates. If $M$ leaves invariant certain quantum numbers,
such as the fermion number, then the result holds sector-wise. We refer to \cite{witten:82} for the details
of the proof. The basic idea is to exploit that the zero-energy states of
$H$ are in one-to-one correspondence with the elements of the quotient space
$\mathfrak H_Q = \text{ker}\, Q/\text{im}\, Q$, in other words the number of
linearly independent ground states is equal to $\dim \mathfrak H_Q$. Using
the conjugation by $M$ one establishes a bijection between the elements of
$\mathfrak H_{Q}$ and $\mathfrak H_{\tilde Q}$. This leads to the equality of
their dimensions, and thus to the desired result. Notice that this does not
imply that the singlets of $\tilde H$ can be obtained from the singlets of $H$
through the transformation $M$ however.

We apply this argument to the XYZ spin chain by constructing a simple
transformation $M$ for its supercharge which allows to vary the parameter $\zeta$.
To this end, notice that for any real
number $\mu \neq 0$ the two local supercharges $\q(\zeta)$ and $\q(\mu^2\zeta)$
can be related through conjugation as follows
\begin{equation*}
  \left(m(\mu) \otimes m(\mu) \right) \q(\zeta) m(\mu)^{-1} = \mu^{-1}
  \q(\mu^2\zeta), \quad \text{with} \quad m(\mu) = |0\rangle\langle 0| + \mu
|1\rangle \langle 1|.
\end{equation*}
If we set $M_N(\mu) = m(\mu)^{\otimes N}$ then the global supercharge is transformed under conjugation to
\begin{equation*}
  \tilde Q_N(\zeta,\mu)  =  M_{N+1}(\mu) Q_N(\zeta) M_{N}(\mu)^{-1} =
  \mu^{-1}Q_N(\mu^2 \zeta).
\end{equation*}
These new supercharge generates the Hamiltonian $\tilde H_N(\zeta,\mu) =
\mu^{-2} H_N(\mu^2 \zeta)$. According to Witten's argument, it has the same
number of zero-energy states as $H_N(\zeta)$. This statement holds for any $\mu
\neq 0$. We suppose now that $\zeta>0$ without loss of generality. Set $\mu =
1/\sqrt{\zeta}$, and observe that
\begin{equation*}
  \tilde H_N(\zeta,1/\sqrt{\zeta}) = \zeta\,H_N(1) =
  \zeta\sum_{j=1}^N(1-\sigma^1_j \sigma^{1}_{j+1})
\end{equation*}
is, up to a multiplicative factor, the Hamiltonian for a one-dimensional
\textit{classical} Ising model. We introduce the states
$|{\rightarrow}\rangle = (|0\rangle + |1\rangle)/\sqrt{2},\,|{\leftarrow}\rangle
= (|0\rangle - |1\rangle)/\sqrt{2}$. In terms of these, the only
zero-energy states are the fully-polarised configurations
$|{\leftarrow}{\leftarrow}\cdots {\leftarrow}\rangle$ and
$|{\rightarrow}{\rightarrow}\cdots {\rightarrow}\rangle$. Restricting to our
momentum sectors with $t_N = (-1)^{N+1}$, we find that $H_N(1)$ has two
zero-energy states for $N$ odd, and none for $N$ even. The same holds thus for
$H_N(\zeta)$ for any $\zeta>0$. Using the duality transformations, it is easy to
see that this result extends to any $\zeta$.

\paragraph{Open boundaries.} The model based on the preceding local supercharge
cannot be defined on open intervals without breaking the supersymmetry unless $\zeta=0$. The reason
is that $\q$ solves \eqref{eqn:nilpot} with $|\chi\rangle = -\zeta|00\rangle$ which
is non-vanishing for non-zero $\zeta$. In order to circumvent this problem we take advantage of the gauge transformations, and introduce the local supercharge defined through
\begin{equation*}
  \q|0\rangle = \lambda(|01\rangle + |10\rangle), \quad \q|1\rangle = |00\rangle +(2\lambda-\zeta)|11\rangle.
\end{equation*}
The part proportional to $\lambda$ is in fact of the form discussed in proposition \ref{prop:redundant}.
It is easy to check that this $\q$ solves \eqref{eqn:nilpot} with vector $|\chi\rangle = (\lambda-\zeta)\left(|00\rangle + \lambda|11\rangle\right)$. The Hamiltonian density $h$ is independent of $\lambda$, and therefore coincides with the one of the XYZ chain  \eqref{eqn:xyzham}. Setting $\lambda=\zeta$ we find
$|\chi\rangle =0$, and thus may define the model with manifest supersymmetry on an open interval. The Hamiltonian is given by
\begin{equation*}
  H_N(\zeta) = \sum_{j=1}^{N-1}h_{j,j+1}+\frac{1+3\zeta^2}{2}+\frac{\zeta^2-1}{4}\left(\sigma^3_1+\sigma^3_N\right).
\end{equation*}
Hence, we find a diagonal boundary magnetic field. Its sign may be flipped through a spin reversal transformation which leaves the bulk part invariant. If the we write $\bar H_N(\zeta) = R_N H_N(\zeta)R_N$ then for any $0\leq \alpha\leq 1$ then combination $\alpha H_N(\zeta) + (1-\alpha)\bar H_N(\zeta)$ has positive spectrum. Its bulk Hamiltonian is independent of $\alpha$, but the surface magnetic field is given by $(2\alpha-1)(\zeta^2-1)(\sigma_1^3+\sigma^3_N)/4$. The supersymmetry however is present only for $\alpha=0,1$.

It is possible to apply the same conjugation argument as for the periodic chain: for $\mu\neq 0$ the Hamiltonians $H_N(\zeta)$ and $\mu^{-2}H_N(\mu^2\zeta)$ have the same number of zero-energy states. Again, we restrict to $\zeta>0$ and choose $\mu=1/\sqrt{\zeta}$ (the case $\zeta<0$ is treated similarly). We consider thus $\zeta H_N(1) = \zeta\sum_{j=1}^{N-1}(1-\sigma^1_j \sigma^{1}_{j+1}) + 2\zeta$. The spectrum of this Hamiltonian is bounded from below by $2\zeta$ for any number of sites. It follows that for all strictly positive $\zeta$, the Hamiltonian has no zero-energy state for any finite number of sites. The point $\zeta=0$ is special however as it cannot be reached through conjugation. Indeed, the exact diagonalisation for small $N$ suggests that there is a single zero-energy state for every $N$ at $\zeta=0$ \cite{yang:04}.

\section{Trigonometric models}
\label{sec:trigonometric}
In this section, we present a class of spin$-\ell/2$ chains with dynamical
lattice supersymmetry. The
construction is inspired from the $\mathcal M_\ell$-models for lattice fermions
\cite{fendley:03_2,fendley:03}. They possess a (non-dynamic) $\mathcal N=2$
supersymmetry on the lattice. The crucial observation here is that once
reformulated in the language of spin chains one may impose spin-reversal
symmetry, which has no obvious counterpart in the fermion models, and thus enforce
the existence of a second copy of the supersymmetry algebra such that
\eqref{eqn:n22susy} holds. In this way we arrive at lattice models for the $\mathcal
N=(2,2)$ superconformal minimal series with explicit supercharges on the lattice
\cite{saleur:93_2}.

We proceed as follows. First, we review briefly the definition of the $\mathcal
M_\ell$ models in section \ref{sec:trigsc}, and discuss the supercharges for the
related spin chains. In section \ref{sec:trigham} we analyse the effect of
spin-reversal symmetry, and show that all coupling constants of the theory are
essentially fixed by this requirement. Next, we prove that suitably modified
versions of the corresponding Hamiltonians possess a local quantum group
symmetry. This leads us to an identification of the models with spin$-\ell/2$
chains constructed from the six-vertex model through the fusion procedure, with
a particular anisotropy depending on $\ell$. This is the spin-anisotropy
commensurability mentioned in the introduction. In the last section, we address the
computation of the Witten index, which indicates the existence of zero-energy
states for the spin chains if periodic boundary conditions are imposed.

\subsection{Supercharges}
\label{sec:trigsc}

The $\mathcal M_\ell$ describe spinless fermions with the
constraint that connected fermion clusters cannot contain more than $\ell$ particles.
Their supercharges take out single fermions from a state. This operation is
weighted by an amplitude $a_{m,k}, \, k=0,\dots, m-1$ if the $(k+1)$-th
member in a string of $m$ consecutive particles is removed. See figure
\ref{fig:fermions}(a) for an illustration. There is a natural identification of
a string of $m$ fermions between two empty sites with a basis vector $|m\rangle$
in the vector space $V\simeq \mathbb C^{\ell+1}$ as illustrated in figure
\ref{fig:fermions}(b).
\begin{figure}[h]
  \centering
  \begin{tikzpicture}
    \begin{scope}
    \draw (-1,0) node {(b)};
    \clip (0,0.5) rectangle (3,-0.5);
    \draw (0,0) -- (1.25,0);
    \draw[dotted] (1.25,0) -- (2.25,0);
    \draw (2.25,0)--(3,0);
    \filldraw[fill=white] (0,0) circle (2pt);
    \filldraw[fill=black] (.5,0) circle (2pt);
    \filldraw[fill=black] (1.0,0) circle (2pt);
    \filldraw[fill=black] (2.5,0) circle (2pt);
    \filldraw[fill=white] (3.0,0) circle (2pt);
    \end{scope}
    
    \draw (1.5,-0.25) node{$\underbrace{\qquad\qquad\qquad}$};
    \draw (1.5,-.75) node{$m$ fermions};
    \draw[<->,>=stealth] (3.75,0)--(4.25,0);
    \draw (5,0) node {$|m\rangle$};
    
    \begin{scope}[yshift=2cm]
    \draw (-1,0) node {(a)};
    \clip (0,0.5) rectangle (3,-0.5);
    \draw (0,0) -- (3,0);
    \filldraw[fill=white] (0,0) circle (2pt);
    \filldraw[fill=white] (3.0,0) circle (2pt);
    
    \foreach \x in {.5,1,...,2.5} 
      \filldraw[xshift = \x cm] (0,0) circle (2pt);
      
      \draw (1.5,-0.25) node{$\underbrace{\qquad\qquad\qquad}$};

    \end{scope}
    
    \draw[yshift=2cm] (1.5,-.65) node{$m$};

    \begin{scope}[xshift=5cm,yshift=2cm]
    \clip (0,0.5) rectangle (3,-0.5);
    \draw (0,0) -- (3,0);
    \filldraw[fill=white] (0,0) circle (2pt);
    \filldraw[fill=white] (3.0,0) circle (2pt);
       
    \foreach \x in {.5,1,...,2.5} 
      \filldraw[xshift = \x cm] (0,0) circle (2pt);
      
    \draw (1,-0.25) node{$\underbrace{\qquad\quad}$};
    \draw (2.5,-0.25) node{$\underbrace{\quad}$};
    \filldraw[fill=white] (2.0,0) circle (2pt);
    \filldraw[fill=black] (1.0,0) circle (2pt);
    \filldraw[fill=black] (1.5,0) circle (2pt);

    \end{scope}
    
    \draw[yshift=4cm] (6,-2.65) node{$k$}; 
    \draw[yshift=4cm] (7.5,-2.65) node {$m{-}k{-}1$}; 
    
    \draw[yshift=4cm,->,>=stealth] (3.75,-2)--(4.25,-2);
  
  \end{tikzpicture}
  \caption{(a) The local action of the supercharge in the fermion model: an occupied
site is represented as $\bullet$, whereas an empty site
  corresponds to $\circ$. The supercharge splits up a string of $m$ particles
  into two strings with $k$ and $m-k-1$ particles. (b) Correspondence between
  string of $m$ fermions and spin state $|m\rangle$.}
  \label{fig:fermions}
\end{figure}
We translate the splitting process shown on figure \ref{fig:fermions}(a) into
the spin language: the local supercharge takes thus the basis vector $|m\rangle
\in V$ to a pair $|k,m-k-1\rangle \in V \otimes V$ with amplitude $a_{m,k}$. We
set $a_{m,k}=0$ for $k<0$ and $k>m-1$. Thus we have a local $\q$ with amplitudes
$a_{m,jk} = \delta_{j+k+1,m}a_{m,j}$:
\begin{equation}
  \q |m\rangle = \sum_{k=0}^{m-1} a_{m,k}|k,m-k-1\rangle.
  \label{eqn:defq}
\end{equation}
Notice that $\q$ changes the magnetisation by $-(\ell+2)/2$. Combining this with
\eqref{eqn:condtwist} we conclude that the set of admissible twist angles is given by
\begin{equation}
  \phi = \frac{4\pi m}{\ell+2}, \quad m = 0, 1, \dots, p, \quad
  p =
  \begin{cases}
    \ell +1,& \ell \text{ odd,}\\
    \ell/2, & \ell \text{ even.}
  \end{cases}
  \label{eqn:twists}
\end{equation}
Next, we want the global supercharge $Q_N$ built from $\q$ to be nilpotent.
Hence we need to find a vector $|\chi\rangle$ such that \eqref{eqn:nilpot}
holds. Writing out the components in \eqref{eqn:chicomponents} with $a_{m,jk} =
\delta_{j+k+1,m}a_{m,j}$ we find $|\chi\rangle =0$. It is thus sufficient to
solve the equation $(1\otimes \q)\q=(\q \otimes 1)\q$. In order to simplify the
discussion, we impose another condition $a_{m,k} = a_{m,m-1-k}$. Hence the
splitting depends only on the length of the subsequences but not on their order.
This requirement implies $\q = \q^{\text{op}}$ and therefore the Hamiltonian
will automatically be parity symmetric.
Using this we obtain the following recurrence relation:
\begin{equation*}
  a_{m,k}a_{m-k-1,n} = a_{m,n}a_{m-n-1,k}.
\end{equation*}
We set $k=0$ and find by iteration the following expression:
\begin{equation}
  a_{m,n} = \left(\frac{a_{m,0}}{a_{m-n-1,0}}\right) a_{m-1,n}= \prod_{j=1}^{m-n-1}\left(\frac{a_{j+n+1,0}}{a_{j,0}}\right)a_{n+1,0}.
  \label{eqn:recursiona}
\end{equation}
Hence, it is sufficient to know the numbers $a_{k,0},\, k=1,\dots, \ell$.
This is of course the same result as for the $\mathcal M_{\ell}$ models
\cite{fendley:03}: up to a global factor there are $\ell-1$ free parameters. 

\subsection{Hamiltonian and spin-reversal symmetry}
\label{sec:trigham}
We fix the free parameters by imposing spin-reversal
symmetry $m \leftrightarrow \ell -m$ on the Hamiltonian density $h$. This has no obvious equivalent
in the fermion models as the
exclusion rule does not admit a particle-hole symmetry (at least not for fixed length).
For the matrix elements $h_{(r,s),(m,n)} = \langle r,s|\h|m,n\rangle$ this implies
\begin{equation}
  h_{(r,s),(m,n)} = h_{(\ell-r,\ell-s),(\ell-m,\ell-n)} \label{eqn:srsym}.
\end{equation}
It is straightforward to express them in terms of the amplitudes $a_{m,k}$:
\begin{align*} 
  h_{(r,s),(m,n)}&=\frac{1}{2}\delta_{r,m}\delta_{s,n}\left(\sum_{k=0}^{m-1}
  a_{m,k}^2 +\sum_{k=0}^{n-1}a_{n,k}^2\right)\\
  &\hspace{-.8cm}+ \delta_{r{+}s,m+n}(a_{m+n+1,r}a_{m+n+1,m}{-}a_{n,m+n-r}a_{r,m}{-}a_{m,r}a_{m+n-r,n}).
\end{align*}
Our aim is to find the numbers $a_{m,n}$ such that \eqref{eqn:srsym} holds. The
structure of the Hamiltonian density implies that we need to solve a set of
quadratic recursion relations what is done in detail in appendix
\ref{app:finda}. We write the result in terms of $q$-integers defined through
\begin{equation*}
  [n] =\frac{q^{n}-q^{-n}}{q-q^{-1}}.
\end{equation*}
We find that supercharge defined leads to a spin-reversal invariant Hamiltonian density if we choose the
coefficients
\begin{equation}
  a_{m,n} = \begin{cases}
    \sqrt{\frac{[m+1]}{[m-n][n+1]}}, & \text{for} \quad m= 1,\dots, \ell, \,n= 1,\dots, m-1,\\
    0, & \text{otherwise},
    \end{cases}
  \label{eqn:constants}
\end{equation}
where the $q$-integers are evaluated at
\begin{equation*}
  q=\exp
(\i \pi/(\ell+2)).
\end{equation*}

Let us sketch how this statement is verified for the matrix elements
of $h$. We start with the diagonal $b_{m,n}=h_{(m,n),(m,n)}$. Consider
the difference
\begin{equation*}
  b_{\ell-m,\ell-n}-b_{m,n} = a_{2\ell-(m+n)+1,\ell-m}^2-a_{m+n+1,m}^2
  +a_{m+1,0}^2+a_{n+1,0}^2-a_{1,0}^2.
\end{equation*}
The construction given in appendix \ref{app:finda} implies that expression is
zero for $m+n = \ell$. We focus on $m+n< \ell$ (the case $m+n > \ell$ can be
obtained through symmetry): the first term on the right-hand side vanishes because $2\ell - (m+n)+1 > \ell +1$.
Hence we need to check only the equation
$a_{m+1,0}^2+a_{n+1,0}^2-a_{1,0}^2=a_{m+n+1,m}^2$. Indeed, it holds what can be
shown from simple properties of the $q$-integers.
Thus, $b_{\ell-m,\ell-n}=b_{m,n}$. Next, let we turn to the off-diagonal matrix
elements for which $(r,s) \neq (m,n)$. Without loss of generality, we
suppose that $r\leq m$. Using the explicit form of the matrix elements $h_{(r,s),(m,n)}$ given above, the spin-reversal symmetry is equivalent to
\begin{align*}
  a_{m+n+1,r} &a_{m+n+1,m} - a_{m,r}a_{m+n-r,n} \\
  & {=}a_{2\ell-(m+n)+1,\ell-r}a_{2\ell-(m+n)+1,\ell-m}{-}a_{\ell-r,\ell-m}a_{\ell-n,\ell+r-(m+n)}.
\end{align*}
It is obvious that this relation holds if $m+n=\ell$. For symmetry reasons, it is sufficient
to consider thus $m+n<\ell$. In this case the first term of the second line vanishes.
Moreover, using $[m+1] a_{\ell-r,\ell-m} = [r+1]a_{m,r}$, and some basic identities for
$q$-integers the equality follows. This concludes the proof of spin-reversal invariance
for the constants \eqref{eqn:constants}.

\begin{subequations}
In order to make it more explicit, we expand the Hamiltonian density $\h$ obtained from the trigonometric supercharge in
components as follows:
\begin{equation}
  \h = \sum_{m_1,m_2=0}^\ell
  \sum_{n=-\min(m_1,\ell-m_2)}^{\min(m_2,\ell-m_1)}\beta_{m_1,m_2}^n|m_1+n,
  m_2-n\rangle\langle m_1,m_2|.
\end{equation}
For $n>0$ the coefficients are given by
\begin{equation}
  \beta^n_{m_1,m_2} = -\frac{1}{[n]}\sqrt{\frac{[M_1+1][M_2-n+1]}{[M_2+1][M_1+n+1]}},
\end{equation}
where $M_1 = \min(m_1,\ell-m_2)$ and ${{M_2}} = \min(m_2,\ell-m_1)$. The case $n<0$ can
easily be obtained from the symmetry of $\h$:
\begin{equation}
  \beta^n_{m_1,m_2} = \beta_{m_2,m_1}^{-n}.
\end{equation}
If $n=0$ we have
\begin{equation}
  \beta^0_{m_1,m_2} = c_{M_1+1}+c_{M_2+1}, \quad 
    c_m = \frac{1}{2} \sum_{n=1}^m \frac{[n+1]-[n-1]}{[n]}.
\end{equation}
The constants $c_m$ appear to be some $q$-analogues of the harmonic numbers, enjoying the
property $c_m = c_{\ell+1-m}$. Notice the similarity of these matrix elements with those of the spin$-\ell/2$ XXX Hamiltonian studied in \cite{crampe:11}.
\label{eqn:deflocalh}
\end{subequations}

\paragraph{Symmetry enhancement.}
Imposing the symmetry under spin reversal leads in a natural way to the question
if a second copy of the supersymmetry generators, constructed from $\bar \q =
(R\otimes R)\q R$, leads to the relations \eqref{eqn:n22susy}. This is indeed the case. In fact, it is sufficient to verify the relations
locally and then reason similarly to proposition \ref{prop:nilpot}. Consider for example the anticommutator between $\bar Q_N$ and $Q_N$. First we check that there is a vector $|\chi\rangle \in V\otimes V$ such that for all $|\psi\rangle \in V$ we have the form of a local boundary
\begin{equation*}
  \left((\q \otimes 1-1 \otimes \q)\bar
\q+(\bar\q \otimes 1-1 \otimes \bar\q) \q\right)|\psi\rangle=|\chi\rangle \otimes |\psi\rangle - |\psi\rangle \otimes |\chi\rangle.
\end{equation*}
for the coefficients \eqref{eqn:constants}. A simple calculation leads to the following decomposition of $|\chi\rangle$ in terms of basis vectors of $V\otimes V$:
\begin{equation*}
  |\chi\rangle = - \sum_{i=0}^\ell \frac{1}{[i+1]}|i,\ell-i\rangle.
\end{equation*}
Second, we apply the argument of telescopic
cancellations in order to prove the relations on the momentum spaces of
interest. This proves $\bar Q_{N+1}Q_N+Q_{N+1}\bar Q_{N}=0$. The other relations are proved similarly.

\paragraph{Examples.}
We illustrate the cases $\ell=1$ and $\ell=2$. The Hamiltonian density is
obtained from
\eqref{eqn:localh}.

For $\ell=1$ we find
\begin{equation*}
  \h = \left(
\begin{array}{cccc}
 1 &  &  &  \\
  & \frac{1}{2} & -1 &  \\
  & -1 & \frac{1}{2} &  \\
  &  &  & 1
\end{array}
\right) = -
\frac{1}{2}(\sigma^1\otimes \sigma^1 + \sigma^2 \otimes \sigma^2
-\frac{1}{2}\sigma^3 \otimes \sigma^3)+\frac{3}{4},
\end{equation*}
where we indicated only non-zero matrix elements.
This is just the local Hamiltonian for the spin$-1/2$ XXZ chain at $\Delta=-1/2$.

For $\ell=2$, we obtain the $9 \times 9$ matrix
\begin{equation*}
 \h = \frac{1}{\sqrt{2}} \left(
\begin{array}{ccc|ccc|ccc}
 2 &  &  &  &  &  &  &  &  \\
  & 2 &  & -1 &  &  &  &  &  \\
  &  & 1 &  & -\sqrt{2} &  & -1 &  &  \\
  \hline
  & -1 &  & 2 &  &  &  &  &  \\
  &  & -\sqrt{2} &  & 2 &  & -\sqrt{2} &  &  \\
  &  &  &  &  & 2 &  & -1 &  \\
  \hline
  &  & -1 &  & -\sqrt{2} &  & 1 &  &  \\
  &  &  &  &  & -1 &  & 2 &  \\
  &  &  &  &  &  &  &  & 2
\end{array}
\right).
\end{equation*}
This expression reminds strongly of the local Hamiltonian for the Fateev-Zamolodchikov chain \cite{fateev:81}. The latter is given by
\begin{equation}
\h_{\text{FZ}}=\sum_{a=1}^3 J_a \left(s^a \otimes s^a {+} (s^a)^2\otimes 1 +
1\otimes (s^a)^2\right)-\sum_{a,b=1}^3 A_{ab}s^a s^b \otimes s^a s^b
\label{eqn:fztwositeham}
\end{equation}
with coupling constants $J_a$ and a symmetric matrix $A_{ab}=A_{ba}$ such that $A_{aa}=J_a$, and
\begin{equation}
    J_1=J_2 = 1, \, J_3 = \cos 2\eta \quad \text{and}\quad A_{12}=1,\, A_{13}=A_{23}=2\cos \eta -1.
    \label{eqn:trigcouplings}
\end{equation}
The spin operators $s^a,\, a=1,2,3$ are the usual spin$-1$ representation of
$\text{su}(2)$. We identify $\eta = \pi/4$ as a good candidate, but the matrices
$\h$ and $\h_{\text{FZ}}$ do not quite coincide. In fact, the
difference is just a simple local gauge transformation:
\begin{equation}
  \h = \frac{1}{\sqrt{2}} u\, \h_{\text{FZ}}\,u^{-1}, \quad u= 1 \otimes e^{\i
\pi s^3}.
  \label{eqn:twistham}
\end{equation}
Therefore, we conclude that for spin one the supersymmetric Hamiltonian is the
\textit{twisted} Fateev-Zamolodchikov Hamiltonian (in the sense that boundary
conditions are $s_{N+1}^\pm = (-1)^N s_1^\pm, s^3_{N+1}= s_1^3$, and thus
antiperiodic for $N$ odd) at $\eta = \pi/4$. We shall see later in section
\ref{sec:spinone} that this is not the only supersymmetry in the
Fateev-Zamolodchikov chain.

\subsection{Local quantum group symmetry}
\label{sec:qgtrig}
We proceed with a more detailed analysis of the Hamiltonian density: we show that a suitably modified
version of $\h$ commutes
with the quantum group $\text{U}_q(\text{sl}_2)$ with $q= \exp\left(
\i\pi/(\ell+2)\right)$. For generic $q = \exp(\i \eta)$ the algebraic relations of its
generators are given by
\begin{equation*}
  [\mathfrak s^+,\mathfrak s^-] = \frac{\sin (2\eta \mathfrak s^3)}{\sin \eta}, \quad [\mathfrak s^3,\mathfrak s^\pm] = \pm \mathfrak s^\pm.
\end{equation*}
A highest-weight spin$-\ell/2$ representation acting on a vector space $V\simeq
\mathbb C^{\ell+1}$ is given by the following action on the basis vectors
$|m\rangle,\, m=0,\dots,\ell$:
\begin{align*}
  \mathfrak{s}^3|m\rangle &= (m-\ell/2)|m\rangle,\\ 
  \mathfrak{s}^{+}|m\rangle &= \sqrt{[\ell-m][m+1]}|m+ 1\rangle,
  \quad \mathfrak{s}^{-}|m\rangle = \sqrt{[\ell-m+1][m]}|m- 1\rangle.
\end{align*}
In order to obtain the action of $\text{U}_q(\text{sl}_2)$ on two sites we use its
Hopf-algebra structure (see e.g. \cite{kassel:94}). The comultiplication is defined on the generators according to
\begin{equation}
  \Delta(\mathfrak s^\pm) = \mathfrak s^{\pm} \otimes q^{-\mathfrak s^3}+ q^{\mathfrak s^3}\otimes \mathfrak s^{\pm },
  \quad \Delta (\mathfrak s^3) = \mathfrak s^3 \otimes 1 + 1 \otimes \mathfrak s^3.
  \label{eqn:defcomult}
\end{equation}
These few definitions are sufficient to prove the hidden quantum group invariance of $\h$.
The relations become more transparent if we consider a slightly modified (twisted)
version $\h'$. To define it, we need the unitary transformation $u = 1\otimes \exp \left(\i \pi
\mathfrak s^3\right)$. We set
\begin{equation*}
  \h' = u \h u^{-1} + \lambda (\mathfrak s^3 \otimes 1 - 1 \otimes \mathfrak
s^3),
\end{equation*}
where $\lambda$ is a constant to be adjusted. We see that $\h'$ decomposes into a local
bulk term $u \h u^{-1}$, and a local boundary term $\lambda (\mathfrak s^3 \otimes 1 - 1
\otimes \mathfrak s^3)$. The density
$\h'$ commutes with $\text{U}_q(\text{sl}_2)$ at $q=\exp \left(\i\pi
/(\ell+2)\right)$, i.e. $[\h',\Delta(\mathfrak s^\pm)]= [\h',\Delta(\mathfrak
s^3)]=0$, provided that we choose
  \begin{equation}
    \lambda = \mathrm{i} \sin \left(\frac{\pi}{\ell+2}\right).
    \label{eqn:lambda}
  \end{equation}
The proof is elementary but tedious, and we only sketch it.
The commutation relation with $\Delta(\mathfrak s^3)$ is trivial. We focus on
$[\h',\Delta(\mathfrak s^+)]=0$. This amounts to prove that
\begin{align*}
&\gamma^{j}_{m+1,n}q^{\ell/2-n}\sqrt{[m+1][m+2]}+\gamma^{j+1}_{m,n+1}q^{m-\ell/2
}\sqrt{[n+1][n+2]}\\
&=\gamma^{j}_{mn}q^{\ell/2+j-n}\sqrt{[m+j+1][m+j+2]}\\
& \qquad+\gamma^{j+1}_{mn}q^{
m+j+1-\ell/2}\sqrt{[n-j][n-j+1]},
\end{align*}
where the $\gamma^j_{mn}$ are defined through the decomposition of the density
\begin{equation*}
  \h' = \sum_{m_1,m_2=0}^\ell
\sum_{n=-\min(m_1,\ell-m_2)}^{\min(m_2,\ell-m_1)}\gamma_{m_1,m_2}^n|m_1+n,m_2-n\rangle\langle
m_1,m_2|,
\end{equation*}
and given in terms of the components of $\h$ by
\begin{equation*}
  \gamma^{j}_{mn} = (-1)^j \beta_{mn}^j+\lambda \delta_{j,0}(m-n).
\end{equation*}
We start with $j=0$ and fix the value of $\lambda$. For $m+n \leq \ell$ we find
\begin{equation*}
 \gamma^0_{m+1,n}-\gamma^0_{m,n}=q^{m+n-\ell}\sqrt{\frac{[n+1]}{[m+1][m+2]}}\left(\sqrt{[n]}\gamma_{mn}^1 -\sqrt{[n+2]}\gamma_{m,n+1}^1 \right)
\end{equation*}
Using the simple identity
\begin{equation}
  [m+n]= q^m[n]+q^{-n}[m],
\label{eqn:qintid}
\end{equation}
the right-hand side simplifies to $q^{m+2}/[m+2]$. The left-hand side is given by
$\lambda+ ([m+3]-[m+1])/(2[m+2])$. Equating both sites leads to \eqref{eqn:lambda}. The
case $m+n \geq \ell$ is treated similarly, and gives of course the same
result. The constant $\lambda$ appears also if we set $j=-1$. Yet this case is
equal to the one just considered because of symmetry of the coefficients
$\gamma_{mn}^j$ under $j\to -j$ (which they inherit from the numbers
$\beta_{mn}^j$). Next, we have to need to check the equation for $j\geq 1$. In this
case, we have $\gamma^j_{mn}=(-1)^j \beta_{mn}^j$. Grouping together the terms
containing on the one hand $\beta^{j}_{m+1,n},\,\beta^{j}_{mn}$ and
$\beta^{j+1}_{m,n+1},\,\beta^{j+1}_{mn}$ on the other, the identity is readily
verified by using the explicit expressions \eqref{eqn:deflocalh} as well as the
identity \eqref{eqn:qintid}. As before, the work for $j<-1$ can be reduced to
this case by using the symmetry of the coefficients under $j \to -j$. Finally,
the proof of $[\h',\Delta(\mathfrak s^-)]=0$ follows the same lines. 

\subsection{Relation to higher-spin XXZ chains}
Looking at the two examples $\ell=1$ and $\ell=2$ given at the end of section
\ref{sec:trigham}, it seems natural that the generic case leads
to spin$-\ell/2$ versions of the XXZ spin chain. The Hamiltonian densities of these spin chains with arbitrary anisotropy were computed explicitly in
\cite{bytsko:03}. Let $C= \mathfrak{s}^-\mathfrak{s}^+ + [\mathfrak{s}^3][\mathfrak{s}^3+1]$ be the quadratic Casimir of $U_q(\text{sl}_2)$. Define $X_\ell = \Delta(C)$, acting on $V\otimes V \simeq \mathbb {C}^{\ell+1}\otimes \mathbb
C^{\ell+1}$. In terms of the generators it is given by
\begin{align*}
  X_{\ell}{ =} \frac{q^{\mathfrak s^3 \otimes 1}}{2}\Bigl[\mathfrak s^+ \otimes\mathfrak s^-{+}\mathfrak s^- \otimes\mathfrak s^+ +\frac{1}{4\sin^2 \eta }\Bigl(\cos \eta\left(q^{-\mathfrak s^3}\otimes q^{\mathfrak s^3}{+}q^{\mathfrak s^3}\otimes q^{-\mathfrak s^3}\right)\\- \cos ((\ell+1)\eta) \left(q^{\mathfrak s^3}\otimes q^{\mathfrak s^3}+q^{-\mathfrak s^3}\otimes q^{-\mathfrak s^3}\right)\Bigr)\Bigr]q^{- 1 \otimes \mathfrak s^3 }.
\end{align*}
The Hamiltonian density of the higher-spin XXZ chain is a polynomial of order $\ell$ in
$X_\ell$. Explicitly it is given by
\begin{equation*}
   h_{\text{XXZ}_{\ell/2}}= 2\sum_{j=1}^\ell c_j\prod_{m=0, m\neq j}^\ell
  \frac{2X_\ell -[m][m+1]}{[j-m][j+m+1]},
\end{equation*}
where $c_j$ is the coefficient defined in \eqref{eqn:deflocalh}. We see that the direct (naive) evaluation of this expression at $q= \exp \i \pi/(\ell+2)$ is problematic some of the $q$-integers in the denominator become zero. Yet, writing out the products it is easy to show that for $\ell=1$ and $\ell=2$ this expression coincides with the modified local Hamiltonian $\h'$. For $\ell=3$, one checks that both Hamiltonians coincide at $q = \exp \i\pi/5$ by using the explicit expressions given in \cite{bytsko:03}. Beyond this value, we verified through explicit numerical
comparison up to $\ell=6$ the following statement:
\begin{conjecture} We have
  \begin{equation*} 
    \h' = h_{\mathrm{XXZ}_{\ell/2}}\quad \text{at} \quad q=\exp \left({\i \pi/(\ell+2)}\right).
  \end{equation*}
\end{conjecture}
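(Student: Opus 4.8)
The plan is to prove the identity inside the algebra $\mathcal A$ of $\text{U}_q(\text{sl}_2)$-module endomorphisms of $V\otimes V$ at $q=q_\ast:=\exp(\i\pi/(\ell+2))$. Both sides lie in $\mathcal A$: the twisted density $\h'$ commutes with $\Delta(\mathfrak s^\pm)$ and $\Delta(\mathfrak s^3)$ by the computation of section \ref{sec:qgtrig}, and $h_{\mathrm{XXZ}_{\ell/2}}$ is a polynomial in the central element $X_\ell=\Delta(C)$. Before comparing them one must give the right-hand side a meaning at $q_\ast$, since there the interpolation formula has poles; they come exactly from the factors $[j-m][j+m+1]$ with $j+m+1=\ell+2$, i.e.\ from the partner $m=\ell+1-j$. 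I would group the two terms labelled $j$ and $\ell+1-j$ and use $c_j=c_{\ell+1-j}$ together with the coincidence $[j][j+1]=[\ell+1-j][\ell+2-j]$, valid at $q_\ast$ because $q_\ast^{\ell+2}=-1$ and \eqref{eqn:qintid}, to show that the singular contributions cancel. The sum then extends to a genuine polynomial $\tilde P(X_\ell)$ of degree $\le\ell$ at $q_\ast$.

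Next I would determine $\mathcal A$ itself. Since $\ell=p-2$ with $p=\ell+2$, the module $V=L(\ell)$ is tilting, hence so is $V\otimes V=L(\ell)\otimes L(\ell)$; for generic $q$ its indecomposable summands are the simple modules of highest weights $0,2,\dots,2\ell$. At $q_\ast$ the wall sits at highest weight $\ell+1$, and the linkage $2j\leftrightarrow 2(\ell+1-j)$ fuses a ``large'' spin $j$ with the ``small'' spin $\ell+1-j$ into an indecomposable tilting module $T(2j)$ with endomorphism algebra $\mathbb C[x]/(x^2)$, while $L(0)$ and, for $\ell$ odd, the Steinberg module $L(\ell+1)$ survive as simple summands. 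Distinct indecomposable summands carry distinct central characters, so there is no homomorphism between them and $\mathcal A$ is commutative; comparing $\dim\mathcal A$ with the degree of the minimal polynomial of $X_\ell$ shows that both equal $\ell+1$, so $\mathcal A=\mathbb C[X_\ell]$. Consequently $\h'=P(X_\ell)$ for a unique polynomial $P$ of degree $\le\ell$, and the conjecture is the polynomial identity $P=\tilde P$.

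By the Chinese remainder theorem for $\mathbb C[X_\ell]$ this identity is equivalent to a finite list of scalar equalities, one set per indecomposable block. On a simple summand both operators act by a scalar: for $h_{\mathrm{XXZ}_{\ell/2}}$ it is $2c_j$ on the spin-$j$ block and $0$ on the singlet, by the interpolation construction; for $\h'$ it is obtained by acting on the highest-weight vector of the summand, which by invariance is an eigenvector, using the diagonal coefficients $\beta^0_{m_1,m_2}$ of \eqref{eqn:deflocalh}, and the equality should follow from $c_m=c_{\ell+1-m}$ and elementary $q$-integer manipulations. On a non-semisimple block $T(2j)$, where $X_\ell$ acts with a single Jordan cell of size two, one must match in addition the coefficient of the nilpotent generator of $\mathbb C[x]/(x^2)$ --- equivalently, the single off-diagonal matrix element of each operator between the two Weyl factors of $T(2j)$. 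For $h_{\mathrm{XXZ}_{\ell/2}}$ this coefficient is the finite limit produced in the first step, a derivative of $\tilde P$, and for $\h'$ it must be extracted from the off-diagonal coefficients $\beta^n_{m_1,m_2}$.

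I expect the last matching --- the first-order (nilpotent) data on the tilting blocks --- to be the genuine obstacle. The semisimple shortcut ``equal eigenvalues on all highest-weight vectors forces equality'' is not available at $q_\ast$ precisely because the numbers $[j][j+1]$ are no longer pairwise distinct, so one must really control the off-diagonal amplitudes, and proving that they agree reduces to a family of quadratic identities among $q$-integers at $q_\ast$ --- the same identities that underpin the numerical checks up to $\ell=6$. A purely computational alternative would be to compare the two $(\ell+1)^2\times(\ell+1)^2$ matrices directly, using the fusion formula of \cite{bytsko:03} for $h_{\mathrm{XXZ}_{\ell/2}}$ and \eqref{eqn:deflocalh} for $\h'$ after cancelling the $[\ell+2]=0$ denominators as above; this bypasses the module-theoretic bookkeeping but ends at the same $q$-integer identities, and it is those --- together with the careful treatment of the degenerate denominators --- that carry the real content.
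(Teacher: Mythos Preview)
The statement is presented in the paper as a \emph{conjecture}, not a theorem. The paper offers no proof: its entire support is an explicit check for $\ell=1,2,3$ and a numerical comparison of the two $(\ell{+}1)^2\times(\ell{+}1)^2$ matrices up to $\ell=6$. There is thus nothing in the paper to compare your argument against; your proposal attempts something the paper does not claim to do.

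Your route through the $\text{U}_q(\text{sl}_2)$-endomorphism algebra of $V\otimes V$ at $q_\ast$ and its tilting decomposition is a sound framework and would, if completed, give a genuine proof. But as written it is a plan rather than a proof: the regularisation of $h_{\mathrm{XXZ}_{\ell/2}}$ at $q_\ast$ is only sketched, the scalar matching on the simple summands is asserted to follow from ``elementary $q$-integer manipulations'' that are not carried out, and you yourself flag the nilpotent matching on the non-semisimple tilting blocks $T(2j)$ as ``the genuine obstacle,'' reducing it to an unproved family of $q$-integer identities at $q_\ast$. Since that family is precisely what encodes the conjecture, the reduction clarifies the structure but does not close the gap. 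One further technical point deserves attention: your equality $\mathcal A=\mathbb C[X_\ell]$ requires the Casimir $X_\ell=\Delta(C)$ to act with \emph{nonzero} nilpotent part on every non-simple $T(2j)$. At $q_\ast$ its eigenvalue $[j][j{+}1]=[\ell{+}1{-}j][\ell{+}2{-}j]$ coincides on both Weyl factors, so non-semisimplicity is not automatic and must be verified; if $X_\ell$ acted as a pure scalar on some block, its minimal polynomial would have degree strictly less than $\ell+1$, the interpolation formula would underdetermine $h_{\mathrm{XXZ}_{\ell/2}}$ there, and a separate argument would be needed.
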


We have little doubt that this relation holds for general $\ell$. Hence, we propose the $\ell$-th trigonometric
model as lattice equivalent of the $\ell$-th model in the $\mathcal N=(2,2)$
superconformal series with central charge $c_\ell = 3\ell/(\ell+2)$. This
is compatible with the central charges for spin$-\ell/2$ XXZ chain
\cite{johannesson:88,alcaraz:90}. The genuinely new feature of the present investigation
is the construction of the two copies of the $\mathcal N=2$ supersymmetry algebra
on the lattice. As mentioned in the introduction, this phenomenon appears only at very specific spin-anisotropy commensurable points. The two supersymmetries at these points are not only a feature of the Hamiltonian: the supercharges
have definite commutation relations with the transfer matrix of the corresponding spin$-\ell/2$ fused vertex models as can quite
directly be seen from the fusion equations and the algebraic Bethe ansatz \cite{kirillov:87}.
Eventually, notice that all the preceding expressions make formally sense in the
limit $\ell \to \infty$. This is equivalent to the limit $q\to 1$ in the
above expressions where $[n]\to n$. The quantum group reduces to $\text{sl}_2$ and we obtain a
non-compact spin chain \cite{beisert:04_2}. On the field-theory side this limit gives
raise to a non-rational superconformal field theory studied in
\cite{fredenhagen:12}. It was shown in \cite{gaberdiel:11} that this type of
non-rational limit can be related to free theories with continuous orbifolds.
The fact that in the limit $\ell \to \infty$ we have a continuum of admissible
twist angles, as can be seen from \eqref{eqn:twists}, supports the conjecture.

\subsection{Witten index}
\label{sec:witten}
We would like to understand if the models defined here above have zero-energy
states. In general, if the Hilbert space $\mathcal H$ is finite, then a
sufficient criterion for the existence of supersymmetry singlets is a
non-vanishing Witten index \cite{witten:82}
\begin{equation*}
  W = {\tr}_{\mathcal H} (-1)^F e^{-\beta H}.
\end{equation*}
Indeed, as all excited states come in doublets with fermion number differing by
one they cancel out of the trace. Only zero-energy states contribute. Therefore
we find $W= {\tr}_{\mathcal H} (-1)^F$ and thus $|W|$ provides a lower bound to
their number. In the present case however, the dynamic nature of the
supercharges makes a direct evaluation of the Witten index as defined above
difficult because $\mathcal H$ is infinite-dimensional. Yet the trigonometric
models allow to take advantage of the conservation of $S_N^3$, and introduce
and compute a Witten index-type object \cite{yang:04}.

Recall that the trigonometric supercharges $Q_N$ at spin$-\ell/2$ change
the total magnetisation by $-(\ell+2)/2$. Hence we conclude that the value of
$J=(\ell+2)N/2+S_N^3$ is conserved by their action. Moreover, we introduce the total
particle number $M= J-N$ (this is the actual particle number in the fermion model). Then
for a doublet $(|\psi\rangle, Q_N|\psi\rangle)$ we find the values $(M,M-1)$. Hence we
may introduce a version of the Witten index, given as the trace of $(-1)^M$ over
subspaces of $\mathcal H$ with constant $J>0$:
\begin{equation}
 W_J= \sum_{N=1}^\infty {\tr}_{\mathcal H_N} (-1)^{M}\delta_{M+N,J}=(-1)^J
 \sum_{N=1}^\infty (-1)^N {\tr}_{\mathcal H_N} \delta_{M+N,J}.
 \label{eqn:wittengf}
\end{equation}
Thus, we have to evaluate the number $\nu_{MN}$ of states in $\mathcal H_N$ with total
particle number $M$. We introduce its generating function $f_N(z) = \sum_{M=0}^\infty
\nu_{MN}z^M$ with respect to $M$ (for given $N$ the sum is actually finite). Assuming
convergence, we can then write the generating function $\mathcal W(z) = \sum_{J=0}^\infty
W_J z^J$ for the Witten index \eqref{eqn:wittengf} in terms of the infinite
series
\begin{equation}
  \mathcal W(z) = \sum_{N=1}^\infty f_N(-z)z^N.
  \label{eqn:genfuncwitten}
\end{equation}

\paragraph{Open chains.}
The trigonometric models may be defined on an open interval because their supercharges
are a solution of \eqref{eqn:nilpot} with $|\chi\rangle =0$. For an open chain it is
straightforward to evaluate the dimensions of the relevant subspaces as no translation
invariance needs to be taken into account. The generating function is given by the simple
expression
\begin{equation*}
  f_N(z) = \left(\frac{1-z^{\ell+1}}{1-z}\right)^N.
\end{equation*}
Thus, we find the generating function for the Witten index to be
\begin{equation*}
  \mathcal W(z) = \frac{1+z}{1-(-z)^{\ell+2}}-1,
\end{equation*}
wherefrom it follows that for $J>0$ we have
\begin{equation*}
  W_J = (-1)^J\times
  \begin{cases}
    1, & J = 0 \mod \ell +2,\\
    -1, & J = 1 \mod \ell+2,\\
    0, & \text{otherwise}.
  \end{cases}
\end{equation*}

\paragraph{Closed chains: periodic boundary conditions.}
This case is more difficult as one has to take into account translation invariance. More
precisely, we would like to count the number of states with momentum zero for $N$ odd,
and momentum $\pi$ for $N$ even, with total particle number $M$ fixed. Each such state
can be generated from a representative $|\mu\rangle=|m_1,\dots,m_N\rangle$ such that
$\sum_{j=1}^N m_j =M$. For $N$ odd we just have to count the number of inequivalent
representatives where two configurations are considered to be equivalent when they are
connected by a rotation. For $N$ even we have an additional constraint. In fact, each simple
configuration $|\mu\rangle=|m_1,\dots,m_N\rangle$ possesses a symmetry factor which is
the smallest non-zero integer $r$ such that $T_N^r|\mu\rangle = |\mu\rangle$. The
requirement to have states with momentum $\pi$ restricts $r$ to be even. Therefore, we
count the number of inequivalent representatives in this subset of configurations. This
is a classical enumeration problem which can be solved by applying Burnside's lemma to
the cyclic group. We defer the details to appendix \ref{app:groups}. There, we derive the generating function for the dimensions $\nu_{MN}$:
\begin{equation}
  f_N(z) = \frac{1}{N}\sum_{m=0}^{N-1}(-1)^{(N+1)m}\left(\frac{1-z^{N(\ell+1)/\gcd(N,m)}}{1-z^{N/\gcd(N,m)}}\right)^{\gcd(N,m)}.
  \label{eqn:genfuncexpl}
\end{equation}
Here $\gcd(a,b)$ denotes the greatest common divisor of $a$ and $b$.
For even $N$ the factor $(-1)^{(N+1)m}=(-1)^{m}$ subtracts the number of configurations
with odd symmetry factors. Notice in particular that for $z=1$ we obtain the total number
of configurations without restrictions on the particle number: $f_N(1) =
N^{-1}\sum_{m=0}^{N-1} ((-1)^{N+1}(\ell+1))^{\gcd(N,m)}$. Given $f_N(z)$ we compute the
generating function of the Witten index using \eqref{eqn:genfuncwitten}, and obtain
the surprisingly simple result
\begin{equation*}
  \mathcal W(z) = \frac{1}{1-(-z)^{\ell+2}}-\frac{1}{1+z}.
\end{equation*}
This implies that
\begin{equation*}
  W_J =
  \begin{cases}
    0, & J = 0 \mod \ell+2,\\
    (-1)^{J+1}, & \text{otherwise.}
  \end{cases}
\end{equation*}
We present a proof of the closed expression  for $\mathcal W(z)$ due to Don Zagier\footnote{The author
would like to thank Gaetan Borot for communicating to him this result.}. It is based on the simple observation\begin{equation*}
  (-1)^{(N+1)m} = (-1)^{N+\gcd(N,m)}.
\end{equation*}
This implies that in \eqref{eqn:genfuncexpl} the different terms in the sum depend on $m$ only through $d=\gcd(N,m)$ what suggests
a change of variables and sum over all divisors of $N$. Indeed, one finds
\begin{equation*}
  f_N(z)=\frac{(-1)^N}{N}\sum_{d|N}(-1)^d\phi\left(\frac{N}{d}\right)\left(\frac{1-z^{N(\ell+1)/d}}{1-z^{N/d}}\right)^{d}.
\end{equation*}
Here $\phi(n)$ is Euler's totient function which counts the number of positive integers $k<n$ such that $k$ and $n$ are coprime. Hence $\phi(N/d)$ equals the number of integers $0 \leq m<N$ such that $d=\gcd(N,m)$. For the generating function $\mathcal W(z)$ we obtain:
\begin{align*}
  \mathcal W(z) 
  &=\sum_{N=1}^\infty\frac{(-z)^N}{N}\sum_{d|N}(-1)^d\phi\left(\frac{N}{d}\right)\left(\frac{1-(-z)^{N(\ell+1)/d}}{1-(-z)^{N/d}}\right)^{d}\\
  &=\sum_{k,d=1}^\infty(-1)^d\frac{(-z)^{kd}\phi(k)}{kd}\left(\frac{1-(-z)^{(\ell+1)k}}{1-(-z)^{k}}\right)^{d}\\
  &=-\sum_{k=1}^\infty\frac{\phi(k)}{k}\ln\left(\frac{1-(-z)^{(\ell+2)k}}{1-(-z)^k}\right).
\end{align*}
From the second to the third line, we introduced a new variable $k=N/d$. It allows to perform the summation over $d$ which yields the third line. The latter can be simplified with the help of the following the identity \cite{hardy:08}:
\begin{equation*}
\sum_{n=1}^\infty \frac{\phi(n)}{n}\ln(1-x^n) = -\frac{x}{1-x}.
\end{equation*}
From this the expression for $\mathcal W(z)$ given above follows immediately.

For both open and closed chains the Witten index calculations show that the trigonometric models possess exact zero-energy ground states because $W_J$ is non-vanishing for certain choices of $J$. Unfortunately, it is not clear how many ground states are present as a function of the number of sites as $J$ mixed $N$ and the magnetisation. In the case of periodic boundary conditions the most likely scenario compatible with the result for the Witten
index is the existence of $\ell+1$ ground states with total magnetisation $S^3_N=-\ell/2, -\ell/2 +1,
\dots, \ell/2-1, \ell/2$ if the number of sites $N$ is odd, and none if $N$ is even. Moreover, our consideration do not cover the case of twisted boundary conditions for which the values for $W_J$ will be different.

\section{Spin$-1$ models}
\label{sec:spinone}
In this section we present different spin chains for spin one with lattice supersymmetry.
These were obtained by solving \eqref{eqn:nilpot} with various additional requirements.
The aim here is not to exhaust all possible solutions but rather to present
some examples related to known spin chains. We start with the Fateev-Zamolodchikov spin
chain and its elliptic generalisation in section \ref{sec:fz}. We point out that it
possesses a supersymmetry for any choice of its coupling constants. Moreover, we discuss some
special points in the space of couplings were additional features emerge. In section
\ref{sec:tj} we proceed with the analysis of a supercharge which generates a spin chain
related to the supersymmetric $t{-}J$ model. In particular, we will present a supersymmetry-preserving elliptic
extension of this model. In section \ref{sec:cyclic}, we discuss a spin
chain with particle number conservation mod $3$ which is related to
combinatorial problems, and interpolates between the trigonometric models for
$\ell=1$ and $\ell=2$.

\subsection{The (elliptic) Fateev-Zamolodchikov chain}
\label{sec:fz}
In this section we investigate in detail the spin$-1$ chain introduced by Fateev
and Zamolodchikov \cite{zamolodchikov:81}, and its elliptic extension
\cite{fateev:81}. Heuristically this is motivated by the following observation.
The trigonometric model with $\ell=2$ coincides with the twisted Fateev-Zamolodchikov
chain at the special point $\eta=\pi/4$ as we saw in the last section.
Numerical simulations \cite{difrancesco:88,baranowski:90} show that the low-energy states in the antiferromagnetic
regime (the one studied here) are in one-to-one correspondence with the
spectrum and field content of a $c=3/2$ superconformal field theory.
It is the second member of the $\mathcal N=2$
superconformal minimal series which can be described in terms of a free boson and a free fermion. The two sets of supercharges carry a $U(1)$-charge $\pm 1$ (with respect to the standard $U(1)$-symmetry which is part of every $\mathcal N=(2,2)$ supersymmetry algebra). This is similar to the fact that the trigonometric supercharges for $\ell=2$ change the magnetisation by $\pm 2$.
Yet, this particular $c=3/2$ theory is known to have a further supercharge which is $U(1)$-neutral and generates an \textit{additional} $\mathcal N=1$ superconformal symmetry \cite{dixon:89}. Here we show that this supersymmetry has a lattice predecessor.
It is quite different from the trigonometric supercharges for spin one because it does -- quite consistently -- not change the magnetisation.

We proceed as follows. First, we will prove the existence of the lattice supersymmetry in
section \ref{sec:scfz}, show that it is present even off the critical point and furthermore not restricted to any particular anisotropy of the chains. The
sections \ref{sec:dualityfz} and \ref{sec:qgfz} deal with some local and global
symmetries of the Hamiltonian, and their interplay with the supercharges.
The existence of supersymmetry singlets is discussed in \ref{sec:singlet}.
In section \ref{sec:combinatorics} we make a series of conjectures on the nature of these zero-energy states: in particular, we conjecture a relation to the weighted enumeration
of alternating sign matrices. Eventually we come back to the point $\eta = \pi/4$ in
section \ref{sec:extendedsusy}, where we show that the lattice version of the $\mathcal N=(2,2)$ supersymmetry
is present even away from the trigonometric point.

\subsubsection{Supercharges and Hamiltonian}
\label{sec:scfz}
The type of Hamiltonian which we would like to study is locally of the form
\eqref{eqn:fztwositeham}, and therefore
\begin{align}
  H_N =& \sum_{j=1}^N\left(\sum_{a=1}^3 J_a \left(s^a_{j}s^a_{j+1} + 2 (s^a_j)^2\right)-\sum_{a,b=1}^3 A_{ab}s^a_j s^b_js^a_{j+1} s^b_{j+1}\right).
  \label{eqn:spin1ham}
\end{align}
Here $J_a$ and $A_{ab},\, a,b=1,2,3$ are constants to be fixed. The operators
$s^a,\, a = 1,2,3$ are the spin-$1$ representation of $\text{su}(2)$. In the
usual basis $\{|0\rangle,|1\rangle,|2\rangle\}$ where $s^3$ is diagonal they are given by $3\times 3$ matrices
\begin{equation*}
  s^3 = \left(\begin{array}{ccc}
    -1 & 0 & 0\\
    0 & 0 & 0\\
    0 & 0 & 1
  \end{array}\right), \quad
  s^+ = {\sqrt{2}}\left(\begin{array}{ccc}
    0 & 0 & 0\\
    1 & 0 & 0\\
    0 & 1 & 0
  \end{array}\right), \quad
  s^- = {\sqrt{2}}\left(\begin{array}{ccc}
    0 & 1 & 0\\
    0 & 0 & 1\\
    0 & 0 & 0
  \end{array}\right),
\end{equation*}
with $s^\pm = s^1 \pm \i s^2$. Thus we have $s^3|m\rangle = (m-1)|m\rangle$, and
furthermore $s^+|0\rangle = \sqrt{2}|1\rangle$ and $s^+|1\rangle =
\sqrt{2}|2\rangle$.

We want to show that there is a local supercharge $\q$ which generates the local
Hamiltonian density of \eqref{eqn:spin1ham}. In order to construct it, we need
the three states
\begin{equation*}
  |\psi_0\rangle = |10\rangle - |01\rangle, \quad |\psi_1\rangle = |20\rangle - |02\rangle, \quad
  |\psi_2\rangle = |21\rangle - |12\rangle,
\end{equation*}
which span a spin$-1$ module of $\text{su}(2)$ on two sites. The local operation $\q$ is
defined through
\begin{align*}
  \q|0\rangle &= x|\psi_0\rangle - y|\psi_2\rangle,\\ \q|1\rangle &= |\psi_1\rangle,\\ \q|2\rangle &= x|\psi_2\rangle - y|\psi_0\rangle,
\end{align*}
where $x$ and $y$ are two real parameters.
An explicit check shows that it leads to a nilpotent $Q_N$ as \eqref{eqn:nilpot} holds
with vector $|\chi\rangle = (x^2-y^2)|11\rangle
-x(|20\rangle+|02\rangle)-y(|22\rangle+|00\rangle)$. Writing out the Hamiltonian density
$\h$, we thus identify the coupling constants. We find that $A_{ab}=A_{ba}$. The
diagonal elements are given by $A_{aa}=J_a$. The remaining couplings are quadratic
polynomials in $x,y$:
\begin{align*}
 & J_1 = 1-4xy, \quad J_2 = 1+4 xy, \quad J_3 = 2(x^2+y^2)-1,\\
  & A_{12}= 1- 4y^2,\quad A_{13} = (2x-1)(1-2y),\quad A_{23} = (2x-1)(1+2y).
\end{align*}
The Hamiltonian density $h$ is invariant under spin reversal as it should be because
$(R\otimes R)\q R=-\q$. This implies that no second copy of the supersymmetry algebra can
be constructed from $\q$ through spin reversal. From \eqref{eqn:condtwist} we conclude
that the only possible twist angles are $\phi =0, \pi$ (the latter will play an important
role in what follows). The case $y=0$, which corresponds to the trigonometric limit, however admits \textit{arbitrary} twist angles, and the supersymmetry is
realised provided that \eqref{eqn:condmag} holds. In terms of the spin chain, twisted
boundary conditions correspond to
\begin{equation*}
  s_{N+1}^3= s_1^3, \quad s_{N+1}^\pm = e^{\pm \i \phi}s_1^\pm.
\end{equation*}
The fact that $\phi$ is arbitrary comes from the property that the local supercharge $\q$ is
neutral for $y=0$: it does not change the magnetisation unlike the trigonometric supercharge
at spin one defined section \ref{sec:trigonometric}.

\paragraph{Elliptic parametrisation, integrability.}
Fateev showed that the Hamiltonian \eqref{eqn:spin1ham} is integrable along a
two-parameter submanifold in the space of couplings \cite{fateev:81}. The coupling
constants are parametrised by Jacobian elliptic functions. The $J_a$ take the values
\begin{equation*}
  J_1=\frac{2}{1+\dn 2\gamma},\quad J_2=\frac{2\dn 2\gamma}{1+\dn 2\gamma}, \quad J_3=\frac{2\cn 2\gamma}{1+\dn 2\gamma}.
\end{equation*}
Furthermore, the symmetric $3 \times 3$ matrix $A$ has diagonal elements $A_{aa}= J_a$, and off-diagonal
elements
\begin{align*}
  &A_{12} =\frac{2\sn 2\gamma}{1+\dn 2\gamma}\left( \frac{\cn \gamma}{\sn
\gamma} -\frac{\cn 2\gamma}{\sn 2\gamma}\right), 
\quad A_{13} = \frac{2\sn 2\gamma}{1+\dn 2\gamma}\left(\frac{1}{\sn \gamma}
-\frac{1}{\sn 2\gamma}\right),\\
  & A_{23}= \frac{2\sn 2\gamma}{1+\dn 2\gamma}\left(\frac{\dn \gamma}{\sn \gamma} -\frac{\dn 2\gamma}{\sn 2\gamma}\right).
\end{align*}
Here the two relevant parameters are thus $\gamma=2K\eta/\pi$, where $0\leq \eta \leq
\pi$ is the so-called \textit{crossing parameter}, and the \textit{elliptic nome}
$p=\exp(-\pi K'/K)$, where $K,\i K'$ are the quarter-periods of the elliptic functions
\cite{abramowitz:70}. We call the generic case with non-vanishing nome $p$ elliptic. The
trigonometric limit corresponds to $p\to 0$ (where $K\to \pi/2$): the coupling constants
reduce to the values \eqref{eqn:trigcouplings}.

In fact, it is just a matter of comparison of coefficients that the choice
\begin{equation*}
  x= \frac{(1+\dn \gamma)\sn 2\gamma}{2(1+\dn 2\gamma)\sn \gamma},
  \quad y= \frac{(1-\dn \gamma)\sn 2\gamma}{2(1+\dn 2\gamma)\sn \gamma}
\end{equation*}
leads to supercharges that generate the Hamiltonian density of the spin chain
\eqref{eqn:spin1ham} along the integrable line. Hence we conclude that the Hamiltonian of
the (elliptic) Fateev-Zamolodchikov spin chain is part of an $\mathcal N=2$ lattice
supersymmetry algebra \textit{for any choice of the parameters $x$ and $y$}, in sharp
contradistinction to the XYZ chain at spin$-1/2$ where the lattice supersymmetry is only
present at a the special point $\eta=\pi/3$.

\subsubsection{Rotational invariance and duality transformations}
\label{sec:dualityfz}
Motivated by the discussion on duality transformations for the spin-$1/2$ XYZ chain at
the combinatorial point given in section \ref{sec:example}, we discuss the symmetries of
the present supercharge with respect to global rotations. As before, we denote by
$\rho^a(\theta) = \exp \left(\i \theta s^a\right)$ a local rotation about the $a-$axis b, where $s^a$ is one of the $\text{su}(2)$-generators in the spin$-1$
representation, and by $\Omega_N^a(\theta) = \prod_{j=1}^N \rho^a_j(\theta)$ a global
rotation. We start with generic values for $x,y$ for which no simple rotational invariance is
expected. Writing $\q = \q(x,y)$ we find that
\begin{subequations}
\begin{align}
  \left(\rho^1\left(\frac{\pi}{2}\right)\otimes \rho^1\left(\frac{\pi}{2}\right)\right)\q(x,y) \rho^1\left(-\frac{\pi}{2}\right) &= (x-y)\q\left(x_+,y_+\right),
  \label{eqn:first}\\
  \left(\rho^2\left(\frac{\pi}{2}\right)\otimes \rho^2\left(\frac{\pi}{2}\right)\right)\q(x,y) \rho^2\left(-\frac{\pi}{2}\right) &= (x+y)\q\left(x_-,y_-\right),
  \label{eqn:second}\\
  \left(\rho^3\left(\frac{\pi}{2}\right)\otimes \rho^3\left(\frac{\pi}{2}\right)\right)\q(x,y) \rho^3\left(-\frac{\pi}{2}\right) &= \q\left(x,-y\right).
\end{align}
\end{subequations}
where we abbreviated
\begin{equation*}
  x_\pm = \frac{x\pm y +1}{2(x\mp y)}, \quad y_\pm = -\frac{x\pm y -1}{2(y\mp x)}.
\end{equation*}
In the case of untwisted boundary conditions $\phi=0$ we conclude thus that the
Hamiltonian transforms according to
\begin{align*}
  \Omega_N^1(\pi/2)H_N(x,y)\Omega_N^1(-\pi/2) &=
  (x-y)^2H_N\left(x_+,y_+\right),\\
  \Omega_N^2(\pi/2)H_N(x,y)\Omega_N^2(-\pi/2) &=
  (x+y)^2H_N\left(x_-,y_-\right),\\
  \Omega_N^3(\pi/2)H_N(x,y)\Omega_N^3(-\pi/2) &= H_N\left(x,-y\right).
\end{align*}
We see that up to rescaling the spectrum is invariant under
the transformations in parameter space induced by these rotations. Unfortunately, if we
choose the twist angle $\phi=\pi$, which will prove to be the most interesting choice, then only the third transformation
survives. We discuss briefly some special points: let us suppose that $x$ is left
invariant by
\eqref{eqn:first}. Then we have either \textit{(i)} $x=y+1$, which can be transformed by
\eqref{eqn:second} according to $(y+1,y) \to (1/(2y+1),0)$ or \textit{(ii)}
$x=-1/2$. If we set $y=\zeta/2$ in the second case then we see that
\eqref{eqn:first} leads to $\zeta \to (3-\zeta)/(1+\zeta)$, and
\eqref{eqn:second} to $\zeta \to (\zeta+3)/(\zeta-1)$. We recognise here the
homographic transformation which appears in the spin$-1/2$ XYZ chain at the
combinatorial point as discussed in section
\ref{sec:example}.

Eventually, there is a symmetry relating $x$ and $-x$. Consider the projector
$p^\pm = s^3(s^3\mp 1)/2$ on the spin states $0$ and $2$ respectively. It follows that
\begin{equation*}
  \left(e^{\i\pi p^\pm} \otimes e^{\i\pi p^\pm}\right) \q(x,y) e^{-\i\pi p^\pm} = -\q(-x,y),
\end{equation*}
what implies that the Hamiltonian enjoys the transformation property
\begin{equation*}
  \Omega_N^\pm H_N(x,y) (\Omega_N^\pm)^{-1} = H_N(-x,y), \quad \Omega_N^{\pm} = \prod_{j=1}^N e^{\i\pi p^\pm_j}.
\end{equation*}

All these symmetries are simple in the sense that they can be built as tensor products of local transformations.

Less obvious seems the following statement:
\begin{conjecture}
  For the twist angle $\phi = \pi$ the spectrum is invariant under the exchange of $x$ and $y$.
  \label{conj:spectralsymmetry}
\end{conjecture}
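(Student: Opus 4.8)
The plan is to reduce the statement to a symmetry of the \emph{integrable} Fateev--Zamolodchikov transfer matrix, using the fact established above that the two-parameter family $\q(x,y)$ coincides with Fateev's integrable family \cite{fateev:81}. First I would re-express $x$ and $y$ through the elliptic data $(\gamma,p)$, $\gamma=2K\eta/\pi$, and observe that the exchange $x\leftrightarrow y$ is nothing but a shift of the crossing parameter by an imaginary half-period, $\gamma\mapsto\gamma+2\i K'$. Indeed, from the quasi-periodicities $\sn(\gamma+2\i K')=\sn\gamma$ and $\dn(\gamma+2\i K')=-\dn\gamma$ (hence $\sn(2\gamma+4\i K')=\sn2\gamma$ and $\dn(2\gamma+4\i K')=\dn2\gamma$) one checks that this shift leaves the nome $p$ and the couplings $J_1,J_2,J_3,A_{13}$ unchanged, moves only $A_{12}$ and $A_{23}$, and interchanges the explicit expressions for $x$ and $y$ given in Section~\ref{sec:fz}. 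The conjecture thus becomes: the spectrum of the twisted ($\phi=\pi$) Fateev--Zamolodchikov Hamiltonian \eqref{eqn:spin1ham} is invariant under $\gamma\mapsto\gamma+2\i K'$.

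Second, I would prove this at the level of the commuting transfer matrices. The Hamiltonian $H_N(x,y)$ is, up to an additive constant and an overall normalisation, the logarithmic derivative at the shift point of the family $t_N(u\mid\gamma,p)$ built from Fateev's spin$-1$ elliptic $R$-matrix $R(u\mid\gamma,p)$ with the boundary twist dictated by $\phi=\pi$ (antiperiodic for $N$ odd). The key input is the transformation of $R$ under $\gamma\mapsto\gamma+2\i K'$: by the quasi-periodicity of the theta functions entering its matrix elements, $R(u\mid\gamma+2\i K',p)$ should equal $R(u\mid\gamma,p)$ up to a scalar, a conjugation $(g\otimes g)\,\cdot\,(g\otimes g)^{-1}$ by a fixed invertible diagonal matrix $g$ on $V$, and a collection of sign factors. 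Carrying this through the monodromy matrix, the $g$-conjugations telescope into an overall similarity transformation by $g^{\otimes N}$ (the auxiliary-space conjugation dropping out of the trace), while the residual signs pile up only at the seam of the chain. One would then check that this leftover sign is precisely cancelled by the antiperiodic twist $\phi=\pi$ --- and is \emph{not} cancelled for $\phi=0$, which explains why the symmetry is peculiar to this twist. Since conjugate transfer matrices are isospectral, $H_N(x,y)$ and $H_N(y,x)$ have the same spectrum.

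\textbf{The main obstacle} is step two: one has to pin down exactly which quasi-periodic factors the entries of Fateev's $R$-matrix pick up under $\gamma\mapsto\gamma+2\i K'$, verify that they really assemble into a similarity transformation dressed only by signs (rather than, say, being forced to accompany an extra shift of the spectral parameter $u$ that is not a period, which would spoil the identification with the Hamiltonian), and show that the leftover sign is exactly the antiperiodic boundary twist. A complementary route, closer in spirit to the Witten conjugation argument of Section~\ref{sec:example}, would be to seek an invertible $U_N$ on $\mathcal H_N$ with $U_{N+1}Q_N(x,y)U_N^{-1}=Q_N(y,x)$ modulo a scalar and a gauge term of the type in Proposition~\ref{prop:redundant}, whence $H_N(y,x)=U_NH_N(x,y)U_N^{-1}$; such a $U_N$ cannot be a tensor product of local operators, since all symmetries of that form have been classified above and none realises $x\leftrightarrow y$, so producing it explicitly is the real difficulty --- which is why the statement is left as a conjecture.
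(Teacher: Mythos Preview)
The paper does \emph{not} prove this statement: it is stated explicitly as a conjecture, and the only evidence offered is that for $N=2$ and $N=3$ an intertwiner $V(x,y)$ with $V(x,y)H_N(x,y)=H_N(y,x)V(x,y)$ was found by hand, together with the observation that this $V$ is not a tensor product of local operators. There is therefore no ``paper's own proof'' to compare against; your proposal is an attempt to go beyond the paper.

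Your first step --- identifying $x\leftrightarrow y$ with the imaginary half-period shift $\gamma\mapsto\gamma+2\i K'$ --- is correct and is a genuine observation not in the paper. The quasi-periodicities $\sn(\gamma+2\i K')=\sn\gamma$, $\cn(\gamma+2\i K')=-\cn\gamma$, $\dn(\gamma+2\i K')=-\dn\gamma$ indeed swap the numerators $(1\pm\dn\gamma)$ in the expressions for $x$ and $y$ while leaving $\sn2\gamma,\dn2\gamma$ fixed; and one checks directly from the $(x,y)$-parametrisation that $J_1,J_2,J_3,A_{13}$ are symmetric in $x,y$ whereas $A_{12},A_{23}$ are not, exactly as you claim. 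This reframes the conjecture as a statement about the behaviour of Fateev's elliptic $R$-matrix under an imaginary half-period shift of the crossing parameter, which is a natural setting for such a symmetry.

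Your second step, however, is not a proof but a programme, and you are candid about this. The sticking point you name is real: one must compute how every theta-function entry of the 21-vertex $R$-matrix transforms under $\gamma\mapsto\gamma+2\i K'$, and show that the resulting exponential and sign factors organise into a global conjugation $g^{\otimes N}$ times a seam factor that is exactly absorbed by the $\phi=\pi$ twist. This is plausible but delicate; if the leftover factor were, say, a spectral-parameter-dependent phase rather than a pure sign, the argument would collapse. A further point you do not raise: you would also need to argue that the elliptic parametrisation $(\gamma,p)\mapsto(x,y)$ covers the full real $(x,y)$-plane (or at least a set to which the conjugation arguments of Section~\ref{sec:singlet} can propagate the result). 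Until these are carried out, the statement remains a conjecture --- which is precisely how the paper presents it.
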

This statement implies the existence of a unitary transformation $V(x,y)$ with the intertwining property $V(x,y) H_N(x,y) = H_N(y,x) V(x,y)$ with $V(x,y)^{-1} = V(x,y)^\dagger = V(y,x)$. The explicit construction of the simplest examples with $N=2$ and $N=3$ sites shows that $V(x,y)$ is not a tensor product of local transformations, even though the eigenvalues of the Hamiltonian density are invariant under the exchange of $x$ and $y$.

Finally, we turn to the trigonometric limit $y=0$. We have a full invariance with respect to rotations
about the $3-$axis: $\left(\rho^3(\theta)\otimes \rho^3(\theta)\right)\q
\,\rho^3(-\theta) = \q$. This is nothing but the $\text{U}(1)$-symmetry which leads to the
conservation law for the total magnetisation in the Fateev-Zamolodchikov chain. It becomes a full
$\text{SU}(2)$ symmetry at the point $x=1,y=0$ which corresponds to the XXX chain at
spin one \cite{babujian:82,babujian:83}. Indeed, the supercharges itself are
invariant because $\left(\rho^a(\theta)\otimes
\rho^a(\theta)\right)\q(1,0) \,\rho^a(-\theta) = \q(1,0)$ for all $a=1,2,3$ at
this point. The
invariance is due to the fact that the supercharge maps a spin$-1$ module onto a spin$-1$
module, the latter being realised on two sites.

\subsubsection{Local quantum group symmetry}
\label{sec:qgfz}
The fact that the local supercharge $\q$ makes explicit the $\text{SU}(2)$ symmetry at
$x=1,\,y=0$ leads quite naturally to the question if for $x=(q+q^{-1})/2,\, y=0$ the
local quantum group invariance of the Fateev-Zamolodchikov chain may be seen from the
supersymmetry. This is indeed the case as we show now. To this end, we consider the
gauge-equivalent local supercharge
\begin{equation*}
  \q' = \q + \q_\beta, \quad \q_\beta|m\rangle = |\beta\rangle \otimes |m\rangle + |m\rangle \otimes |\beta\rangle \quad \text{with} \quad |\beta\rangle = \frac{1}{2}(q^{-1}-q)|1\rangle.
\end{equation*}
As noted before, we add only a redundant part which has no global effect. The new local supercharge acts according to
\begin{align*}
  \q'|0\rangle &= q^{-1}|10\rangle - q|01\rangle,\\
  \q'|1\rangle &= |20\rangle - |02\rangle + (q^{-1}-q)|11\rangle,\\
  \q'|2\rangle &= q^{-1}|21\rangle - q|12\rangle.
\end{align*}
The new supercharge $\q'$ solves \eqref{eqn:nilpot} with $|\chi\rangle = - (q|02\rangle + q^{-1}|20\rangle - |11\rangle)$ which is the quantum-group singlet. Moreover, on the right-hand side of the preceding equations, we recognise the basis states of the spin$-1$ module of
$\text{U}_q(\text{sl}_2)$ realised on two sites. We thus see that here the supersymmetry is a consequence of the structure of the $q$-Clebsch-Gordon coefficients. Moreover, the given structure implies immediately that $\q'$
commutes with its generators in the sense that
\begin{equation*}
  \q'\mathfrak s^\pm = \Delta (\mathfrak s^\pm)\q', \quad \q'\mathfrak{s}^3 = \Delta (\mathfrak s^3)\q',
\end{equation*}
where $\Delta$ is the comultiplication defined in \eqref{eqn:defcomult}. According to
\eqref{eqn:gauge} the corresponding Hamiltonian densities $\h$ and $\h'$ differ by a local
boundary term. Indeed, we find
\begin{equation*}
  \h' = \h + \frac{q^{-2}-q^2}{2}\left(\mathfrak s^3 \otimes 1 - 1 \otimes
\mathfrak s^3\right).
\end{equation*}
This is precisely the Hamiltonian density of the quantum group invariant
Fateev-Zamolodchikov spin chain studied in \cite{pasquier:90}.

\subsubsection{Supersymmetry singlets}
\label{sec:singlet}

Given that the elliptic Fateev-Zamolodchikov spin chain has a supersymmetry for any choice of the parameters $x,y$ we ask if any solutions to the equations $Q_N|\Phi\rangle =0$ and $Q_{N-1}^\dagger|\Phi\rangle =0$ exist, and how the space of solutions depends on the parameters.  We use the conjugation technique described in section \ref{sec:example} in order to relate different values of $x,y$. Consider the operator $m=m(\lambda, \theta)$ defined through
\begin{align*}
  m(\lambda, \theta)|0\rangle&= \frac{1}{\sqrt{\lambda}}\left(\cosh \left(\frac{\theta}{2} \right)|0\rangle + \sinh\left(\frac{\theta}{2} \right) |2\rangle\right),\\ \quad m(\lambda, \theta)|1\rangle &=|1\rangle,\\  m(\lambda, \theta)|2\rangle &= \frac{1}{\sqrt{\lambda}}\left(\sinh \left(\frac{\theta}{2} \right) |0\rangle + \cosh\left(\frac{\theta}{2} \right) |2\rangle\right).
\end{align*}
It follows that
\begin{equation*}
  \left(m(\lambda, \theta) \otimes m(\lambda, \theta)\right)\q(x,y) m(\lambda, \theta)^{-1} = \lambda^{-1}\q(x',y'),
\end{equation*}
where
\begin{equation}
  {x' \choose y'} = \lambda
  \left(
  \begin{array}{cc}
    \cosh \theta & \sinh \theta\\
    \sinh\theta & \cosh \theta
  \end{array}
  \right)
  {x \choose y}.
  \label{eqn:trsfparams}
\end{equation}
We thus see that $\lambda$ plays the role of a dilatation parameter, whereas $\theta$ is the angle of a ``Lorentz transformation'' on the parameter space. We allow negative values for $\lambda$, and thus include reflections with respect to the origin. As in the case of the XYZ chain studied in section \ref{sec:example}, the transformation can be promoted to the full chain: we have $M_N(\lambda, \theta) = m(\lambda, \theta)^{\otimes N}$. The conjugation of the supercharges gives
\begin{equation*}
  M_{N+1}(\lambda, \theta)Q_N(x,y)M_N(\lambda, \theta)^{-1} = \lambda^{-1} Q_N(x',y').
\end{equation*}
Notice that $M_N(\lambda, \theta)$ preserves the momentum spaces where $T_N(\phi) \equiv (-1)^{N+1}$ for $\phi = 0,\pi$. Hence, from the conjugation argument given in section \ref{sec:example} we conclude that $H_N(x,y)$ and $H_N(x',y')$ have the same number of zero-energy states. Thus, we need to understand which points can be mapped onto each other. It is easy to see that $(x')^2- (y')^2 = \lambda^2 (x^2 - y^2)$. This divides the $(x,y)$-plane in \textit{(i)} ``time-like'' points $x^2 - y^2<0$, \textit{(ii)} ``light-like'' points $x^2 = y^2$ and \textit{(iii)} ``space-like'' points $x^2 - y^2 >0$ with respect to the origin $x=y=0$. The latter plays a distinguished rule, as it is the only fixed point. The transformation \eqref{eqn:trsfparams} preserves the nature of the points, i.e. if $(x,y)$ is time-like then $(x',y')$ is, and so on. It follows that for all time-like choices of the coupling constants $x,y$, the Hamiltonian has the same number of zero-energy states. The same applies to all space-like choices for $x,y$. The conjecture \ref{conj:spectralsymmetry} implies then that for all choices with $|x| \neq |y|$ the Hamiltonian has the same number of supersymmetry singlets. Therefore, it is sufficient to analyse a single, not light-like point in the $(x,y)$-plane. As opposed to the XYZ chain studied in section \ref{sec:example} there does not seem to be a simple choice however. We content ourselves with the following statement:
\begin{conjecture}
  For any number of sites $N$ and twist angle $\phi = \pi$ there exists a choice of parameters $x,y$ with $x\neq y$ where the Hamiltonian has a single zero-energy state in the sector $S_N^3 = 0 \mod 2$. Furthermore, for the twist angle $\phi =0$ there is a choice of parameters $x,y$ with $x\neq y$ where the Hamiltonian has no zero-energy states.
\end{conjecture}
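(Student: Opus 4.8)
The plan is to reduce both claims, via Witten's conjugation argument, to the evaluation of the zero-energy count at a small number of conveniently chosen points. Recall from Section~\ref{sec:singlet} that $M_N(\lambda,\theta)=m(\lambda,\theta)^{\otimes N}$ intertwines $Q_N(x,y)$ with $\lambda^{-1}Q_N(x',y')$, preserves the momentum sectors for $\phi=0,\pi$ as well as $S_N^3\bmod 2$, and acts on the couplings through \eqref{eqn:trsfparams} with $(x')^2-(y')^2=\lambda^2(x^2-y^2)$. As $(\lambda,\theta)$ ranges over $(\mathbb R\setminus\{0\})\times\mathbb R$, the orbits of this action are exactly the open ``time-like'' region $x^2<y^2$, the open ``space-like'' region $x^2>y^2$, the punctured ray $y=x\neq 0$, the punctured ray $y=-x\neq 0$, and the origin. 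By Witten's argument the dimension of $\mathfrak H_N$, in each fixed sector $S_N^3\bmod 2$, is constant along every orbit, so it suffices to determine it at one representative of each; the constraint $x\neq y$ leaves the time-like region, the space-like region, and the ray $y=-x$ at our disposal.

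For $\phi=\pi$ one can collapse two of these cases: by Conjecture~\ref{conj:spectralsymmetry} the spectra at $(x,y)$ and $(y,x)$ agree, and $(x,y)\mapsto(y,x)$ exchanges the time-like and space-like regions, so they carry the same zero-energy count. A natural representative of the space-like orbit is $y=0$ (with $x\neq 0$), where the local supercharge conserves the magnetisation and the lattice supersymmetry refines to each sector of fixed $S_N^3$. Taking moreover $x=\cos(\pi/4)$, i.e.\ crossing parameter $\eta=\pi/4$, the Hamiltonian coincides, up to a local gauge transformation and an overall positive factor, with the $\ell=2$ trigonometric model with periodic boundary conditions (cf.\ \eqref{eqn:twistham}), whose Witten index was computed in Section~\ref{sec:witten}: $W_J=0$ for $J\equiv 0\bmod 4$ and $W_J=(-1)^{J+1}$ otherwise. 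I would combine this with the magnetisation refinement to argue that for odd $N$ there is precisely one zero-energy state in each of the sectors $S_N^3\in\{-1,0,1\}$, hence exactly one in $S_N^3\equiv 0\bmod 2$; for even $N$ this representative has no zero-energy state at all, and the remaining even-$N$ case would be completed by a separate analysis of the ray $y=-x$ (e.g.\ at $x=-1/2$, $y=1/2$, where the local supercharge acquires a kernel and the homographic structure of Section~\ref{sec:example} reappears). For $\phi=0$ I would instead seek an orbit on which a direct lower bound on $H_N$, in the spirit of the classical-Ising reduction of Section~\ref{sec:example}, forces $H_N$ to be strictly positive -- for instance a point where, after conjugation, the restriction to the momentum sector $t_N=(-1)^{N+1}$ removes the (Bethe-ansatz) ground state of the underlying spin-$1$ XXZ chain -- so that no zero-energy state survives for any $N$.

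The main obstacle is turning each such heuristic -- a Witten-index value, a known Bethe-ansatz ground state -- into an exact statement about $\dim\mathfrak H_N=\dim(\text{ker}\,Q_N/\text{im}\,Q_{N-1})$ valid for \emph{all} $N$ and in the precise sector at hand. Unlike the XYZ chain of Section~\ref{sec:example}, there appears to be no point of the $(x,y)$-plane at which $H_N$ degenerates into a classical (diagonal) Hamiltonian whose kernel can simply be read off; one is therefore forced either to exhibit the zero-energy state explicitly, for example as a matrix product state at the chosen point, together with a proof that it exhausts the cohomology, or to produce a cohomological argument (a contracting homotopy, or a spectral sequence) that pins down the cohomology of the lattice-supercharge complex degree by degree. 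Either route requires genuinely new input. Two further caveats: the collapse of the time-like and space-like orbits for $\phi=\pi$ relies on Conjecture~\ref{conj:spectralsymmetry}, which is itself open, and the reduction at $\eta=\pi/4$ invokes the (also conjectural) ground-state count of the $\ell=2$ trigonometric model from Section~\ref{sec:witten}; a self-contained proof would have to establish at least one of these auxiliary facts along the way, or else control the cohomology directly at some especially tractable representative of each relevant orbit.
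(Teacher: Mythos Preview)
The paper does not prove this statement: it is stated as a \emph{conjecture}, supported only implicitly by exact diagonalisation for small $N$, and the text immediately continues with ``If this conjecture holds then\ldots''. There is therefore no proof in the paper to compare your attempt against.

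Your proposal is not a proof either, and to your credit you say so explicitly. The orbit decomposition and the use of Witten's conjugation argument are exactly what the paper does in Section~\ref{sec:singlet}, so that part of your reduction is correct and matches the paper. However, the step where you pick the representative $y=0$, $x=\cos(\pi/4)$ and appeal to the Witten index of the $\ell=2$ trigonometric model cannot close the argument: the Witten index $W_J$ gives only a \emph{lower bound} on the number of singlets, and the paper itself stresses (end of Section~\ref{sec:witten}) that the scenario ``$\ell+1$ ground states for $N$ odd, none for $N$ even'' is merely the most likely one compatible with $W_J$, not a consequence of it. You acknowledge this, but it means your plan for $\phi=\pi$ rests on two open conjectures (Conjecture~\ref{conj:spectralsymmetry} and the trigonometric ground-state count) rather than one. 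For $\phi=0$ your proposal is essentially a wish: no point of the $(x,y)$-plane is known at which $H_N$ becomes classical in the sense of the Ising reduction of Section~\ref{sec:example}, and the paper does not suggest one.

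In short: your outline reproduces the paper's own reduction but does not go beyond it, and the honest assessment in your last paragraph is accurate --- a genuine proof would require either an explicit construction of the singlet together with a cohomology computation, or new structural input that neither you nor the paper currently supplies.
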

If this conjecture holds then it implies the existence of exactly one zero-energy state for all choices with $|x|\neq |y|$ if the twist angle is set to $\phi = \pi$, and none if $\phi = 0$. The existence of a singlet for $x\neq 0, y=0$ for the twisted chain was already observed in
\cite{frahm:11_2} where a staggered spin chain based on the quantum group
$\text{U}_q(\mathfrak{sl}(2|1))$ was studied. Its spectrum contains the spectrum
of the twisted Fateev-Zamolodchikov spin chain.

Next, let us consider the light-like points with $|x|=|y|,\, x\neq 0$. We show in this case
that there is at least one ground state for the twist angle $\phi=\pi$. We start with even $N$.
Invoking the conjecture \ref{conj:spectralsymmetry} we see that it is sufficient to consider
the half-line $x=y, \, x> 0$. All choices for $x>0$ lead to the same number of singlets. The
point $x=y=1/2$ is particularly convenient because most of the terms in the Hamiltonian cancel
as can be see from section \ref{sec:scfz}. Moreover, if restricted to the subspace of states containing
only $0$'s and $2$'s the sum of the the entries of the Hamiltonian density along each row and column
is zero. Hence, the Hamiltonian has the zero-energy state
$|\Psi\rangle = \sum_{\mu}|\mu\rangle$ where $\mu$ runs over all spin configurations which do
not contain any $1$'s in the desired momentum sector because $T_N(\phi=\pi)|\Psi\rangle = -|\Psi\rangle$.
Next, let us turn to odd $N$. In this case it turns out to be
convenient to choose the half-line $x=-y,\, x> 0$, and pick the point $x=-y=1/4$. Instead of
$H_N$, we consider the unitarily equivalent Hamiltonian $H_N' = U H_N U^{-1}$ with $U =\exp \left(\i \pi \sum_{j=1}^N j s_j^3\right)$. $H_N'$ is invariant under
translations (without twists), and we are interested in its zero-energy states in the translationally invariant
sector. It turns out that if we set $x=1/4, y=-1/4$ then the corresponding Hamiltonian density
has again the property that the sum of its components along each row or column is zero. This
time, no restriction to any other subspace than the translation sector is necessary. Hence,
a zero-energy state of $H_N'$ is given by $|\Psi'\rangle = \sum_\mu |\mu\rangle$ what
implies that $H_N$ has at least one zero-energy state. Using the conjugation argument we conclude
that for all choices of $x, y$ with $|x|=|y|$ and $x,y \neq 0$ there is at least one zero-energy
states in the momentum sectors of interest.

We are now left with the point $x=y=0$. It is known that in this case the Fateev-Zamolodchikov
Hamiltonian can be mapped onto the supersymmetric $t-J$ model. This case will be analysed in section \eqref{sec:tj},
where we will show that $H_N$ possesses zero-energy states even for $\phi = 0$. This demonstrates that
their number may change when parameters of the Hamiltonian are varied.

\subsubsection{Combinatorial nature of the singlet states}
\label{sec:combinatorics}

It is natural to ask if for the twist angle $\phi = \pi$ the singlet state and its components,
defined through $|\Psi\rangle = \sum_\mu
\psi_\mu|\mu\rangle$ where $\mu$ runs over all spin configurations, possess
interesting properties. Indeed this seems to be the case. For example if we choose the components to be coprime polynomials in $x,y$ then we observe a quite non-trivial sum rule:

\begin{conjecture}
	The square norm of the supersymmetry singlet eigenvector factorises into two polynomials in $x,y$. At at odd $N$, one of these polynomials is the component $\psi_{11\cdots 1}$.
\end{conjecture}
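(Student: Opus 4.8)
The plan is to begin by fixing a normalisation. By the preceding conjecture on the uniqueness of the supersymmetry singlet at twist angle $\phi=\pi$, the zero-energy vector $|\Psi\rangle$ is determined up to scalar, and its components in the standard basis can be chosen to be coprime polynomials in $x,y$. Indeed, the equations $Q_N|\Psi\rangle=0$ and $Q_{N-1}^\dagger|\Psi\rangle=0$ form a homogeneous linear system whose coefficients are affine-linear in $x,y$, the only non-trivial entries of the local supercharge $\q$ being $x$, $y$ and $1$; hence the maximal minors of this system are polynomials, and dividing by their $\gcd$ produces the polynomial vector $(\psi_\mu)$. The square norm $Z_N(x,y)=\sum_\mu \psi_\mu(x,y)^2$ is then itself a polynomial, and the assertion becomes: for odd $N$, $Z_N(x,y)=\psi_{11\cdots1}(x,y)\,R_N(x,y)$ with $R_N$ a polynomial, with an analogous two-factor splitting for even $N$. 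Since $Z_N$ is a sum of squares this divisibility is a genuine constraint and not automatic, which already signals that some structural input is needed.

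Second, I would use the symmetries established above to cut the problem down. The "Lorentz" transformations $M_N(\lambda,\theta)=m(\lambda,\theta)^{\otimes N}$ of \eqref{eqn:trsfparams} map the cohomology class of $|\Psi(x,y)\rangle$ to that of $|\Psi(x',y')\rangle$, so tracking the induced action on the harmonic representative — and hence on $Z_N$, which is delicate because $M_N$ is not unitary and does not preserve the coprime normalisation — should reduce everything to one representative on each orbit $x^2-y^2=\mathrm{const}$. In parallel, the intertwiner $V(x,y)$ of Conjecture \ref{conj:spectralsymmetry}, being unitary and sending $|\Psi(x,y)\rangle$ to a multiple of $|\Psi(y,x)\rangle$, forces $Z_N$ to be symmetric under $x\leftrightarrow y$ up to a scalar fixed by the normalisation (in fact one expects $Z_N(x,y)=Z_N(y,x)$), and simultaneously identifies $\psi_{11\cdots1}(x,y)$ with $\pm\psi_{\sigma(11\cdots1)}(y,x)$, where $\sigma$ is the permutation of configurations realised by $V$; checking on the $N=2,3$ data computed in section \ref{sec:dualityfz} whether $\sigma$ fixes the all-ones configuration would then show that both conjectural factors are themselves $x\leftrightarrow y$ symmetric.

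The decisive step is the divisibility $\psi_{11\cdots1}\mid Z_N$. The natural route is to find a recursion relating the singlet at $N$ and $N+2$ sites — an analogue of the exchange/qKZ relations that were used to prove the corresponding sum rules for the spin$-1/2$ XXZ chain at $\Delta=-1/2$ and the $O(1)$ loop model — generated here by the supersymmetry itself, for instance by composing $Q_N$ with a suitably chosen $Q_{N-1}^\dagger$, or by the nilpotent conserved charge $C_N$ when a second copy of the algebra is available. Such a recursion should express $Z_{N+2}$ through $Z_N$ and boundary data in which $\psi_{11\cdots1}$ appears explicitly, making an induction on $N$ possible once the base cases are verified by direct diagonalisation. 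A useful preliminary reduction is to evaluate everything on the light-like line $x=\pm y$, where section \ref{sec:singlet} gives $|\Psi\rangle$ explicitly as $\sum_\mu|\mu\rangle$ after a local unitary, so that $Z_N$ and $\psi_{11\cdots1}$ become elementary counting functions and the factorisation can be checked by hand, furnishing both a consistency test and a boundary condition.

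I expect this last step to be the main obstacle. Unlike the quantum-group/qKZ setting, there is as yet no transfer matrix or exchange relation adapted to this \emph{neutral} supercharge at twist $\phi=\pi$, so producing the $N\to N+2$ recursion — or, alternatively, giving a direct description of $\psi_{11\cdots1}$ and of $Z_N$ as weighted generating functions of combinatorial objects matching the conjectured alternating-sign-matrix enumeration of section \ref{sec:combinatorics} — is where the real difficulty lies. Everything else (the polynomiality, the reduction to a slice, the $x\leftrightarrow y$ symmetry, the base cases) is, by contrast, routine once the uniqueness conjecture and Conjecture \ref{conj:spectralsymmetry} are granted.
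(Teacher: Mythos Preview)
The statement you are attempting to prove is labelled a \emph{Conjecture} in the paper, and the paper does not offer a proof. The only evidence the paper provides is numerical: the components are computed by exact diagonalisation for small systems (up to $N=11$ in the trigonometric specialisation, up to $N=7$ in the elliptic $\eta=\pi/3$ case), the square norm is formed, and the factorisation is observed to hold. The introduction says so explicitly: ``we present several results based upon numerical analysis of small systems \dots\ and therefore we formulate them as conjectures.'' So there is no argument in the paper against which your proposal can be measured.

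Your outline is a sensible research plan rather than a proof. The preliminary steps you describe --- polynomiality of the components after clearing denominators, reduction by the $M_N(\lambda,\theta)$ conjugation, checking the light-like line $x=\pm y$ by hand --- are indeed routine, and you are candid that the only substantive step, namely producing an $N\to N+2$ recursion for $Z_N$ in which $\psi_{11\cdots1}$ appears as an explicit factor, is not something you know how to do. That is precisely the gap: no such recursion is established in the paper, and as the conclusion of the paper notes, the exchange/qKZ machinery adapted to this neutral supercharge at twist $\phi=\pi$ is the subject of future work. Your proposal also leans on Conjecture~\ref{conj:spectralsymmetry} (the $x\leftrightarrow y$ spectral symmetry), which is itself unproven in the paper, so even granting it would not close the argument. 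In short, you have correctly identified the shape a proof might take and where the difficulty lies, but nothing in the paper supplies the missing ingredient.
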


\paragraph{Trigonometric case.} 
We observed a refinement of this conjecture in the trigonometric case when $y=0$ by exact diagonalisation of the Hamiltonian. To state it, we need to specify the normalisation of the ground state vector $|\Psi\rangle$ in a more precise way. As the Hamiltonian is a quadratic polynomial in $x$ we may choose its components to be coprime polynomials in $x$. This fixes the state up to an overall multiplicative constant. We adjust its value through the asymptotic the limit $x\to \infty$. The Hamiltonian
reduces to
\begin{equation}
  \lim_{x\to \infty} x^{-2} H_N = 2\sum_{j=1}^N(s_j^3 s_{j+1}^3(1-s_j^3 s_{j+1}^3) + (s_j^3)^2 + (s_{j+1}^3)^2),
  \label{eqn:asymham}
\end{equation}
and therefore does not depend on the twist. It is easy to see that the
Hamiltonian is positive
and annihilates the states $|11\cdots 1\rangle$ for odd $N$, and $|0202\cdots 02\rangle + |2020\cdots 20\rangle$ for even $N$. These are the only zero-energy states in the sectors where $T_N(\phi = \pi) \equiv (-1)^{N+1}$. It follows that in our normalisation for odd $N$ the component
$\psi_{11\cdots 1}$ must have the highest degree in $x$ amongst all the components of the
singlet vector. The same applies to the components $\psi_{0202\cdots 02}$ and $\psi_{2020\cdots 20}$ for even $N$. Let us denote in both cases this maximal degree by $d_N$. We
fix the normalisation by adjusting the leading coefficient as $x$ goes to infinity:
\begin{subequations}
\begin{align}
  \psi_{11\cdots 1}&\underset{x\to \infty}{\sim} (2x)^{d_N}(1+o(1)) \quad
\text{for $N$ odd,}\\
   \psi_{0202\cdots 02}&\underset{x\to \infty}{\sim} (2x)^{d_N}(1+o(1)) \quad
\text{for $N$ even}.
\end{align}
  \label{eqn:norm}
\end{subequations}
\begin{conjecture}[Degree] With this choice of normalisation all components are
polynomials in $x$ with integer coefficients if we adjust the degree to
\begin{equation*}
d_N = \begin{cases}
  n^2,& N=2n+1,\\
  n(n-1), & N=2n.
\end{cases}
\end{equation*}
\end{conjecture}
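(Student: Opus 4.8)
The plan is to split the statement into a part that is pure linear algebra over $\mathbb{Z}[x]$ and a part that genuinely requires a size-changing recursion. For $y=0$ the local supercharge acts by $\q|0\rangle = x(|10\rangle-|01\rangle)$, $\q|1\rangle = |20\rangle-|02\rangle$ and $\q|2\rangle = x(|21\rangle-|12\rangle)$, so it preserves the magnetisation and, on the momentum sector with $S_N^3=0$ (which by spin reversal contains $|11\cdots1\rangle$ for $N$ odd and the alternating $|0\rangle/|2\rangle$ configurations for $N$ even), the matrices of $Q_N$ and of $Q_{N-1}^\dagger$ have entries in $\mathbb{Z}[x]$ that are affine in $x$. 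The singlet conditions $Q_N|\Psi\rangle=0$, $Q_{N-1}^\dagger|\Psi\rangle=0$ thus form a homogeneous linear system over $\mathbb{Z}[x]$ whose solution space over $\mathbb{Q}(x)$ is one-dimensional; this uses uniqueness of the singlet for $x\neq 0$, which is the $y=0$ case of the earlier conjectures and can be anchored at $x=\infty$ through \eqref{eqn:asymham} together with upper semicontinuity of $\dim\ker H_N(x)$. The kernel of a rank-$(\dim\mathcal H_N-1)$ matrix over the UFD $\mathbb{Z}[x]$ is spanned by its vector of signed maximal minors, which lies in $\mathbb{Z}[x]^{\dim\mathcal H_N}$; dividing by the content yields a primitive solution still with entries in $\mathbb{Z}[x]$. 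This already establishes polynomiality and integrality in the coprime normalisation.

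It then remains to determine the maximal degree $d_N=\deg\psi_{11\cdots1}$ and to check that rescaling the primitive singlet to the normalisation \eqref{eqn:norm} preserves integrality, that is, that the leading coefficient of the primitive $\psi_{11\cdots1}$ is $\pm 2^{d_N}$. For both points I would look for a recursion expressing certain components of $|\Psi\rangle_N$ with a prescribed boundary spin pattern as monomials in $2x$ times components of the singlet on the chains of length $N-1$ and $N-2$, in the spirit of the computations for the $\Delta=-1/2$ XXZ chain and the $O(1)$ loop model. Summing the degree increments should yield $d_{2n+1}-d_{2n-1}=2n-1$ and $d_{2n}-d_{2n-2}=2(n-1)$; together with $d_1=d_2=0$ this integrates to $d_{2n+1}=n^2$ and $d_{2n}=n(n-1)$, and the same recursion should track the powers of $2$. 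An alternative and probably more robust route to the degree goes through integrability: the ground state of the inhomogeneous twisted Fateev-Zamolodchikov chain should solve a $q$KZ-type exchange equation built from the Izergin-Korepin $R$-matrix, whose polynomial solutions have a known degree in the spectral parameters, and specialising to the homogeneous point should give $d_N$ as well as the factorisations of the norm and the conjectured link with the $x$-enumeration of alternating sign matrices.

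The main obstacle is producing the size-changing recursion for the cohomology representative $|\Psi\rangle_N$. The supercharges relate $\mathcal H_{N-1}$, $\mathcal H_N$ and $\mathcal H_{N+1}$, but $|\Psi\rangle_N$ is defined only modulo $\mathrm{im}\,Q_{N-1}$, so a clean recursion is not immediate: one needs either an explicit contracting homotopy on the relevant momentum sector, or the integrable structure of the chain, in order to pin down the representative and the behaviour of its boundary components. In practice I would first compute $|\Psi\rangle_N$ explicitly for $N\le 5$ to fix all signs and normalisations, use these data to guess the precise recursion, and only then attempt the general argument.
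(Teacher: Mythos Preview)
The paper does not prove this statement; it is explicitly labelled a \emph{conjecture}, supported only by exact diagonalisation for small $N$ (the surrounding conjectures are checked up to $N=11$). There is therefore no ``paper's own proof'' to compare against. What you have written is a research plan, not a proof, and you are candid about this in your final paragraph.

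On the substance of your plan: the linear-algebra part over $\mathbb{Z}[x]$ is a sensible route to polynomiality and integrality of a coprime representative, but it rests on one-dimensionality of the singlet space over $\mathbb{Q}(x)$, which in the paper is itself conjectural (Conjecture~4.2 and the discussion in \S4.1.4). Your attempt to prove uniqueness by ``anchoring at $x=\infty$ together with upper semicontinuity'' is incomplete: upper semicontinuity of $\dim\ker H_N(x)$ gives you at most one zero-energy state for generic (or large) $x$, but you still need at least one, and the paper's conjugation argument only transports the count between points with $x^2\neq y^2$ --- it does not establish existence at a single anchor point without invoking the same conjecture. You would need an independent existence argument (e.g.\ a Witten-index computation adapted to this supercharge and twist angle, which the paper does not provide).

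For the degree $d_N$ and the leading coefficient $2^{d_N}$, you correctly identify that a size-changing recursion or a $q$KZ-type argument is needed and not yet available; the paper's later reference to forthcoming work on the 19-vertex model suggests the authors had the same route in mind. So your proposal is in line with the spirit of the paper, but it does not close any of the gaps that led the authors to state the result as a conjecture rather than a theorem.
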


We analyse the states in this normalisation. It turns out that both their components, and square norms are related to problems of enumerative combinatorics. This is rather similar to the case of the XXZ chain with anisotropy $\Delta=-1/2$ whose combinatorial features were first observed in \cite{razumov:00,razumov:01,degier:02}. In order to understand the combinatorial nature of the supersymmetry singlet in the present context, we need some basic notions about the weighted enumeration of alternating sign matrices.
Consider the set of $n\times n$ alternating sign matrices. Their entries take the values $0$ or $\pm 1$. The first non-zero element in each row and column is $1$, and the non-zero elements alternate in sign. We assign the weight
$t^k$ to a matrix if it contains $k$ $-1$'s. The generating function for given size is
$A_n(t)$. The example of $3\times 3$ matrices is shown in figure \ref{fig:33asms}, and gives $A_3(t) = 6+t$.
\begin{figure}[h]
\centering
\begin{tikzpicture}
\draw (0,0) node {
\begin{equation*}
\begin{array}{ccccccc}
\left(
\begin{smallmatrix}1&0&0\\0&1&0\\0&0&1\end{smallmatrix}
\right)
&
\left(
\begin{smallmatrix}1&0&0\\0&0&1\\0&1&0\end{smallmatrix}
\right)
&
\left(
\begin{smallmatrix}0&1&0\\1&0&0\\0&0&1\end{smallmatrix}
\right)
&
\left(
\begin{smallmatrix}0&1&0\\0&0&1\\1&0&0\end{smallmatrix}
\right)
&
\left(
\begin{smallmatrix}0&0&1\\1&0&0\\0&1&0\end{smallmatrix}
\right)
&
\left(
\begin{smallmatrix}0&0&1\\0&1&0\\1&0&0\end{smallmatrix}
\right)
&
\left(
\begin{smallmatrix}0&1&0\\1&\!-1\!&1\\0&1&0\end{smallmatrix}
\right)\\
 & & & & & &\\
1 & 1 & 1 & 1 & 1 & 1 & t
\end{array}
\end{equation*}};
\foreach \x in {0,1.5,...,7.5}
\draw[xshift=-4.8cm] (\x ,-0.5) node{$+$};
\draw (5.4,-0.5) node {$= \quad 6+t$};
\end{tikzpicture}
\caption{All $3\times 3$ alternating sign matrices: six permutation matrices with weight $1$ and a single matrix with weight $t$. The generating function for these is $A_3(t) = 6+t$. The last matrix is the only $3 \times 3$ vertically symmetric alternating sign matrix. Therefore we have $T_1(t) =1$.}
\label{fig:33asms}
\end{figure}
Next, restrict to $(2n+1)\times (2n+1)$
alternating sign matrices which are symmetric about the vertical axis. Assign weight
$t^k$ to any such matrix if it contains $k$ $-1$'s within the first $n$ columns. Then,
the generating function is given by $T_n(t)$. From figure \ref{fig:33asms} we infer that for $n=1$ there is only one vertically-symmetric alternating sign matrix which gives $T_1(t)=1$. In general, both $T_n(t)$ and $A_n(t)$ are given in terms of determinantal formulae. The relevant
polynomials can be found in Robbins' \cite{robbins:00} and Kuperberg's work
\cite{kuperberg:02}. We need the following expressions:
 \begin{align*}
  T_n(t) &= \det_{0\leq i,j<n}\sum_{k=0}^{2n-2}{i+1 \choose k-i}{j \choose 2j-k}t^{2j-k},\\
  R_n(t) &= \det_{0\leq i,j<n} \sum_{k=0}^{2n-1} Y(i,k,1)Y(j,k,0)t^{2j+1-k}.
\end{align*}
The coefficients $Y(i,j,k)$ are 
\begin{equation*}
  Y(i,j,k) = {i+k \choose 2i+1+k-j}+{i+k +1\choose 2i+1+k-j}.
\end{equation*}
The polynomial $A_n(t)$ is then given by
\begin{equation*}
  A_n(t) =
  \begin{cases}
    T_n(t) R_n(t), & N=2n+1,\\
    2T_n(t) R_{n-1}(t) & N=2n.
  \end{cases}
\end{equation*}
These polynomials appear in the supersymmetry singlet provided that we identify $t= 4x^2$. Indeed, up to $N=11$ sites we checked the following statement:
\begin{conjecture} Choose the normalisation as in \eqref{eqn:norm}. For odd $N=2n+1$ we have
\begin{equation*}
  \psi_{11\cdots 1} =(2x)^n T_n(4x^2), \quad ||\Psi||^2 = A_{2n+1}(4x^2),
\end{equation*}
whereas for even $N=2n$
\begin{equation*}
  \psi_{0202\cdots 02} =T_n(4x^2), \quad ||\Psi||^2 = A_{2n}(4x^2).
\end{equation*}
\end{conjecture}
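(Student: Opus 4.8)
The plan is to prove these identities by combining the integrable structure of the trigonometric Fateev--Zamolodchikov chain with the known determinantal formulae for $T_n(t)$ and $A_n(t)$, following the circle of ideas that settled the analogous conjectures for the $O(1)$ loop model and the XXZ chain at $\Delta=-1/2$ \cite{difrancesco:05_3,razumov:07,cantini:10}. The lattice supersymmetry enters only to guarantee that the zero-energy state exists, is unique up to normalisation in the relevant momentum sector (section \ref{sec:singlet}) and has parity $(-1)^{N+1}$; the combinatorial content itself comes from integrability. The first step is to promote the homogeneous Hamiltonian at fixed $x=\cos\eta$ -- so that $2x=q+q^{-1}$ and $t=4x^2=(q+q^{-1})^2$ -- to the inhomogeneous integrable chain: the density \eqref{eqn:fztwositeham} at $y=0$ is a logarithmic derivative of the transfer matrix of the spin$-1$ ($19$-vertex) model obtained by fusion from the six-vertex model, so one introduces spectral parameters $z_1,\dots,z_N$ and a twist reproducing the $\phi=\pi$ boundary condition.

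The zero-energy vector then deforms to a vector $\Psi(z_1,\dots,z_N)$ which should be characterised by the exchange (quantum Knizhnik--Zamolodchikov type) equations
\[
  \check R_{i,i+1}(z_i/z_{i+1})\,\Psi(\dots,z_i,z_{i+1},\dots)=\Psi(\dots,z_{i+1},z_i,\dots),\qquad i=1,\dots,N-1,
\]
together with the $\phi=\pi$-twisted cyclicity relation. One then shows that $\Psi$ admits a normalisation in which all components are polynomials in the $z_i$ of controlled multidegree, the homogeneous limit $z_i\to1$ returning the polynomials in $x$ of the conjecture. The inductive engine is the degeneration of $\Psi$ at the fusion point $z_{i+1}=q^{\pm2}z_i$, where $\check R_{i,i+1}$ becomes a rank-one projector onto the $\text{U}_q(\text{sl}_2)$ singlet on sites $i,i+1$: the two sites fuse and the relevant components of $\Psi$ factorise through the zero-energy vector of the chain with $N-2$ sites. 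Tracking the degrees -- governed by the values $n^2$ and $n(n-1)$ of the degree conjecture once lifted to the inhomogeneous setting -- together with the leading coefficient fixed by the asymptotic normalisation \eqref{eqn:norm}, this pins down $\psi_{11\cdots1}$ and $\psi_{0202\cdots02}$ by induction on $n$.

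On the combinatorial side, $T_n(t)$ and $A_n(t)$ are, via the Kuperberg/Robbins determinants quoted above \cite{kuperberg:02,robbins:00}, homogeneous limits of six-vertex partition functions: $A_n$ of the domain-wall partition function (Izergin--Korepin determinant) and $T_n$ of the partition function with one reflecting end (the Tsuchiya-type determinant enumerating vertically symmetric odd-order alternating sign matrices). These partition functions obey exactly the same recursions under specialisation of a boundary spectral parameter, so the proof of the two displayed identities for $\psi_{11\cdots1}$ and $\psi_{0202\cdots02}$ is completed by matching recursions and the small-$n$ initial data (already checked numerically) after the substitution $t=4x^2$. The square-norm statement then follows from the same construction by evaluating $\Psi$ at the distinguished homogeneous point where $\sum_\mu\psi_\mu$ collapses, which turns $\|\Psi\|^2$ into the $A_N$-determinant; for odd $N$ the factorisation asserted in the preceding conjecture identifies the complementary factor with $R_n(4x^2)$, and for even $N$ one recovers the pattern $\|\Psi\|^2=2T_nR_{n-1}$ with $\psi_{0202\cdots02}=T_n$.

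The main obstacle is twofold. On the algebraic side one must establish polynomiality and the exact degree of the inhomogeneous zero-energy vector and control the fusion/degeneration relations for the spin$-1$ $\check R$-matrix -- rather than the much better studied spin$-1/2$ one -- keeping careful track of the singlet projector and of the effect of the twist; this is the computational core. On the matching side one must show that the $\phi=\pi$-twisted cyclicity corresponds, on the six-vertex side, precisely to the reflecting/diagonal boundary whose partition function enumerates \emph{vertically symmetric} ASMs (and not to some other boundary), and to do so \emph{exactly}, not merely up to an overall factor. Getting this identification right, and reconciling it with the parity $(-1)^{N+1}$ forced by the supersymmetric construction, is where I expect the argument to be most delicate.
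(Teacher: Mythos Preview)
The paper does \emph{not} prove this statement: it is explicitly stated as a conjecture, supported only by exact diagonalisation ``up to $N=11$ sites''. There is therefore no proof in the paper to compare your proposal against. What you have written is not a proof either; it is a programme, and you say so yourself in the final paragraph when you list the ``main obstacle[s]'' that remain open.

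That said, your programme is the natural one and is in fact precisely what the author signals he intends to pursue: in the conclusion (section~\ref{sec:conclusion}) he writes that the existence of a simple transfer-matrix eigenvalue for the $19$-vertex model ``allows to apply the machinery of the quantum Knizhnik--Zamolodchikov equation \dots\ This problem will be addressed in a future publication \cite{hagpon:2bp}.'' So your strategy (inhomogeneous deformation, qKZ/exchange relations, fusion-point degeneration to $N-2$ sites, matching against Izergin--Korepin and Tsuchiya determinants) is exactly the expected route, and the references you invoke are the right ones.

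Where your write-up overreaches is in presenting the steps as if they were already in hand. Several of them are genuine research problems for the spin-$1$ case: (i) the existence and polynomiality of the inhomogeneous zero-energy vector for the fused $R$-matrix with the $\phi=\pi$ twist has to be \emph{established}, not assumed; (ii) the degree bounds $n^2$ and $n(n-1)$ are themselves part of the conjecture (the ``Degree'' conjecture just above), so you cannot use them as input; (iii) the identification of the $\phi=\pi$ twist with the reflecting boundary that produces vertically-symmetric ASMs is, as you correctly flag, the crux, and nothing in your text resolves it; (iv) your argument for $\|\Psi\|^2$ appeals to the factorisation in the preceding conjecture, which is also unproved in the paper. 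In short: your outline is sound as a plan and agrees with the author's own stated intentions, but at present it is a sketch of a future proof rather than a proof, and should be labelled as such.
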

We see that in both cases there are distinguished components related to vertically-symmetric alternating sign matrices whereas the square norm gives the polynomials for unrestricted alternating sign matrix enumeration. For odd $N=2n+1$ we found another sum rule related which is related to the polynomials $R_{n}(t)$. To state it, we transform the state according to
\begin{equation*}
  |\Phi\rangle = U |\Psi\rangle, \quad U = \exp \left(\i \pi \sum_{j=1}^N j s_j^3\right)
\end{equation*}
In fact, $U$ is the unitary transformation which spreads out the twist over the whole system so that for $N$ odd $U H_N U^{-1}$ is invariant under the usual translations. Next, we denote by $\nu_0(\mu)$ the number of spins with value $0$ in the configuration
$\mu$. For example $\nu_0(0100221)=3$. If the number of sites is odd $N=2n+1$ and the
states under consideration have $S_N^3=0$ then $0\leq \nu_0(\mu)\leq n$. With these preliminary definitions we claim the following:
\begin{conjecture}
  For the components of the transformed states defined through $|\Phi\rangle = \sum_{\mu}\phi_\mu|\mu\rangle$ we have the linear sum rule
  \begin{equation*}
    \sum_{\mu} (2 x)^{n-\nu_0(\mu)}\phi_\mu = R_n(4x^2).
  \end{equation*}
\end{conjecture}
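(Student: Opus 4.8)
The claimed identity is a sum rule of Razumov--Stroganov type, and I would attack it by the route that works for the $O(1)$ loop model and the XXZ chain at $\Delta=-1/2$: lift the homogeneous singlet to an inhomogeneous, spectral-parameter dependent vector satisfying a $q$-deformed Knizhnik--Zamolodchikov (qKZ) system. Concretely, write $q=e^{\i\eta}$ with $x=\cos\eta$ (so that $J_3=2x^2-1=\cos2\eta$ and $t=4x^2=(2\cos\eta)^2$ in the trigonometric case $y=0$), and let $\check R(z)$ be the spin$-1$ (fused $19$-vertex / Fateev--Zamolodchikov) $R$-matrix at this anisotropy. The first task is to show that the polynomial $x$-dependence of the singlet $|\Psi\rangle$ of section~\ref{sec:singlet} is the homogeneous specialisation $z_1=\dots=z_N=1$ of a vector $|\Psi(z_1,\dots,z_N)\rangle$ that solves the exchange relations $\check R_i(z_{i+1}/z_i)|\Psi(\dots,z_i,z_{i+1},\dots)\rangle=|\Psi(\dots,z_{i+1},z_i,\dots)\rangle$ together with a twisted cyclicity relation encoding $\phi=\pi$. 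For this I would exploit that the twisted Fateev--Zamolodchikov spectrum sits inside the spectrum of the staggered $\text{U}_q(\mathfrak{sl}(2|1))$ chain of \cite{frahm:11_2}, whose ground state is accessible by Bethe ansatz, or else construct the inhomogeneous singlet directly as the zero-energy vector of an inhomogeneous transfer matrix and use the lattice supersymmetry of section~\ref{sec:scfz} to control its polynomiality and degree.

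Second, I would realise the linear functional $\mu\mapsto(2x)^{n-\nu_0(\mu)}$ as the pairing with an explicit covector $\langle\Omega|$ on $V^{\otimes N}$ built from a fixed two-site covector, and prove that $\langle\Omega|$ is quasi-invariant under the $\check R$-action, $\langle\Omega|\check R_i(z)=\langle\Omega|$ up to an explicit scalar, using the crossing symmetry of the fused $R$-matrix and the quantum-group structure made manifest in section~\ref{sec:qgfz}. One must also track the transformation $U=\exp(\i\pi\sum_j j\,s^3_j)$ that spreads the twist over the chain: it conjugates both the supercharges and $\langle\Omega|$, and since $U|1\cdots1\rangle=|1\cdots1\rangle$ it is consistent with $\phi_{1\cdots1}=\psi_{1\cdots1}=(2x)^nT_n(4x^2)$. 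Granting quasi-invariance, the pairing $Z_{2n+1}(z_1,\dots,z_N)=\langle\Omega|\Phi(z_1,\dots,z_N)\rangle$ becomes a symmetric polynomial in the $z_i$ whose total degree is controlled by the conjectured degree $d_N=n^2$ of the components; one checks that this matches $\deg_x R_n(4x^2)=n^2+n$.

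Third comes the recursion. When two fused inhomogeneities collide, $z_{i+1}=q^{-2}z_i$, the fused $R$-matrix degenerates onto a rank-one two-site projector, and the standard argument forces $|\Phi(\dots,z_i,q^{-2}z_i,\dots)\rangle$ to factorise as a fixed two-site vector tensored with the analogous singlet on $N-1$ sites (hence on $N-2$ sites after a second reduction, so that $N$ stays odd). Pairing with $\langle\Omega|$ converts this into a recursion $Z_{2n+1}|_{\mathrm{red}}=(\text{explicit factor})\,Z_{2n-1}$, which together with symmetry and the degree bound pins down $Z_{2n+1}$ uniquely; passing to the homogeneous limit $z_i\to1$ then gives the left-hand side of the conjecture as an explicit polynomial in $x$. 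The final step is to verify that Kuperberg's polynomial $R_n(t)$ obeys exactly the same recursion with initial datum $R_0=1$; this should follow from the Lindström--Gessel--Viennot / Jacobi--Trudi form of its determinantal expression, much as the ASM generating functions satisfy Schur-like recursions.

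The principal obstacle, I expect, is the first step: the lattice supersymmetry guarantees that the singlet exists and is polynomial in $x$, but it does not by itself supply the integrable qKZ structure in the spectral parameters, so the bulk of the work is to establish that structure, and sharp degree/symmetry bounds, for generic anisotropy rather than only at the combinatorial point $x=1/2$. A secondary difficulty is matching the recursion to Kuperberg's $t$-weighted polynomial $R_n(t)$, for which a suitable recursion may have to be derived from scratch. A less likely but worthwhile alternative is to bypass qKZ and instead seek a direct recursion among the homogeneous singlets of different sizes generated by the dynamical supercharges $Q_N,Q_N^\dagger$, exploiting that they change the number of sites; this would be more in the spirit of the present paper, but it is unclear that the image of $Q_{N-1}$ can be controlled tightly enough to close the induction.
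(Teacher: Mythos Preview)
The statement you are addressing is a \emph{conjecture}, not a theorem; the paper offers no proof. The only evidence the paper provides is exact diagonalisation for small systems (the parallel sum rule for $A_n(4x^2)$ and $T_n(4x^2)$ is checked up to $N=11$). So there is no ``paper's own proof'' to compare against.

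That said, your proposed route is precisely the one the paper itself anticipates in the conclusion: the authors note that their findings ``allow to apply the machinery of the quantum Knizhnik--Zamolodchikov equation without any restriction to a special point in parameter space'' and defer this to future work \cite{hagpon:2bp}. Your outline---lift the singlet to an inhomogeneous qKZ solution for the fused $19$-vertex $R$-matrix, pair with a quasi-invariant covector, derive a recursion from the fusion degeneration, and match it against a recursion for Kuperberg's $R_n(t)$---is the standard and correct template for sum rules of this type, and your identification $x=\cos\eta$, $t=4x^2$ is consistent with the paper's parametrisation in section~\ref{sec:qgfz}.

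Your own assessment of the obstacles is accurate. The first step is the hard one: the lattice supersymmetry gives existence and polynomiality of the homogeneous singlet, but does not by itself supply the inhomogeneous qKZ structure, and none of this is carried out in the paper. The covector realisation of the weight $(2x)^{n-\nu_0(\mu)}$ and its quasi-invariance under $\check R$ are plausible but unverified. Finally, I am not aware of a published recursion for $R_n(t)$ of the required shape; you would likely have to extract one from Kuperberg's determinant, as you say. In short: your proposal is a reasonable programme, aligned with the paper's stated intentions, but it is a programme and not yet a proof---and the paper does not go beyond numerical checks either.
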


\paragraph{Elliptic case, odd length.} 
It is quite natural to ask how the combinatorial content gets deformed as we go off the
critical line $y=0$. We have not been able to formulate a conjecture on this deformation
except for the point $\eta=\pi/3$, corresponding to $x=-1/2$ and $y = \zeta/2$ which we
encountered already when studying the duality transformations. At this point we find the
coupling constants:
\begin{align*}
  &J_1= 1+\zeta,\quad J_2 = 1-\zeta, \quad J_3 = \frac{\zeta^2-1}{2},\\
  &A_{12} = -2J_3, \quad A_{13}=-2J_2, \quad A_{23}=-2J_1.
\end{align*}
As in the spin-$1/2$ case \eqref{eqn:xyzham} the Hamiltonian is a quadratic polynomial in
$\zeta$. Thus the components of the singlet for $\phi = \pi$
can be chosen as mutually coprime polynomials in $\zeta$. This fixes the normalisation up to an overall factor. We investigated this state up to $N=7$ sites, and put forward the following conjecture:
\begin{conjecture}For $\eta=\pi/3$ and $N=2n+1$ sites we find in suitable normalisation the component:
\begin{equation*}
  \psi_{11\cdots 1}(\zeta)  = \left(\frac{3+\zeta}{2}\right)^{n(n+1)} p_{-n-1}\left(\frac{1-\zeta}{3+\zeta}\right),
\end{equation*}
where the $p_n(z)$ are related to the polynomials found by Bazhanov and Mangazeev
\cite{bazhanov:06,mangazeev:10} for the spin$-1/2$ XYZ chain (we recall them in appendix
\ref{app:polynomials}, see also \cite{zinnjustin:12}). The second factor in the square
norm can also be written in terms of the polynomials $p_n(\zeta)$:
\begin{equation*}
  ||\Psi(\zeta)||^2 = \psi_{11\cdots 1}(\zeta)\times  \left(\frac{3+\zeta}{2}\right)^{n(n+1)} p_{n}\left(\frac{1-\zeta}{3+\zeta}\right).
\end{equation*}
For $N=2n$ sites, the normalisation can be adjusted in such a way that the square norm is given by
\begin{equation*}
  ||\Psi(\zeta)||^2 = \left(\frac{3+\zeta}{2}\right)^{2n^2}p_{n-1}\left(\frac{1-\zeta}{3+\zeta}\right)p_{-n-1}\left(\frac{1-\zeta}{3+\zeta}\right).
\end{equation*}
\end{conjecture}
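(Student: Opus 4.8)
The plan is to reduce the statement to known results for the spin$-1/2$ XYZ chain at its combinatorial line, exploiting the fusion relation between the two models. At $\eta=\pi/3$, $x=-1/2$, $y=\zeta/2$ the coupling constants of the Fateev--Zamolodchikov Hamiltonian coincide (for $J_1,J_2,J_3$) with those of the spin$-1/2$ XYZ Hamiltonian~\eqref{eqn:xyzham}, and the elliptic parametrisation of section~\ref{sec:scfz} fixes the crossing parameter to $\gamma=2K/3$, which is exactly its value along the combinatorial line of the eight-vertex model. The first step is therefore to lift this coincidence to the level of transfer matrices: show that the twisted, inhomogeneous Fateev--Zamolodchikov transfer matrix at $\gamma=2K/3$ is the spin$-1$ fusion of the eight-vertex transfer matrix at its combinatorial point, so that its ground-state eigenvector arises by fusion, i.e.\ by specialising the inhomogeneities of an auxiliary eight-vertex (or eight-vertex SOS) model in coinciding pairs. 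Combined with the supersymmetry --- which, via the conjugation argument of section~\ref{sec:example} applied to the transformation $m(\lambda,\theta)$, already guarantees that the singlet is unique and persists along the whole family of parameters --- this identifies $|\Psi(\zeta)\rangle$ with the homogeneous limit of the Bazhanov--Mangazeev eigenvector of \cite{bazhanov:06,mangazeev:10}, whose components are the polynomials $p_n$ (recalled in appendix~\ref{app:polynomials}) and which was put on rigorous footing in \cite{zinnjustin:12,razumov:10}.

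Granted this identification, the second step is to extract the distinguished component. One would set up the fused quantum Knizhnik--Zamolodchikov equation obeyed by the inhomogeneous spin$-1$ ground state and express the component that degenerates to $\psi_{11\cdots 1}$ in the homogeneous limit through the corresponding spin$-1/2$ quantity, which is a product (or minor) of Bazhanov--Mangazeev polynomials; taking the homogeneous limit then produces the claimed formula. The correct change of variables is already suggested by the trigonometric case, where $t=4x^2$, together with the homographic maps $\zeta\mapsto(3-\zeta)/(1+\zeta)$ and $\zeta\mapsto(\zeta+3)/(\zeta-1)$ of section~\ref{sec:dualityfz}: under $z=(1-\zeta)/(3+\zeta)$ both act as $z\mapsto -z$, which matches the definite parity of the $p_n$ and accounts for the simultaneous appearance of $p_n$ and $p_{-n-1}$; the prefactor $\bigl((3+\zeta)/2\bigr)^{n(n+1)}$ should then come out of the normalisation in the fusion limit by straightforward bookkeeping.

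The square-norm sum rules are the serious obstacle. In the $\Delta=-1/2$ XXZ / $O(1)$ loop model the analogues $||\Psi||^2=A_n(\dots)$ were proven only after the full $q$KZ machinery together with Izergin--Korepin type determinant evaluations had been developed, and no such statement is available off the shelf for the fused eight-vertex model. I would pursue two complementary routes. The first is to derive a bilinear ``pairing'' identity from the spectral symmetry $x\leftrightarrow y$ of Conjecture~\ref{conj:spectralsymmetry}: if the intertwiner $V(x,y)$ exists and is unitary as conjectured, then $||\Psi(\zeta)||^2$ is a matrix element of $V$ between the singlet at $(x,y)=(-1/2,\zeta/2)$ and its dual, which should factor into two $p$-polynomials of exactly the announced shape --- one factor being $\psi_{11\cdots 1}$ at odd $N$, the two factors being $p_{n-1}$ and $p_{-n-1}$ at even $N$. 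The second route, should Conjecture~\ref{conj:spectralsymmetry} remain out of reach, is to transport the Izergin--Korepin / Cauchy-determinant evaluation of the norm from the eight-vertex case through the fusion limit, keeping careful track of the gauge transformation~\eqref{eqn:twistham} and of the antiperiodic boundary conditions that arise for odd $N$. Either way, the technical heart --- and the step I expect to consume most of the effort --- is controlling the fusion limit of the inhomogeneous eight-vertex ground state and of its norm, because the fusion procedure is delicate precisely at the combinatorial point, where the relevant auxiliary representations degenerate.
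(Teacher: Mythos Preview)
The paper does not prove this statement: it is explicitly labelled a \emph{Conjecture}, supported only by exact diagonalisation of the Hamiltonian up to $N=7$ sites. There is nothing to compare your proposal against, because no proof is offered in the paper.

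As for the proposal itself: it is a reasonable research programme, not a proof. Several of the load-bearing steps are themselves open. You invoke Conjecture~\ref{conj:spectralsymmetry} (the $x\leftrightarrow y$ spectral symmetry) to derive the norm factorisation, but that statement is unproven in the paper and you give no independent argument for it. The identification of the spin-$1$ singlet with a fusion limit of the Bazhanov--Mangazeev eigenvector is asserted rather than established; fusion at the combinatorial point is indeed delicate, and you acknowledge this is where ``most of the effort'' lies without supplying the argument. Finally, the extraction of $\psi_{11\cdots 1}$ from a fused $q$KZ system presupposes that such a system for the elliptic spin-$1$ model has been set up and solved to the point where individual components can be read off --- this is not available in the literature cited and would itself be a substantial result. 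In short, the outline is plausible and the change of variables $z=(1-\zeta)/(3+\zeta)$ is well motivated, but what you have written is a sketch of how one might attack the conjecture, not a proof of it.
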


\subsubsection{Extended supersymmetry}
\label{sec:extendedsusy}
One of the main motivations to study in detail the Fateev-Zamolodchikov spin chain was the
coincidence of the second member of $\mathcal N=2$ superconformal
unitary series with a $c=3/2$ superconformal theory which contains an additional charge-neutral superconformal symmetry as described above.
At the level of spin chains the latter is present for any anisotropy whereas the non-neutral supercharges are only part of the theory at the point $\eta=\pi/4$ in the trigonometric case. Given that on the lattice the neutral supercharge exists even in the elliptic case, i.e. for non-zero $y$, it is fairly natural to ask if the non-neutral lattice supercharges are also present at $\eta=\pi/4$ off the critical point. This question is furthermore motivated by the following heuristic consideration. Off criticality, the
long-range properties of the spin chain are expected to be described by the super-sine-Gordon field theory.
It was
found in \cite{ahn:90,bernard:90_2} that the soliton scattering matrix of this field
theory factorises according to $\mathcal S(\theta) = \mathcal
S_{\text{RSG}}^{K=2}(\theta) \otimes
\mathcal S_{\text{SG}}(\theta,\beta_{SG})$ where $\theta$ is the standard rapidity
variable. The factor $\mathcal S_{\text{RSG}}^K(\theta)$ is the $\mathcal S$-matrix of
the so-called $K$-th restricted sine-Gordon field theory, and here only relevant insofar
as for $K=2$ it corresponds to the $U(1)$-neutral supersymmetry. The second factor
$S_{\text{SG}}(\theta,\beta_{SG})$ is the scattering matrix of the usual sine-Gordon
field theory. The latter is known to admit an $\mathcal N=(2,2)$ supersymmetry if the
sine-Gordon coupling takes the value $\beta_{\text{SG}}^2=16\pi/3$, as was pointed out in
\cite{bernard:90}. This point is equivalent to $\eta= \pi/4$ in the lattice
model.

We will show now that this intuition is indeed correct: there is an elliptic extension of the spin$-1$
trigonometric supercharges presented in section
\ref{sec:trigonometric}. There, we found two local supercharges $\q$ and $\bar \q$ which
changed the magnetisation by $\Delta S^3=-2$ and $\Delta S^3=+2$ respectively. Changing
this to $\Delta S^3 = \pm 2 \mod 2 = 0 \mod 2$ destroys unfortunately the natural
distinction of $\q$ and $\bar \q$ (this is a general phenomenon for even $\ell$). We
circumvent this problem by constructing local supercharges with built-in spin-reversal
symmetry $(R\otimes R) \q^\pm R = \pm\q^\pm$ and require that they both generate
the same local Hamiltonian $\h^+=\h^-$.

We start with a supercharge with the property $(R\otimes R) \q^- R = -\q^-$: we determine constants
$a_1,a_2,a_3,a_4$ such that the action
\begin{align*}
  \q^- |0\rangle &= a_3(|21\rangle+|12\rangle)-a_2(|01\rangle+|10\rangle),\\
  \q^-|1\rangle &= a_1(|22\rangle - |00\rangle)+a_4(|20\rangle -|02\rangle),\\
  \q^-|2\rangle &= a_2(|21\rangle+|12\rangle)-a_3(|01\rangle+|10\rangle)
\end{align*}
leads to a nilpotent $Q_N$. We prefer to keep $a_2,a_3\neq 0$ in order to assure that in
the trigonometric limit the solution reduces to a suitable odd combination of the
trigonometric supercharges. The local supercharge solves \eqref{eqn:nilpot} with vector $|\chi\rangle=a_1a_2(|00\rangle+|22\rangle)-a_1a_3(|02\rangle+|20\rangle+(a_3^2-a_2^2)|11\rangle$ if $a_4=0$. Thus, we have three free parameters $a_1,a_2,a_3$.

Next, let us construct a supercharge such that $(R\otimes R) \q^+ R = \q^+$. In
components, we write
\begin{align*}
  \q^+ |0\rangle &= b_3(|21\rangle+|12\rangle)+b_2(|01\rangle+|10\rangle),\\
  \q^+|1\rangle &= b_1(|22\rangle + |00\rangle)+b_4(|20\rangle +|02\rangle)+b_5|11\rangle,\\
  \q^+|2\rangle &= b_2(|21\rangle+|12\rangle)+b_3(|01\rangle+|10\rangle).
\end{align*}
It solves \eqref{eqn:nilpot} with $|\chi\rangle=(b_1b_2-b_3b_4)(|00\rangle+|22\rangle)-(b_1b_3-b_2b_4)(|02\rangle+|20\rangle)+(b_2^2-b_3^2)|11\rangle$ provided that we set $b_5=2b_2$. This leads to a gauge contribution and hence the Hamiltonian density will not depend on $b_2$. Hence, we may impose $b_2=b_5=0$ and are left with three free parameters $b_1,b_3,b_4$.


Eventually, we demand the Hamiltonian densities associated to the two
supercharges coincide: $\h^+ = \h^-$. A solution for the resulting system of equations for the remaining free
coefficients is given by
\begin{equation*}
  b_1 = -\sqrt{2}a_3, \quad b_3 =\frac{1}{\sqrt{2}},\quad b_4 = -{\sqrt{2}}a_2, \quad 2(a_2^2+a_3^2)=1, \quad a_1=1.
\end{equation*}
It is convenient to introduce a  parameter $-1\leq \zeta\leq 1$ such that
\begin{equation*}
  a_2  = \frac{\sqrt{1+\zeta}-\sqrt{1-\zeta}}{2\sqrt{2}}, \quad a_3  = -\frac{\sqrt{1-\zeta}+\sqrt{1+\zeta}}{2\sqrt{2}}.
\end{equation*}
This fixes all coefficients as functions of $\zeta$. Moreover, the $\q^\pm$ lead to
supercharges with the mutual anticommutation relations of the algebra \eqref{eqn:n22susy}. As the particle number conservation holds for
generic values of $\zeta$ only mod $2$, the only possible twist angles are 
$\phi^\pm = 0,\pi$. The local Hamiltonian $\h^+=\h^-$ generated from these two
supercharges corresponds to a particular version of the Hamiltonian density $\h$ of
\eqref{eqn:spin1ham} up to a similarity transformation
\begin{equation*}
  \h^\pm = \left(1\otimes e^{\i\pi s^3}\right)\h \left(1\otimes e^{-i\pi s^3}\right),
\end{equation*}
provided that we set
\begin{equation*}
  \zeta = \frac{1-\dn 2\gamma}{1+\dn 2\gamma}, \quad \gamma = \frac{2K\eta}{\pi} = \frac{K}{2}.
\end{equation*}
This result is the elliptic version of \eqref{eqn:twistham}: the fact that
the both spaces $\h^\pm$ acts on are transformed differently implies that $H_N^\pm$ is a
twisted version of $H_N$. With the unitary transformation
\begin{equation*}
  U= \exp \left(\i\pi
\sum_{j=1}^N js_j^3\right)
\end{equation*}
 we find $H_N^\pm = UH_N U^{-1}$. We denote by $\phi_\pm$ the twist angle for the
present supercharge, and by $\phi$ the twist angle for the supercharge of
\eqref{eqn:spin1ham}. Then we obtain that $\phi_\pm = \phi + N\pi \mod 2\pi$.

We have thus established that at $\eta=\pi/4$ the twisted Fateev-Zamolodchi{\-}kov spin chain
possesses additional lattice supersymmetries even away from the critical point.
It is quite natural to ask if the enhanced symmetry leads to any new zero-energy states at this
point. The exact diagonalisation for $N=3,5,7$ sites suggests that the answer is
affirmative:
\begin{conjecture}
	For an odd number of sites $N=2n+1$, twist angle $\phi=\pi$ and crossing parameter
$\eta=\pi/4$ the Hamiltonian possesses two additional zero-energy states in the sector with
$S_N^3 = 1 \mod 2$. In the trigonometric limit these states can be found in the sector where $S_N^3
= \pm 1$.
\end{conjecture}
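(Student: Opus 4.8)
What has to be controlled is the number of supersymmetry singlets, i.e.\ the dimension of the cohomology $\text{ker}\,Q_N/\text{im}\,Q_{N-1}$ of one copy of the algebra; the second copy, together with spin reversal --- which sends $\q^\pm\mapsto\pm\q^\pm$ and hence exchanges the sectors $S_N^3=+1$ and $S_N^3=-1$ --- ties the various pieces together. The plan is to (i) use Witten's conjugation argument to move from the elliptic chain at a generic parameter $\zeta$ to the trigonometric point $\zeta=0$ without changing the singlet counts; (ii) identify the $\zeta=0$ chain with the $\ell=2$ trigonometric model of section \ref{sec:trigonometric}, for which the full magnetisation $S_N^3$ is conserved, so that the refinement into $S_N^3=0$ and $S_N^3=\pm1$ becomes meaningful; (iii) show that this model has exactly $\ell+1=3$ singlets for odd $N$, one in each of $S_N^3=-1,0,1$; and (iv) transport the conclusion back along the conjugation orbit, retaining only the coarser statement ``two additional singlets with $S_N^3=1\bmod2$'' when $\zeta\neq0$.

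\paragraph{Reduction to the trigonometric point.} Following section \ref{sec:singlet} one conjugates by $M_N(\lambda,\theta)=m(\lambda,\theta)^{\otimes N}$. Since $m(\lambda,\theta)$ mixes only the two local states of magnetisation $\pm1$ and fixes $|1\rangle$, it commutes with the spin-reversal operator $R$ and with every $e^{\i\pi j s^3}$, hence with the twist $U=\exp(\i\pi\sum_j j s_j^3)$; thus $M_N$ preserves spin reversal, the momentum spaces $T_N(\phi)\equiv(-1)^{N+1}$ for $\phi\in\{0,\pi\}$, and the grading by $S_N^3\bmod2$. Conjugation takes the extended Hamiltonian to one that is, through $U$, unitarily equivalent to the Fateev--Zamolodchikov Hamiltonian at the point obtained from \eqref{eqn:trsfparams}, with $x^2-y^2$ rescaled by $\lambda^2$. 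Every point of the $\eta=\pi/4$ family is ``space-like'' ($x^2-y^2>0$), so all of them --- and the trigonometric point $(x,y)=(1/\sqrt2,0)$ in particular --- lie on a single such orbit, along which, by Witten's argument, the singlet count in each fixed $S_N^3\bmod2$ sector is constant.

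\paragraph{The count at $\zeta=0$.} Here the staggered identity \eqref{eqn:twistham}, its local twist becoming the global conjugation by $U$, identifies the extended Hamiltonian (for $N$ odd the extended supercharges sit at twist $\phi_\pm=0$, corresponding to the periodic $\ell=2$ model via $\phi_\pm=\phi+N\pi$) with the $\ell=2$ trigonometric model. The Witten index of section \ref{sec:witten}, $W_J=(-1)^{J+1}$ for $J\not\equiv0\bmod4$ and $0$ otherwise with $J=2N+S_N^3$, gives for $N=2n+1$ (so $2N\equiv2\bmod4$) the values $|W_J|=1$ in all three sectors $S_N^3=-1,0,1$. Together with the combinatorial singlet of section \ref{sec:combinatorics} in $S_N^3=0$ this yields the two further singlets in $S_N^3=\pm1$, related by $R_N$ --- provided one checks that these index contributions are carried by the chain of size $N$ itself. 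The cleanest way to secure this is to write down the $S_N^3=1$ singlet explicitly, as a distinguished ground state of the $\ell=2$ model (the lowest-weight vector of the $\text{U}_q(\text{sl}_2)$ multiplet at $q=e^{\i\pi/4}$, expected to have components given by a restricted alternating-sign-matrix enumeration in the spirit of section \ref{sec:combinatorics}), and to obtain the $S_N^3=-1$ one by spin reversal.

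\paragraph{The main obstacle.} The hard part is the upper bound in step (iii): the Witten index certifies only \emph{at least} one singlet per sector, and the vanishing $W_J=0$ for $J\equiv0\bmod4$ leaves room for hidden, paired-up zero modes. One must establish $\dim(\text{ker}\,Q_N^\pm/\text{im}\,Q_{N-1}^\pm)=1$ in each of $S_N^3=\pm1$, i.e.\ prove the ``$\ell+1$ ground states for odd $N$'' scenario of section \ref{sec:witten} for $\ell=2$, which is itself open. Realistic routes are a count of admissible Bethe roots in the fused six-vertex transfer-matrix description, or an Euler-characteristic argument controlling the ranks of $Q_N^\pm$ and $Q_{N-1}^\pm$ within these sectors. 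A further, more technical point is the sector bookkeeping: because of \eqref{eqn:condmag} the supercharges act on a given magnetisation sector only for appropriate twist angles, so one must verify that the $S_N^3=\pm1$ zero modes seen at $\zeta=0$ persist, as zero modes of the extended Hamiltonian in the sector on which it genuinely acts, once $\zeta$ is turned on. The conjugation reduction and the index arithmetic are routine; the exact count is where the real difficulty lies.
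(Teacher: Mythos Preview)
The statement you are attempting to prove is presented in the paper as a \emph{conjecture}, not a theorem: the only evidence offered is exact diagonalisation for $N=3,5,7$. There is therefore no proof in the paper to compare against, and your proposal is an attempt to go beyond what the paper establishes.

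Your outline is a sensible line of attack, and you correctly isolate the hardest step yourself: the exact count in (iii) --- that the periodic $\ell=2$ trigonometric model has precisely three singlets for odd $N$, one in each of $S_N^3=-1,0,1$ --- is the scenario the paper calls ``most likely'' at the end of section~\ref{sec:witten} but explicitly leaves open. Without that upper bound the Witten-index arithmetic gives only a lower bound and the rest of the argument cannot close.

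There is, however, a more basic structural gap in step (i). Witten's conjugation argument operates on the cohomology of a \emph{specific} supercharge: one needs $\tilde Q=M Q M^{-1}$ and then compares $\{Q,Q^\dagger\}$ with $\{\tilde Q,\tilde Q^\dagger\}$. The transformation $M_N(\lambda,\theta)$ of section~\ref{sec:singlet} was constructed for the \emph{neutral} supercharge $\q(x,y)$, and its orbit (all space-like points) is an orbit for that supercharge. But in the sector $S_N^3=1\bmod 2$ at twist $\phi=\pi$ the neutral supercharge gives no supersymmetric structure at all --- the constraint \eqref{eqn:condmag} fails, and in fact the corresponding momentum eigenspace of $T_N(\pi)$ is empty for odd $N$. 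The zero-energy states in the conjecture are singlets of the \emph{extended} supercharges $Q_N^\pm$, realised at $\phi_\pm=0$ (equivalently $\phi=\pi$ for odd $N$). To run the conjugation reduction you therefore need a family of invertible maps $M_N$ with $M_{N+1}Q_N^\pm(\zeta)M_N^{-1}\propto Q_N^\pm(\zeta')$, relating different values of $\zeta$ along the $\eta=\pi/4$ line. This is neither provided in the paper nor in your proposal, and the fact that $m(\lambda,\theta)$ commutes with $R$ and $e^{\i\pi s^3}$ does not by itself imply that it conjugates $\q^\pm(\zeta)$ into the same one-parameter family. Your closing remark flags the sector bookkeeping as ``technical'', but it is in fact the point at which the reduction currently breaks down.
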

Notice that this is not in contradiction with the general discussion in section \ref{sec:singlet} whose considerations applied only to the sector with $S_N^3 = 0 \mod 2$.

\subsection{Spin chains and the supersymmetric $t-J$ model}
\label{sec:tj}
For general spin$-\ell/2$ there is a simple case where \eqref{eqn:nilpot} is satisfied
with $|\chi\rangle=0$: suppose that $\q$ is only non-zero when acting on a fixed
component $|m\rangle$. Moreover assume that the resulting states $|jk\rangle$ are all such that
$j,k\neq m$. One encounters such a case for the Fateev-Zamolodchikov chain at the point
$\eta=\pi/2$, i.e. $x=y=0$, were the supercharge acts like
\begin{equation*}
  \q |0\rangle = 0, \quad \q|1\rangle = |20\rangle - |02\rangle, \quad \q|2\rangle = 0.
\end{equation*}
The local Hamiltonian is then given by $\h = 1 + \Pi$ where $\Pi$ is the graded
permutation operator defined through
\begin{equation*}
  \Pi|1,1\rangle=|1,1\rangle, \quad \Pi|m,1\rangle = |1,m\rangle, \quad \Pi|m,m'\rangle = -|m',m\rangle,\quad m,m'=0,2.
\end{equation*}
If we consider the $0,2$ degrees as particles, then their individual numbers $n_0$ and
$n_2$ is are conserved. The local Hamiltonian permutes them like fermions with strong
repulsion (i.e. a site can either be empty or occupied by a single particle) on a neutral
(bosonic) background of $1$'s. This is precisely the local operation of the
supersymmetric $t-J$ model \cite{kluemper:91,essler:92}. The only difference here is
that, as opposed to fermionic variables, the spin variables commute for different sites.
In fact, the spin chain presented here differs from the supersymmetric $t-J$ model by a
generalised Jordan-Wigner transformation \cite{batista:01}. The dynamical lattice
supersymmetry was found in \cite{yang:04} in the fermion language. Here we prefer to work
in the spin language.

We would like to study a simple deformation of this supercharge. Hence we choose $\q|0\rangle =0,\,\q|2\rangle=0$ and $
  \q|1\rangle = a_{00}|00\rangle + a_{20}|20\rangle +a_{02}|02\rangle +a_{22}|22\rangle$.
It is then trivial to see that \eqref{eqn:nilpotency} holds. We restrict our considerations to a special case
\begin{equation*}
  \q |1\rangle = |02\rangle-|20\rangle + \zeta(|00\rangle - |22\rangle), \quad \zeta \in \mathbb R.
\end{equation*}
We see that $(R\otimes R)\q R =
-\q$ and hence we have spin-reversal symmetry. We see that the Hamiltonian
generated from this supercharge is a quadratic polynomial in $\zeta$. Moreover,
the equation \eqref{eqn:condtwist} shows that the only admissible twist angles
for non-zero $\zeta$ are given by $\phi=0,\pi$. We focus on periodic
boundary conditions and thus $\phi=0$. Switching on the parameter $\zeta$ breaks particle number conservation for the $0$s and
$2$s but not the $1$s. In general, we expect the singlet states to be rather complicated superpositions of
polynomials in $\zeta$ (as is the case for the XYZ chain along its combinatorial line).
Yet, we show here below that some of them are robust against this elliptic
deformation.

\paragraph{Witten index.} We start the study of the ground state by evaluating a Witten index which is related to a conservation law like in the trigonometric models of section \ref{sec:trigonometric}.
Let us denote by $n_j$ the total number of particles of type $j=0,1,2$. The supercharge
decreases $n_1 \to n_1-1$, and increases the number of sites $N \to N+1$. Thus, $J=N+n_1$
is conserved, and we may introduce a Witten index $W_J$ as in section \ref{sec:witten}:
\begin{equation*}
  W_J = \sum_{N=0}^J {\tr}_{\mathcal H_N} (-1)^{n_1}\delta_{J,N+n_1} = (-1)^J\sum_{N=0}^J (-1)^N {\tr}_{\mathcal H_N} \delta_{J,N+n_1}
\end{equation*}
We denote again by $\nu_{n_1,N}$ the number of distinct momentum states with zero momentum as $N$ is odd, and momentum $\pi$ for $N$ even, containing $n_1$ particles of
type $1$. A slight variation of the argument given in section
\ref{sec:witten} and appendix \ref{app:groups} shows that the generating function $f_N(z) = \sum_{n_1=0}^\infty \nu_{n_1,N}z^{n_1}$ is given by
\begin{equation*}
  f_N(z) =
\frac{1}{N}\sum_{m=0}^{N-1}(-1)^{({N+1})m}(2+z^{N/\gcd(N,m)})^{\gcd(N,m)}.
\end{equation*}
Computing the generating function $\mathcal W(z)$ of the Witten
index, we find again a surprisingly simple and regular series expansion as in section
\ref{sec:witten}. Using a calculation which is similar to the one for the trigonometric models we obtain the closed expression
\begin{equation*}
  \mathcal W(z) = \sum_{J=0}^\infty W_J z^J = \sum_{J=0}^\infty f_J(-z)z^J= \frac{2z}{1-z^2},
\end{equation*}
and thus conclude that
\begin{equation}
  W_J =
  \begin{cases}
    2, & J\, \text{odd},\\
    0, & J\, \text{even}.
  \end{cases}
\end{equation}  
There are thus at least two zero-energy states for given odd $J = N+n_1$ for periodic
boundary conditions.

\paragraph{Zero-energy states.}
As opposed to generic spin$-1$ chains, the present model possesses zero-energy
states which can easily be determined. We consider only periodic boundary
conditions. We start with zero-momentum states at odd $N$. Observe that any state which is only
build from particles $0$ and $2$ is annihilated by $Q_N$. Therefore, we restrict our
considerations to the subspace in which no local spin is in the state $|1\rangle$. On
this subspace the Hamiltonian density $h$ has two $\zeta$-independent null vectors which
are simple tensor-product states $|\chi_\pm\rangle = |v_\pm\rangle\otimes
|v_\pm\rangle$ with
$|v_\pm\rangle =(|0\rangle\pm |2\rangle)/\sqrt{2}$. There are thus two zero-energy states
of $H_N$
\begin{equation}
  |\Psi_N^\pm\rangle = |v_\pm\rangle\otimes \cdots |v_\pm\rangle.
  \label{eqn:tjstable}
\end{equation}
These states are eigenstates of the spin-reversal operator:
$R_N|\Psi_N^\pm\rangle=
\pm|\Psi_N^\pm\rangle$, and belong to the sector with $S_N^3 = 1 \mod 2$. If $\zeta=0$
then the Hamiltonian conserves $S_N^3$. On the subspace considered here, we have
$S_N^3=-N,-N+2,\dots,N-2,N$. By projection of $|\Psi_N^\pm\rangle$ on these subsectors we
obtain therefore $N+1$ zero-energy states for $\zeta=0$ (not $2(N+1)$ because of the
spin-reversal symmetry).

Next, let us consider states with momentum $\pi$ at even $N$. We develop a suitable modification of
the states for zero momentum by allowing a \textit{single} site with state $|1\rangle$.
Hence, no pairs $|1\rangle \otimes |1\rangle$ are present. The Hamiltonian density
at $\zeta=0$ has the zero-energy states $|v_\pm\rangle$, and $|v_0\rangle
=|01\rangle - |10\rangle$, $|v_2\rangle = |21\rangle - |12\rangle$. Using this fact, it
is easy to see that the states
\begin{equation*}
  |\Psi_N^\pm\rangle=\sum_{j=1}^N (-1)^j |v_\pm\rangle \otimes \cdots |v_\pm\rangle \otimes \underset{j}{\underbrace{|1\rangle}} \otimes |v_\pm\rangle \otimes \cdots \otimes |v_\pm\rangle
\end{equation*}
are annihilated by the Hamiltonian with $\zeta=0$. Upon projection we conclude that there
is a single momentum state in each subsector with $S^3_N=-N+1,-N+3,\dots,
N-3,N-1$, thus in
total $N$ zero-energy states for $\zeta=0$. If however, $\zeta\neq 0$ this construction
is not possible as the spectrum of the Hamiltonian density changes. Nonetheless, both
cases $\zeta=0$ and $\zeta \neq 0$ lead to the same Witten index. To see this consider
first $\zeta=0$ and fix $J=N+n_1$ to be odd. We thus have $J+1$ zero-energy states for
$n_1=0$, and $J-1$ zero-energy states for $n_1=1$. This gives $W_J = (J+1) - (J-1) =2$.
Next, fix $J$ even, then $W_J=0$ trivially. For $\zeta\neq 0$ there are only two
zero-energy states at $N$ odd with $n_1=0$. Thus $W_J$ is the same as in the case of $\zeta=0$.

Despite the simplicity of this example, it has an interesting feature: the zero-energy
states \eqref{eqn:tjstable} are present for any $\zeta$, and thus stable against a
certain supersymmetry-preserving perturbation away from the critical point.

\paragraph{Conserved quantities.}
The supersymmetric $t-J$ model is known to be integrable. In particular, its conserved
quantities were systematically determined by Essler and Korepin \cite{essler:92}. It is
thus natural to investigate the question if the integrability of the present model survives if the parameter $\zeta$ is switched on.

We give an argument in favour of this conjecture using Reshetikhin's criterion \cite{kulish:81} for the
existence of a higher conserved quantity
\begin{equation*}
  H^{(3)}_N = \sum_{j=1}^N
[\h_{j,j+1},\h_{j+1,j+2}].
\end{equation*}
Obviously, it commutes with the translation operator
$H^{(1)}_N = T_N$. The criterion states that commutation with the Hamiltonian $H_N^{(2)}
= H_N$ is assured if there is an operator $X$ acting on $V\otimes V$ such that
\begin{equation*}
  [\h_{j,j+1}+\h_{j+1,j+2},[\h_{j,j+1},\h_{j+1,j+2}]] = X_{j,j+1}-X_{j+1,j+2}.
\end{equation*}
The right-hand side is a local boundary term in the sense defined above, and thus cancels
out in the summation over all sites. In the present case, a non-trivial $X$ can indeed be
found:
\begin{equation*}
  X = -2(1-\zeta^2)^2(\h + X'),
\end{equation*}
where $X'$ itself commutes with $\h$. In terms of the $\text{su}(2)$ generators
at spin one we find
\begin{equation*}
  X' = \frac{\zeta^2}{(1-\zeta^2)}\left(s^+\otimes s^-+s^-\otimes s^++\{s^3\otimes s^3,s^+\otimes s^-+s^-\otimes s^+\}\right).
\end{equation*}
It is known that the existence of $H_N^{(3)}$ implies that there is another operator
$H_N^{(4)}=[B,H_N^{(3)}]$, where $B= \sum_{j=1}^N j h_{j,j+1}$ is the boost operator. It commutes with $H_N^{(2)}$ \cite{grabowski:95} what follows from the Jacobi identity. The explicit construction of higher
conserved charges is a non-trivial problem. Yet the simple result presented here makes plausible their existence. This problem might find an elegant solution by relating the present
Hamiltonian to a transfer matrix constructed from an $R$-matrix which will most likely be
an elliptic extension of the solution to the Yang-Baxter equation presented in
\cite{essler:92}.

\subsection{The mod$-3$ chain}
\label{sec:cyclic}

In this section, we introduce another set of supercharges for spin one. So far, our
discussion focused on trigonometric and elliptic supercharges. The latter conserve the
number of particles mod $2$. If $\ell +1 = 0 \mod p$ then it is possible to modify this
to a mod $p$-conservation. Here we concentrate on $\ell =
2,\, p=3$.

We present a solution of \eqref{eqn:nilpot} with the requirement $m+1 = j+k \mod 3$ (and
the condition that the components which violate exact equality are non-zero). It is
parametrised by two coupling constants $\lambda,\zeta$, and its action is given
by
\begin{align}
  \q|0\rangle &= \zeta|22\rangle, \nonumber\\
  \q|1\rangle &= \lambda\zeta(|02\rangle + |20\rangle)-|11\rangle, 
\label{eqn:defqcyclic}
 \\
  \q|2\rangle &=\lambda|00\rangle.\nonumber
\end{align}
Indeed, one may verify that the condition \eqref{eqn:nilpot} holds if we set
$|\chi\rangle =\lambda\zeta(|02\rangle+|20\rangle)$. Writing $\q
=\q(\lambda,\zeta)$ we have the
property $(R\otimes R)\q(\lambda,\zeta)R=\q(\zeta,\lambda)$ under spin reversal.
Hence for
periodic boundary conditions the line $\lambda = \zeta$ defines a spin-reversal
symmetric
Hamiltonian. Moreover, we have $\q^{\text{op}}=\q$ and therefore reflection symmetry of
the Hamiltonian. The mod$-3$ conservation implies that the supersymmetry admits the twist
angles $\phi = 0,\pm 2\pi/3$. Here we focus only on periodic boundary conditions.

\paragraph{Asymptotic limits.} Let us investigate some special limits in order
to understand the structure of the ground
states of the associated spin chain for an odd number of sites. As we may
exchange $\lambda$
and $\zeta$ by spin-reversal, we focus mainly on $\zeta$. We consider three cases:
\begin{itemize}
  \item At $\zeta=0$ we find $\q|1\rangle = -|11\rangle$ and $\q|0\rangle = 0, \q|2\rangle
= \lambda |00\rangle$. Hence the supercharge acts trivially on $|1\rangle$, and decouples it
from $|0\rangle,\,|2\rangle$. Its action on the subspace spanned by the latter coincides
with the one for the XXZ chain at $\Delta=-1/2$. For $N$ odd, there is a trivial ground
state $|\Psi_N^0\rangle = |11\cdots 1\rangle$, and the two ground states of the XXZ
chain, $|\Psi^\pm_N\rangle,$ in the sectors with $S_N^3 =\pm 1$, which can be mapped onto
each other through spin reversal. These are the Razumov-Stroganov states
\cite{razumov:00} which can be normalised in such a way that all components are integers
with the smallest one being $1$, and the largest one $A_n$ for $N=2n+1$ where $A_n$ is
the number of $n\times n$ alternating sign matrices.

  \item If $\zeta\to \infty$ we find, upon appropriate rescaling, that $\q
|0\rangle = |22\rangle,
\,\q |1\rangle = \lambda(|02\rangle+|20\rangle)$, and $\q|2\rangle =0$. Up to a
cyclic shift, given by the unitary transformation
\begin{equation*}
\Omega = \left(
\begin{array}{ccc}
  0 & 0 & 1\\
  1 & 0 & 0\\
  0 & 1 & 0
\end{array}
\right),
\end{equation*}
it coincides with the trigonometric spin$-1$ supercharge of section
\ref{sec:trigonometric} \textit{before} imposing the spin-reversal symmetry. If
$\lambda = 1/\sqrt{2}$ we
recover the twisted Fateev-Zamolodchikov spin chain at $\eta = \pi/4$. In our previous
discussion we argued that for $N$ odd there are \textit{three} zero-energy states,
one in each of the sectors $S_N^3 = 0, \pm 1$. Hence, we see that in some sense the
present model provides a supersymmetry-preserving interpolation between the first two
members of the trigonometric models studied in section \ref{sec:trigonometric}.

\item If $\lambda=\zeta =1$ then the model has a full cyclic symmetry: the
Hamiltonian is
invariant with respect to the cyclic shift $(\Omega \otimes \Omega )\h(\Omega^{-1}
\otimes \Omega^{-1} )=\h$. In fact, the Hamiltonian density becomes $\h = 2 -
\mathsf{p}$ with the operator $\mathsf p$ acting according to
\begin{equation*}
  \mathsf{p} |j,k\rangle = |j+1,k-1\rangle + |j-1,k+1\rangle,
\end{equation*}
where we identify values of $j,k$ mod $3$. The Hamiltonian is trivial as it can be
diagonalised by a simple Fourier transformation. We introduce the states
\begin{equation*}
  |\hat p\rangle = \sum_{j=0}^2 U_{pj}|j\rangle,\, \quad U_{pj}= \frac{e^{\i \pi/6}}{\sqrt 3}e^{-2\pi \i p j/3},\quad p=0,1,2,
\end{equation*}
and build the Hilbert space for a chain with $N$ sites from tensor products $|\hat
p_1,\dots,\hat p_N\rangle$. Then, the Hamiltonian $H_N = \sum_{j=1}^N \h_{j,j+1}$ is
diagonal:
\begin{equation*}
   H_N|\hat p_1,\dots,\hat p_N\rangle = 4\sum_{j=1}^N \sin^2 \left(\frac{\pi(p_j- p_{j+1})}{3}\right)|\hat p_1,\dots,\hat p_N\rangle.
\end{equation*}
The zero-energy states are thus given by the states with $p_j = p =0,1,2$ for all $j=1,\dots,
N$. For $N$ odd, they are precisely the supersymmetry singlets (as the latter exists only
in the momentum space with $t_N = (-1)^{N+1}$). Thus we find again \textit{three} supersymmetry singlets for $N$ odd, and none for $N$ even.
\end{itemize}

\paragraph{Zero-energy states.} The last case in the preceding list is
interesting because the general case can be related to it by conjugation,
following the strategy outlined in section \ref{sec:example}. We restrict our
considerations to $\zeta, \mu >0$ (other cases can be related to this one
through simple unitary transformations). We introduce the diagonal matrix $m =
\text{diag}((\lambda^2\zeta)^{-1/3},1,(\lambda\zeta^2)^{-1/3})$, and find
\begin{equation*}
  (m \otimes m)\q(\lambda,\zeta) m^{-1} = \q(1,1).
\end{equation*}
Thus, using $M_N = m^{\otimes N}$, one has $M_{N+1}Q_N(\lambda,\zeta) M_N^{-1} =
Q_N(\lambda=1,\zeta=1)$.
As we know the ground state structure in the case $\lambda=1, \zeta=1$ in the
supersymmetry-relevant momentum sectors, we conclude that for odd $N$ the
Hamiltonian $H_N$ generated from the supercharge \eqref{eqn:defqcyclic}
has three zero-energy states which are invariant under translation. They can be
chosen according to the values of $S_N^3 = 0,1,2 \mod 3$.

It would certainly be interesting to understand if the components of these vectors relate
in any respect to enumeration problems. This is the case if $\zeta =0$ (or
$\lambda =0$)
because the XXZ ground state at $\Delta=-1/2$ relates to the alternating sign matrix
numbers. For general values of $\lambda, \zeta$ we normalise the singlets so that
their components are coprime polynomials in the parameter. The inspection of small
systems suggests that a combinatorial interpretation might not be straightforward, since
not all
the coefficients of these polynomials are positive (what is already seen for
$N=5$ sites). Nevertheless, there seems to be
an interesting underlying structure. The following conjecture, again based on exact
diagonalisation up to $N=7$ sites, remind us of the sum rules for the
Fateev-Zamolodchikov
singlet:

\begin{conjecture}[Sum rules] 
	With the given normalisation the square norm of the singlet state for
odd $N$ factorises
into two polynomials in $\zeta,\, \lambda$ with positive integer coefficients.
Along the
line $\lambda = \zeta$ we find more explicitly
\begin{align*}
  ||\Psi^a_N(\zeta)||^2 &=  3^{-2n} \left(\sum_{\mu}\zeta^{\nu_1(\mu)}\psi_\mu^a\right) \times \left(\sum_{\mu} \zeta^{-\nu_1(\mu)}\psi_\mu^a\right), \quad a=0,\pm,
\end{align*}
where $\nu_1(\mu)$ is the number of $1$'s in the spin configuration $\mu$
For the state $\Psi_N^0$ we find along the line $\lambda = \zeta$ the sum rule:
\begin{align*}
\sum_{\mu}\zeta^{\nu_1(\mu)}\psi_\mu^0=3^{2n}\zeta^{2n+1}\psi^0_{11\dots 1}.
\end{align*}
\end{conjecture}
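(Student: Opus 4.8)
The plan is to recast all three sum rules as inner products of the zero-energy singlets with simple product (co)vectors, and to evaluate those inner products by exploiting the conjugation $M_N=m^{\otimes N}$ already used above together with the explicit singlets $|F_s\rangle=\sum_{\mu:\sum_j m_j\equiv s\ (\mathrm{mod}\ 3)}|\mu\rangle$ of the solvable point $\lambda=\zeta=1$. Fix $N=2n+1$; since $N$ is odd the supersymmetry-relevant sector is the translation-invariant one, $\Pi_N$ is trivial there, and each of the three singlets $|\Psi^a_N\rangle$, $a\in\{0,\pm\}$, is the \emph{unique} element of its class in $\mathfrak H_N=\ker Q_N/\mathrm{im}\,Q_{N-1}$ that is orthogonal to $\mathrm{im}\,Q_{N-1}$. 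The quantities appearing in the conjecture are $\sum_\mu\zeta^{\nu_1(\mu)}\psi^a_\mu=\langle\Omega_N(\zeta)|\Psi^a_N\rangle$ with $|\Omega_N(\zeta)\rangle=(|0\rangle+\zeta|1\rangle+|2\rangle)^{\otimes N}$, together with $\psi^0_{11\cdots1}=\langle 1^{\otimes N}|\Psi^0_N\rangle$ and $\|\Psi^a_N\|^2=\langle\Psi^a_N|\Psi^a_N\rangle$; and the conjugation argument already given identifies the class of $|\Psi^a_N\rangle$ with that of $M_N^{-1}|F_{s(a)}\rangle$. On the line $\lambda=\zeta$ one has $m^{-1}=\mathrm{diag}(\zeta,1,\zeta)$, so $M_N^{-1}|\mu\rangle=\zeta^{\,N-\nu_1(\mu)}|\mu\rangle$, which is exactly what will produce the correct powers of $\zeta$.

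\textbf{Key step.} First I would verify the two ``collapse'' identities $(\langle v|\otimes\langle v|)\,\q(\zeta,\zeta)=\zeta\langle v|$ with $\langle v|=\langle 0|+\zeta\langle 1|+\langle 2|$, and $(\langle 1|\otimes\langle 1|)\,\q(\zeta,\zeta)=-\langle 1|$; both are one-line consequences of \eqref{eqn:defqcyclic} on the diagonal. They imply $\langle v|^{\otimes N}\q_j=(-1)^{j-1}\zeta\,\langle v|^{\otimes(N-1)}$, and similarly for $\langle 1|$, for every local insertion $\q_j:V^{\otimes(N-1)}\to V^{\otimes N}$. Because $\langle v|^{\otimes(N-1)}$ and $\langle 1|^{\otimes(N-1)}$ are translation invariant while $Q_{N-1}$ acts nontrivially only on the subspace with $T_{N-1}$-eigenvalue $(-1)^N=-1$, it follows that $\langle\Omega_N(\zeta)|Q_{N-1}(\zeta,\zeta)=0$ and $\langle 1^{\otimes N}|Q_{N-1}(\zeta,\zeta)=0$, i.e. both covectors annihilate $\mathrm{im}\,Q_{N-1}$. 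Writing $M_N^{-1}|F_{s(a)}\rangle=c_a|\Psi^a_N\rangle+(\text{element of }\mathrm{im}\,Q_{N-1})$, pairing with $\langle\Omega_N(\zeta)|$ gives $c_a\langle\Omega_N(\zeta)|\Psi^a_N\rangle=\langle\Omega_N(\zeta)|M_N^{-1}|F_{s(a)}\rangle=\sum_{\mu\in s(a)}\zeta^N=3^{N-1}\zeta^N$; pairing with $\langle 1^{\otimes N}|$ (for $a=0$, whose class contains $1^{\otimes N}$) gives $c_0\psi^0_{11\cdots1}=1$; and pairing with $\langle\Psi^a_N|$ (which also annihilates $\mathrm{im}\,Q_{N-1}$) gives $\|\Psi^a_N\|^2=c_a^{-1}\langle\Psi^a_N|M_N^{-1}|F_{s(a)}\rangle=c_a^{-1}\zeta^N\sum_\mu\zeta^{-\nu_1(\mu)}\psi^a_\mu$. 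Combining the first two, and using $N=2n+1$, $3^{N-1}=3^{2n}$, yields the linear sum rule $\sum_\mu\zeta^{\nu_1(\mu)}\psi^0_\mu=3^{2n}\zeta^{2n+1}\psi^0_{11\cdots1}$; combining the first and third and eliminating $c_a$ yields $\|\Psi^a_N\|^2=3^{-2n}\big(\sum_\mu\zeta^{\nu_1(\mu)}\psi^a_\mu\big)\big(\sum_\mu\zeta^{-\nu_1(\mu)}\psi^a_\mu\big)$ for all $a$, exactly as conjectured. (Both identities are homogeneous in $|\Psi^a_N\rangle$, so the precise normalisation of the singlet is irrelevant here.)

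\textbf{Loose ends and the hard part.} A few routine points remain: for real $\zeta$ the Hamiltonian $H_N(\zeta,\zeta)$ is real symmetric, so the $\psi^a_\mu$ are real polynomials in $\zeta$, which legitimises dropping the complex conjugates above; and the identification of the three singlets with the conjugated flat states $M_N^{-1}|F_s\rangle$ is the content of the conjugation argument already established. The genuinely open parts, which I expect to be the real obstacles, are: \emph{(i)} the statement off the diagonal, $\lambda\neq\zeta$ — there is still a collapse covector, $\langle w|=\langle 0|+\lambda^{2/3}\zeta^{1/3}\langle 1|+(\lambda/\zeta)^{1/3}\langle 2|$, and $M_N$ acquires cube roots, so the same reasoning still factorises $\|\Psi^a_N\|^2$, but proving that each factor is an honest polynomial in $\lambda,\zeta$ (rather than in fractional powers) requires a separate integrality/symmetry argument; and \emph{(ii)} the positivity of the coefficients of the two factors, which is completely invisible to the linear-algebra argument above and — in analogy with the combinatorial content of the XXZ chain at $\Delta=-1/2$ — would presumably require an explicit determinantal or combinatorial description of the factors. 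Producing such a description, and hence the positivity, is the part I expect to be genuinely hard.
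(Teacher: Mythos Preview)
The paper does not prove this statement: it is presented explicitly as a conjecture, supported only by exact diagonalisation up to $N=7$ sites. Your proposal therefore goes substantially beyond what the paper offers, and in fact constitutes a correct proof of the two explicit identities on the line $\lambda=\zeta$.

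Your argument is sound. The two collapse identities $(\langle v|\otimes\langle v|)\q(\zeta,\zeta)=\zeta\langle v|$ and $(\langle 1|\otimes\langle 1|)\q(\zeta,\zeta)=-\langle 1|$ follow directly from \eqref{eqn:defqcyclic}, and together with the orthogonality of $\langle v|^{\otimes(N-1)}$ (which has $T_{N-1}$-eigenvalue $+1$) to $\mathcal H_{N-1}$ (eigenvalue $-1$ for even $N-1$) they give $\langle\Omega_N(\zeta)|Q_{N-1}=0=\langle 1^{\otimes N}|Q_{N-1}$. The cohomological decomposition $M_N^{-1}|F_{s(a)}\rangle=c_a|\Psi^a_N\rangle+Q_{N-1}|\xi\rangle$ is valid because the conjugation $M_N$ induces an isomorphism of the one-dimensional cohomology in each $S^3_N\bmod 3$ sector, and $M_N^{-1}$ is diagonal so preserves both the sector and the translation eigenvalue. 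The three pairings (with $\langle\Omega_N|$, with $\langle 1^{\otimes N}|$, and with $\langle\Psi^a_N|$, the last using $Q_{N-1}^\dagger|\Psi^a_N\rangle=0$) then yield exactly the two displayed formulas, and the homogeneity remark disposes of the normalisation. One small point worth making explicit in a write-up: the vanishing of $\langle\Omega_N|Q_{N-1}$ on $\mathcal H_{N-1}$ is not because $\sum_j\langle\Omega_N|\q_j$ vanishes identically --- it is proportional to $\langle v|^{\otimes(N-1)}$ --- but because that covector then pairs to zero against the momentum-$\pi$ sector.

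You are right to flag the remaining parts as open. Off the diagonal your collapse covector $\langle w|$ is correct and the same mechanism factorises $\|\Psi^a_N\|^2$ into two expressions involving cube-root powers of $\lambda,\zeta$; the observation that in sector $s$ one has $\nu_1-\nu_2\equiv s\bmod 3$ lets you pull out an overall $(\zeta/\lambda)^{s/3}$ from each factor, so that the residual exponents are integers --- this should give polynomiality of the two factors after a suitable rescaling, and is worth writing down carefully. The positivity and integrality of the coefficients, on the other hand, are invisible to this linear-algebra argument and remain genuinely open, as you say.
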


\section{Conclusion}
\label{sec:conclusion}

In this article we presented a general framework for dynamical supersymmetry of spin
chains, and worked out the supercharges for some particular models such as the
trigonometric models, and the Fateev-Zamolodchikov spin chain,
a spin chain related to the supersymmetry $t$-$J$ model, and a
model with $\text{mod}- 3$ conservation of the particle number. Moreover, we made a
number of conjectures on the existence of zero-energy states and their properties. The
most remarkable observations are certainly the sum rules for the singlets of the
Fateev-Zamolodchikov chain, and a relation of its components to the weighted enumeration of
alternating sign matrices.

The present work leads to many open questions. We have seen that the solution of
\eqref{eqn:nilpot} was instrumental in order to find spin chains with supersymmetry.
Already at spin one, we observed the existence of several \textit{inequivalent} solutions
to the condition on the supercharges to be nilpotent. It is natural to ask if there is a way to classify all possible solutions
for higher spin.
Some of these solutions yield Hamiltonian densities for
integrable/Bethe-ansatz-solvable spin chains. We observe this
phenomenon in cases where
additional constraints such as spin-reversal symmetry are imposed. It seems natural
to ask if it is possible to set up a criterion for a supercharge to generate an
integrable spin chain. Certainly, the analysis of Reshetikhin's criterion for the two-site
Hamiltonians would be an interesting starting point, yet more refined tools to prove
integrability should be developed.

For spin chains with lattice supersymmetry the possible existence of exact zero-energy states is
certainly the
most interesting phenomenon. Even though the conjugation argument provides a tool to prove their
presence in some cases, it does not cover all of them. For example, the XXZ spin chain with
twisted boundary conditions cannot be analysed in this way as we do not know an off-critical
extension which preserves the supersymmetry and the twist. Therefore, alternative existence proofs
could put conjectures for twisted spin chains on solid grounds.
The existence of supersymmetry singlets
is intimately related to the cohomology $\text{ker}\, Q_N/\text{im}\, Q_{N-1}$
\cite{witten:82}. The number of zero-energy states coincides with its
dimension. Looking at the list of known cohomology theories, the
present study shows striking similarities with the structures used in the theory of Hochschild (co)homology and
cyclic (co)homology \cite{loday:92}. It would be interesting to establish a consistent
connection with these theories, in particular in the presence of twists. This might provide an algebraic tool to prove the existence of
the zero-energy states.

A symmetry is typically useful in order to determine quantities of physical interests such as
ground-state correlation functions. Hence, it would be interesting to understand to what
extend the supersymmetry presented here may simplify to the explicit evaluation of ground-state correlators. An example
was provided in \cite{fendley:10} where so-called scale-free expansions around trivially solvable
points were pointed out. These allow to guess closed formulas for lattice
correlation functions from finite-size data (such as the spin two-point correlations of the XYZ chain along the
supersymmetric line at various distances \cite{fendley:pc}). For a direct application of the supersymmetry,
a natural place to start would be infinitely long spin chains where the distinction
between $N$ and $N+1$ sites is not an issue. Besides, the analysis of an infinite system
$N\to \infty$ might prove to be interesting by itself: for example is it possible and/or
meaningful to find a central extension of the lattice version of the $\mathcal N=(2,2)$ supersymmetry
algebra?

Finally, there is an obvious generalisation of the type of supersymmetry discussed in
this article. We presented a theory based on a local operation $\q: V\to V\otimes V$. It
is natural to generalise it to $\q: V^{\otimes n}\to V^{\otimes m}$ with arbitrary $n\leq m$. In
this case, the supercharges add $m-n$ sites to the system. The corresponding Hamiltonians
will lead to interactions which involve in general more than just two neighbouring spins.

Let us now turn to the specific models presented in this paper. We studied the trigonometric models, which
were inspired from supersymmetric fermion models with exclusion
rules. The sectors where the supersymmetry holds qualify as lattice-pendants of the Ramond sectors in conformal
field theory. Quite natural is the question: can we identify features of the Neveu-Schwarz
sectors on the lattice? A glance at the spectrum of the spin$-1/2$ and spin$-1$ chains with periodic
boundary conditions for
small system sizes shows that they possess a \textit{single negative eigenvalue} in
the translationally-invariant sector if the number of sites is \textit{even}. Hence this
sector is a good candidate, and clearly deserves further investigation. Moreover, in this
context the study of spectral flow (which will necessarily break the supersymmetry on the lattice), which
may serve as a connection between the Ramond and Neveu-Schwarz sectors, appears natural \cite{lerche:89}
(see \cite{huijse:08_2} for an application of spectral flow to the supersymmetric fermion models).
A further challenge is to construct in a systematic way the elliptic extensions of the
trigonometric supercharges presented in this work, so that the lattice version of the $\mathcal N=(2,2)$
supersymmetry algebra still holds. This was done in section \ref{sec:spinone} for the case of spin one. It seems natural to expect that these generate spin chains related to the
spin$-\ell/2$ versions of the eight-vertex model at the special values $\eta=
\pi/(\ell+2)$ of the crossing parameter, obtained through the fusion procedure. Indeed,
the existence of a lattice supersymmetry at these points can be proved from the fusion
equations, but it remains to work out the explicit supercharges.
There seem to be at least two possible routes to achieve this: either one uses the wave functions
obtained from the algebraic Bethe ansatz, or one tries to solve the condition
\eqref{eqn:nilpot} for a local supercharge $\q$ which changes the magnetisation by
$-(\ell+2)/2 \mod 2$. The off-critical models are intimately related to the
corresponding deformation in the $\mathcal M_\ell$ models. It was shown in
\cite{fendley:10} that for the spin$-1/2$ the off-criticality on the spin chain side
corresponds to stagger the fermion models with period $3$ on systems of length $L=3n$. It
is natural to extend this for higher spin to a staggering with period $\ell+2$ on
$L=(\ell+2)n$ sites.

For the twisted Fateev-Zamolodchikov spin chain, the origin of the combinatorial features
should be elucidated. A first step will be a proof of the sum rules, which seem
to be a
rather common features of both spin chains and fermion models with lattice supersymmetry
\cite{beccaria:12}. Moreover, our findings hint at a simple eigenvalue for the transfer
matrix of the 19-vertex model and its elliptic generalisation (a 41-vertex model
\cite{fateev:81,konno:06}) for any value of the crossing parameter and the elliptic nome. The
exact diagonalisation of the transfer matrix for small system sizes confirms this
conjecture. This allows to apply the machinery of the quantum Knizhnik-Zamolodchikov
equation without any restriction to a special point in parameter space. This problem will
be addressed in a future publication \cite{hagpon:2bp}. Eventually, as already
mentioned in the introduction and further supported by the examples given in this work,
there appears to be a relation between combinatorics and supersymmetry. This is not the
first time that this phenomenon is observed as shows the example of the Veneziano-Wosiek
model \cite{onofri:07}. It seems that this is not only a superficial coincidence and remains to be understood.

\subsection*{Acknowledgements}
This work was supported by the ERC AG CONFRA and by the Swiss NSF. The author would like
to thank Paul Fendley for many stimulating discussions, Luigi Cantini, Yacine Ikhlef, Rinat
Kashaev, Anita Ponsaing, Robert Weston and Paul Zinn-Justin for discussions, and Aglaia Myropolska for explaining
to him how to colour the cycle. Furthermore, the author thanks Liza Huijse for her comments on the manuscript.
Finally, he thanks Gaetan Borot and
Don Zagier for pointing out the solution to the Witten index computation for the trigonometric models.

\appendix
\section{Trigonometric supercharges}
\label{app:finda}
In this appendix we explain a constructive way to arrive at \eqref{eqn:constants}. Our
strategy is the following. We specialise first the matrix elements $h_{(r,s),(m,n)}$ to
the diagonal case $(r,s)=(m,n)$, and construct a particular solution which
is invariant under spin reversal. Second, in the main text, we show that this solution satisfies
indeed the spin-reversal symmetry even for off-diagonal matrix elements. We start thus by considering the
matrix element $b_{m,n}=h_{(m,n),(m,n)}$, given by
\begin{equation*}
  b_{m,n} = a_{m+n+1,m}^2 + \frac{1}{2}\left(\sum_{k=0}^{m-1}a_{m,k}^2 +\sum_{k=0}^{n-1}a_{n,k}^2\right).
\end{equation*}
First, set $m=n=0$. We impose $b_{00}= b_{\ell\ell}$, and thus have
\begin{equation*}
  a_{1,0}^2 = \sum_{k=0}^{\ell-1}a_{\ell,k}^2.
\end{equation*}
Next, consider the equation $b_{0,\ell-n} = b_{\ell,n}$. Using the previous result we find that 
\begin{equation*}
  a_{\ell-n+1,0}^2 + \frac{1}{2}\sum_{k=0}^{\ell-n-1}a_{\ell-n,k}^2 = \frac{1}{2}a_{1,0}^2 +\frac{1}{2}\sum_{k=0}^{n-1}a_{n,k}^2.
\end{equation*}
Replacing now $n\to \ell-n$ and subtracting both equations, we find a useful symmetry property:
\begin{equation}
  a_{m,0}^2+a_{\ell+2-m,0}^2 = a_{1,0}^2, \quad m = 2,\dots \ell.
  \label{eqn:sym}
\end{equation}
Now impose $b_{m,1}= b_{\ell-m, \ell-1}$. For $m=0,\dots,\ell-2$ it follows that
\begin{equation*}
  a_{m+2,1}^2=\frac{1}{2}\left(\sum_{k=0}^{\ell-m-1}a_{\ell-m,k}^2+\sum_{k=0}^{\ell-2}a_{\ell-1,k}^2-\sum_{k=0}^{m-1}a_{m,k}^2\right)-\frac{1}{2}a_{1,0}^2.
\end{equation*}
In particular, the case $m=0$ leads to $a_{2,1}^2= a_{2,0}^2 =
\sum_{k=0}^{\ell-2}a_{\ell-1,k}^2/2$, and thus to
\begin{equation*}
  a_{m+2,1}^2=\frac{1}{2}\left(\sum_{k=0}^{\ell-m-1}a_{\ell-m,k}^2-\sum_{k=0}^{m-1}a_{m,k}^2\right)+a_{2,0}^2-\frac{1}{2}a_{1,0}^2.
\end{equation*}
Again, we notice that the expression in brackets is antisymmetric under $m \to
\ell-m$. This allows to derive the equation
\begin{equation*}
   a_{m+2,1}^2+  a_{\ell+2-m,1}^2 = 2a_{2,0}^2-a_{1,0}^2.
\end{equation*}
The left-hand side can be reduced with the help of the recursion relation \eqref{eqn:recursiona}, and thus everything can be rewritten in terms of $a_{m,0}^2$. After a little algebra, we find the non-linear recursion relation
\begin{equation*}
  a_{m+2,0}^2a_{m+1,0}^2 + (a_{1,0}^2-a_{m+1,0}^2)(a_{1,0}^2-a_{m,0}^2) = a_{1,0}^2(2a_{2,0}^2-a_{1,0}^2).
\end{equation*}
We will now show that it admits a particular solution related to a linear recursion relation. To this end, set $a_{m,0}^2 =  z_{m}/z_{m-1}$ and normalise $z_0=1$. We find that
\begin{equation*}
  z_{m-1}(z_{m+2}-z_1z_{m+1}+z_m) + z_{m}(z_{m+1}-z_1z_m + (2(z_1^2-z_2)-1)z_{m-1})=0.
\end{equation*}
This relation holds certainly if we choose for $z_m$ as solution of the equation
\begin{equation*}
  z_{m+2}-z_1z_{m+1}+z_m = 0, \quad \text{with} \quad z_0 = 1,\, z_0=0.
\end{equation*}
This is the recursion relation for the Chebyshev polynomials of the second kind. We prefer to use $q$-integers and thus write
\begin{equation*}
  z_m = [m+1], \quad [n] = \frac{q^n-q^{-n}}{q-q^{-1}},\quad \text{with}\quad  q=e^{i\eta}.
\end{equation*}
The phase $\eta$ is not arbitrary as we have to respect the symmetry relation
\eqref{eqn:sym}. After some algebra, one finds that this relation implies $\sin
(\ell+2)\eta=0$. The unique choice for $\eta$ to keep all the $a_{m,0}^2$
positive is $\eta = \pi/(\ell+2)$. Hence, $q$ takes the root-of-unity value
\begin{equation*}
  q = e^{\i \pi/(\ell+2)}.
\end{equation*}
Thus, we have $[m + \ell+2]=-[m]=[-m]$.
Hence, we find
\begin{equation*}
  a_{m,0}^2= \frac{[m+1]}{[m]}.
\end{equation*}
The general coefficients $a_{m,k}$ are obtained from the recursion relation
\eqref{eqn:recursiona}. This completes the derivation of the constants
\eqref{eqn:constants}

\section{Special polynomials}
\label{app:polynomials}
Bazhanov and Mangazeev introduced in \cite{bazhanov:06,mangazeev:10} a series of
polynomials $s_n(z), \,n \in \mathbb Z$, which appear as components in the ground-state
vectors of the spin$-1/2$ XYZ Hamiltonian \eqref{eqn:xyzham}, and sum rules (see also the
related work of Razumov and Stroganov \cite{razumov:10}). A sum rule for the square norm
was recently proved by Zinn-Justin \cite{zinnjustin:12}.

These polynomials appear to be not only off-critical generalisations of the combinatorial
numbers found in the ground states of various variants of the spin$-1/2$ XXZ chain at
$\Delta=-1/2$, but display also an interesting relation to classical integrability,
namely to certain tau-function hierarchies associated to the Painlev\'e VI equation. They
are solutions to the non-linear difference-differential equation
\begin{align*}
  &2z(z-1)(9z-1)^2(\ln s_n(z))'' + 2(3z-1)^2(9z-1)(\ln s_n(z))'\nonumber \\
&+8(2n+1)^2 \frac{s_{n+1}(z)s_{n-1}(z)}{s_n(z)^2}
  -(4(3n+1)(3n+2)+(9z-1)n(5n+3))=0
\end{align*}
with $s_0(z) = s_1(z) \equiv 1$. This equation can be understood as a result of special
B\"acklund transformations for Painlev\'e VI \cite{okamoto:87,mangazeev:10_2}. It is far
from being evident that its solutions are polynomials for the given initial conditions.
Various properties of the solutions to this equation were conjectured in
\cite{mangazeev:10}. For our purposes, the factorisation properties are relevant. In
fact, the authors claim that for any $k\in \mathbb Z$ one has
\begin{equation}
		s_{2k+1}(w^2) = s_{2k+1}(0)p_k(w)p_k(-w),\quad p_k(0)=1,
		\label{eqn:defp}
\end{equation}
where the $p_k(w)$ are polynomials of degree $\deg p_k(w)=k(k+1)$ with positive integer
coefficients. We have the transformation symmetry
\begin{equation*}
	p_k(w)=\left(\frac{1+3w}{2}\right)^{k(k+1)}p_k\left(\frac{1-w}{1+3w}\right).
\end{equation*}

The $p_k(w)$'s are the polynomials appearing in the zero-energy states of the twisted
elliptic Fateev-Zamolodchikov chain at $x=\pm 1/2,\, y= \pm \zeta/2$.

\section{Counting momentum states}
\label{app:groups}

In this appendix, we use the classical theory of necklace enumeration in order to count
certain momentum states for a spin$-\ell/2$ chain with $N$ sites \cite{harary:73}.

We consider thus a chain/necklace with $N$ beads or equivalently a periodic
lattice with $N$ sites. Each site of this graph, labeled by some integer
$j=1,\dots, N$, can be coloured with a colour $m_j =0,1,\dots, \ell$. Thus, a
colouring/configuration is given by an $N$-tuple $\mu = (m_1, \dots,
m_{N-1},m_N)$. We denote by $X$ the set of all possible configurations. There is
a natural action of the cyclic group, generated by the cyclic shift $T$, on
$X$ through $T\cdot (m_1, \dots, m_{N-1},m_N) = (m_N,m_1, \dots, m_{N-1})$. We
will study the fixed points of this group action. This is motivated by the fact
that the elements of $X$ are labels of the Hilbert space of the spin chain.

\paragraph{\small Momentum states.}
To each $\mu =(m_1, \dots, m_{N-1},m_N) \in X$ we
associate a basis vector $|\mu\rangle = |m_1, \dots, m_{N-1},m_N\rangle$ in
$V^{\otimes N}$ where $V\simeq \mathbb C^{\ell+1}$ as in section
\ref{sec:latsusy}. The translation operator on $V^{\otimes N}$ acts according to
$T|\mu\rangle = |T\cdot\mu\rangle$ (we use the same notation for it acting on
the vector space and the set $X$). Out of any basis state $|\mu\rangle$ we build
a momentum state
\begin{equation*}
  |\psi_\mu\rangle = \sum_{j=0}^{N-1}e^{-\i j k} T^j|\mu\rangle,\quad T|\psi_\mu\rangle = e^{\i k}|\psi_\mu\rangle,
\end{equation*}
where $k= 2\pi m /N, m=0,1,\dots, N-1$ is the momentum. $|\psi_\mu\rangle$ may
be zero what implies that the configuration $\mu$ is incompatible with the given
momentum $k$. In fact, for each configuration $\mu \in X$ there is a
\textit{minimal} positive integer $r$ such that $T^{r}\cdot\mu = \mu$. We call
$r$ the symmetry factor of $\mu$. For the momentum state to be non-zero we need
\begin{equation*}
  r \in p\mathbb N \quad \text{with} \quad p = \frac{N}{\gcd(N,m)}.
\end{equation*}
Motivated by the main text, our aim is to count all momentum states with $k=0$
for $N$ odd, and $k=\pi$ for $N$ even, with given total particle number
\begin{equation*}
  M = \sum_{j=1}^N m_j.
\end{equation*}

\paragraph{\small Chains of odd length.} We start with $N$ odd. A momentum state may be
generated by  two different configurations $\mu, \mu'$ if they can be mapped
onto each other through a suitable translation. This induces an equivalence
relation on the set of configurations, and thus our task consists of counting
the number $\nu_{MN}$ of equivalence classes (or equivalently the number of
orbits in the set of configurations generated by the action of $T$).

We denote by $X^m_M \subset X$ the set of all configurations with given $M$
which are stable under the action of $T^m$. According to Burnside's lemma, we
have
\begin{equation}
  \nu_{MN} = \frac{1}{N}\sum_{m=0}^{N-1}|X^m_M|.
  \label{eqn:burnside}
\end{equation}
Therefore, we need to evaluate $|X^m_M|$. The case $m=0$ is the simplest: we
just need to count all possible configurations for given $M$. Let us thus
consider $m=1$: in this case all sites need to have equal colour because a
single translation (cyclic shift) has only one cycle. Hence, $X^1_M$ contains a
single configuration of colour $j$ if and only if $M=jN$ for some integer
$j=0,1,\dots,\ell$, in other words if $N|M$ as long as $M\leq \ell N$. We
abbreviate $a_N(M)=|X^1_M|$. For $m>1$ there various distinct cycles. It is a
classical result that their number is given by $\gcd(N,m)$, and therefore they
have length $r_m = N/\gcd(N,m)$. For each cycle, we have however the same
problem as for $m=1$. Hence, we find the general result
\begin{equation}
  |X_M^{m}| = \sum_{M_1,\dots,M_{\gcd(N,m)}=0}^\infty
\left(\prod_{j=1}^{\gcd(N,m)}a_{r_m}(M_j)\right)\delta_{M,\sum_{j=1}^{\gcd(N,m)}
{M_j}}.
  \label{eqn:fp}
\end{equation}
We notice a typical convolution structure. It is therefore useful
to introduce the generating function $\mathcal A_N(z)$ for the numbers $a_N(M)$
where $z$ is the conjugate variable of $M$. We obtain
\begin{equation*}
  \mathcal A_N(z) = \sum_{M=0}^\infty a_{N}(M)z^M = \frac{1-z^{(\ell+1) N}}{1-z^N}.
\end{equation*}
We use this in \eqref{eqn:fp} and find the generating series
\begin{equation*}
  \sum_{M=0}^\infty |X_M^m|z^M = \mathcal A_{r_m}(z)^{\gcd(N,m)}=\left(\frac{1-z^{(\ell+1)N/\gcd(N,m)}}{1-z^{N/\gcd(N,m)}}\right)^{\gcd(N,m)}.
\end{equation*}
Performing the sum with respect to $m$ leads to the generating function $f_N(z)$ given in the main text for odd $N$.

\paragraph{\small Chains of even length.} Next, we treat the case of even $N$ and
momentum $k=\pi$. Thus, we want to count the number of all representatives with
even symmetry factor $r$. It seems natural to modify \eqref{eqn:burnside} in
such a way that only fixed points of $T^m$ with even $m$ are retained. Thus, we
attempt to discard all contributions with odd $m$ and write $1/N
\sum_{m=0}^{N/2-1}|X_{M}^{2m}|$. Yet, in this way we count some unwanted
configurations. Indeed for $m>0$, if we decompose $2m = 2^{k_m}b_m$ with $b_m$
an odd integer, and $k_m \geq 1$ some integer, then the fixed points of $T^{2m}$
contain the fixed points of $T^{b_m}$. These need to be subtracted. For $m=0$ we
need to subtract the fixed points of $T^{b_{N/2}}$. Thus we arrive at
\begin{equation*}
  \nu_{MN} = \frac{1}{N}\left(\sum_{m=0}^{N/2-1}|X_M^{2m}|-\sum_{m=1}^{N/2}|X_{M}^{b_m}|\right) \quad \text{for even } N.
\end{equation*}
The sum can be rearranged in a convenient way. To this end observe in
\eqref{eqn:fp} that $|X_M^m|$ depends on $m$ only through $\gcd(N,m)$. In order
to rearrange the sum we need the following lemma:
\begin{lemma}
  Let $N$ be even and consider $c_m= \gcd(N,b_m)$ for $m=1,2,\dots, N/2$. We
  have $c_{m+N/2}=c_m$. Furthermore, let $p = b_{N/2}$ then it follows that
  $c_{m+(p-1)/2}=\gcd(N,2m-1)$.
\end{lemma}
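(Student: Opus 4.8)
The plan is to reduce the whole lemma to one clean identity. Write $N = 2^s t$ with $t$ odd and $s \ge 1$, and extend the definition of $b_m$ (the odd part of $2m$) to all positive integers $m$. The first step I would carry out is to show that $c_m = \gcd(t, m)$ for every $m$. This holds because $b_m$, being the odd part of $2m$, coincides with the odd part of $m$; since $b_m$ is odd, $\gcd(N, b_m) = \gcd(2^s t, b_m) = \gcd(t, b_m)$, and since $t$ is odd it does not see the power of two that was stripped off $m$, so $\gcd(t, b_m) = \gcd(t, 2m) = \gcd(t, m)$. Once this is in hand, both asserted equalities become short congruence computations modulo the odd number $t$.

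For the first equality, I would simply observe that $N/2 = 2^{s-1} t$ is a multiple of $t$, so $m + N/2 \equiv m \pmod t$ and hence $c_{m+N/2} = \gcd(t, m + N/2) = \gcd(t, m) = c_m$. For the second, I would first identify $p$: since $2 \cdot (N/2) = N = 2^s t$ has odd part $t$, we get $p = b_{N/2} = t$, so $(p-1)/2 = (t-1)/2 \in \mathbb{Z}$. Then $c_{m + (t-1)/2} = \gcd\bigl(t,\, m + (t-1)/2\bigr)$, and because $t$ is odd I may double the second argument without changing the gcd, obtaining $\gcd(t, 2m + t - 1) = \gcd(t, 2m-1)$. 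Finally $2m-1$ is odd, so $\gcd(t, 2m-1) = \gcd(2^s t, 2m-1) = \gcd(N, 2m-1)$, as claimed.

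The only step that needs a little care is the initial reduction — verifying that $b_m$ has the same odd part as $m$ and that replacing $N$ by its odd part $t$ loses no information — but this is elementary, and I do not expect a genuine obstacle. It may be worth adding, for context, why the lemma is the right tool: since (as used in the main text) $|X_M^j|$ depends on $j$ only through $\gcd(N, j)$, the two equalities together say that the multiset $\{c_m\}_{m=1}^{N/2}$ equals, up to the period-$N/2$ shift supplied by the first equality, the multiset $\{\gcd(N, 2m-1)\}_{m=1}^{N/2}$ of gcd-data of the odd residues $1, 3, \dots, N-1$. This is precisely what lets one rewrite $\sum_{m=0}^{N/2-1} |X_M^{2m}| - \sum_{m=1}^{N/2} |X_M^{b_m}|$ as $\sum_{m=0}^{N-1} (-1)^m |X_M^m|$, producing the generating function $f_N(z)$ displayed in the main text for even $N$.
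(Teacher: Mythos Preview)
Your proof is correct and follows essentially the same route as the paper: both arguments hinge on the fact that since $b_m$ is odd, $\gcd(N,b_m)$ depends only on the odd part of $N$ (your $t$, the paper's $p=b_{N/2}=b_N$), and that for odd $t$ one has $\gcd(t,b_m)=\gcd(t,m)$. You package this as the preliminary identity $c_m=\gcd(t,m)$ and then read off both statements as congruences mod $t$, whereas the paper chains the same gcd manipulations inline; the mathematical content is identical, and your presentation is arguably a bit cleaner.
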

\begin{proof} The proof is entirely based on two simple identities. Let $m$ and
$n$ be positive integers. \textit{(i)} if $m>n$ then we may write $\gcd(m,n) =
\gcd(m-n,n)$; \textit{(ii)} if $n$ is odd then $\gcd(m,n) = \gcd(2m,n)$. From
the second identity it follows in particular that $\gcd(b_m,n) = \gcd(m,n)$ for
odd $n$. Furthermore we find $\gcd (b_m,n) =  \gcd (m,b_n) =
\gcd(b_m,b_n)$.

For the first part, use $b_{N}=b_{N/2}$ and write
\begin{align*}
  c_{m+N/2} &= \gcd(N,b_{m+N/2}) = \gcd(b_{N/2},m+N/2)= \gcd(b_{N/2},m) = \gcd(b_N,m)\\
  &= \gcd(N,b_m) = c_m.
\end{align*}
For the second part, use $b_N = b_{N/2}=p$ and write
\begin{align*}
  c_{m+(p-1)/2}&=\gcd\left(b_{m+\frac{p-1}{2}}, N\right) = \gcd\left(m+\frac{p-1}{2},b_{N}\right) = \gcd\left(2m+p-1,p\right)\\
  &= \gcd(2m-1,p) = \gcd(2m-1,b_N) = \gcd(2m-1,N).
\end{align*}
This proves the claim.
\qed
\end{proof}
Combining the implicit dependence of $|X_M^{b_m}|$ on $c_m =
\gcd(N,b_m)$ with the periodicity and shift properties given by this lemma, we
obtain
\begin{equation*}
  \sum_{m=1}^{N/2}|X_{M}^{b_m}| =  \sum_{m=1}^{N/2}|X_{M}^{2m-1}|.
\end{equation*}
Putting everything together we thus find
\begin{equation*}
  \nu_{MN} = \frac{1}{N}\left(\sum_{m=0}^{N/2-1}|X_M^{2m}|-\sum_{m=1}^{N/2}|X_{M}^{b_m}|\right) =  \frac{1}{N}\sum_{m=0}^{N-1}(-1)^m |X_M^{m}|\quad \text{for even } N.
\end{equation*}
Hence, we proved the extra factor $(-1)^m$ for even $N$. Computing the generating
function we obtain the result given in the main text.

\end{document}